\def\LongVersion{}
\def\LongVersionEnd{}
\long\def\ShortVersion#1\ShortVersionEnd{}
\def\ShortVersion{}
\def\ShortVersionEnd{}
\long\def\LongVersion#1\LongVersionEnd{}
\newcommand{\Comment}[1]{\ignorespaces}
\renewenvironment{proof}[0]
{\comment\ignorespaces}
{\endcomment\ignorespacesafterend}
\newcommand{\ThmOnlyInLong}{\addtocounter{theorem}{1}}
\newcommand{\ThmOnlyInLong}{}
\newtheorem{theorem}{Theorem}[section]
\newtheorem{lemma}[theorem]{Lemma}
\newtheorem{proposition}[theorem]{Proposition}
\newtheorem{claim}[theorem]{Claim}
\newtheorem{definition}[theorem]{Definition}
\newtheorem{example}[theorem]{Example}
\newtheorem*{definition*}{Definition}
\theoremstyle{plain}
\newcommand{\eps}{\epsilon}
\newcommand{\IC}{\mathsf{IC}}
\newcommand{\CC}{\mathsf{CC}}
\newcommand{\PIC}{\mathsf{PIC}}
\newcommand{\MIC}{\mathsf{MIC}}
\newcommand{\SIC}{\mathsf{SMIC}}
\newcommand{\AND}{\mathsf{AND}}
\newcommand{\XOR}{\mathsf{Par}}
\newcommand{\Disj}{\mathsf{Disj}}
\newcommand{\Ber}{\text{\textbf{Ber}}}
\newcommand{\calA}{\mathcal{A}}
\newcommand{\calB}{\mathcal{B}}
\newcommand{\calX}{\mathcal{X}}
\newcommand{\calY}{\mathcal{Y}}
\newcommand{\calR}{\mathcal{R}}
\newcommand{\calT}{\mathcal{T}}
\newcommand{\calI}{\mathcal{I}}
\newcommand{\calJ}{\mathcal{J}}
\newcommand{\calH}{\mathcal{H}}
\newcommand{\paragraphA}[1]{\vspace{0.125cm}\noindent {\bf #1}}
\DeclareMathOperator*{\Exp}{\mathbb{E}}
\newcommand{\Ni}[2]{\ensuremath{[\![#1,#2]\!]}}
\newcommand{\M}[3]{
    \IfEqCase{#3}{
        {r}{T_{#1}^{\text{\makebox[0mm]{\,\,$\overleftarrow{\color{white}l}$}}#2}}
        {s}{T_{#1}^{\text{\makebox[0mm]{\,\,$\overrightarrow{\color{white}l}$}}#2}}
    }[\PackageError{M}{Undefined option to M: #2}{}]
}
\newcommand{\T}[3]{
    \IfEqCase{#3}{
        {r}{T_{#1}^{<\text{\makebox[0mm]{\,\,$\overleftarrow{\color{white}l}$}}#2}}
        {s}{T_{#1}^{<\text{\makebox[0mm]{\,\,$\overrightarrow{\color{white}l}$}}#2}}
    }[\PackageError{T}{Undefined option to T: #2}{}]
}
\newcommand{\adifuture}[1]{}
\title{A New Approach to Multi-Party  Peer-to-Peer Communication Complexity\footnote{A preliminary version of this paper 
appeared in the proceedings of ITCS 2019~\cite{RosenU19}.}
}
\author{Adi Ros\'{e}n\thanks{CNRS and Universit\'{e} de Paris, 75205 Paris, France,
email: {\tt adiro@irif.fr}. Research  supported in part by 
 ANR project RDAM.}
 \and
 Florent Urrutia\thanks{Universit\'{e} de Paris, 75205 Paris, France, email: {\tt urrutia@irif.fr}.
 Research supported in part by ERC QCC and by ANR project RDAM.}
 }
 \date{}
\begin{document}

\begin{titlepage}

\maketitle

\thispagestyle{empty}

\maketitle

\thispagestyle{empty}

\begin{abstract}

We introduce new  models and new information theoretic measures for the study of  communication complexity 
in the natural  peer-to-peer, multi-party, number-in-hand setting. We prove a number of properties of
our new models and measures, and then, in order to exemplify their effectiveness,
we use them to prove two lower bounds. The more elaborate one is a tight lower bound 
of $\Omega(kn)$ on the multi-party peer-to-peer randomized communication complexity of the $k$-player, $n$-bit 
function Disjointness, $\Disj_k^n$. The other one is a tight lower bound 
of $\Omega(kn)$ on the multi-party peer-to-peer  randomized communication complexity of the $k$-player, $n$-bit bitwise parity
function, $\XOR_k^n$.  Both lower bounds hold when ${n=\Omega(k)}$. The lower bound for $\Disj_k^n$
 improves over the lower bound that can 
be inferred from the result 
of Braverman et al.~(FOCS 2013),  which was proved in the coordinator model and can yield a lower bound of 
$\Omega(kn/\log k)$ in the peer-to-peer model. 

 To the best of our knowledge, our lower bounds  are the first tight 
(non-trivial)
lower bounds on communication complexity in the natural {\em peer-to-peer} multi-party setting.

In addition to the above results for communication complexity, we also prove, using the same tools,
 an $\Omega(n)$ lower bound on the number of random bits necessary for the (information theoretic) private 
 computation of the function $\Disj_k^n$.

\end{abstract}

\end{titlepage}

\section{Introduction}

Communication complexity, first introduced by Yao~\cite{Yao82}, has become a major topic of research in 
Theoretical Computer Science, both for its own sake, and as a tool which has yielded important results 
(mostly lower bounds)
in  various theoretical computer science fields  such as circuit complexity, streaming algorithms, or 
data structures (e.g.,~\cite{KN, MNSW, GG, SHKKNPPW, FHW}). 
Communication complexity is a measure for the amount of  communication needed in order to solve a problem whose
input is  distributed among several players. 
The two-party case, where two players, usually called Alice and Bob, cooperate in order to compute a function of their 
respective inputs, 
has been widely studied with many important results; yet major questions in this area are still open today 
(e.g., the log-rank 
conjecture, see~\cite{KN}). The multi-party case, where $k\geq 3$ players cooperate in order to compute a function
 of their inputs, is much less understood.
 
A number of variants have been proposed in the literature to extend the two-party setting into the multi-party one.
In this paper we consider the  more natural  {\em number-in-hand} 
(NIH) setting, where  each 
player has its own input, as opposed to the so-called {\em number-on-forehead} (NOF) setting, where each player
knows all pieces of the input except one, its own.
Moreover, also the communication structure between the players   in the 
 multi-party setting was considered in the literature under a number of variants.  
 For example, in the {\em blackboard} (or {\em broadcast}) model the communication  between the players 
is achieved by each player writing, in turn, a message on the board, to be read by all other players. 
In the {\em coordinator} model, introduced in \cite{DF}, there is an additional entity, the coordinator, and 
all players communicate 
back and forth only with the coordinator.
The most natural setting is, however,  the {\em peer-to-peer message-passing} model, where each 
pair of players is connected by a 
communication link, and  each player can send a separate message to any other player. This latter 
setting has been studied,
in the context of communication complexity, even less than the other multi-party settings, probably due to the difficulty
 in tracking the distributed communication patterns that occur during a run of a protocol in that setting.
  This setting is, 
 however, not only the most natural one, and the one that occurs the most in real systems, but is the 
 setting studied widely 
 in the distributed algorithms and distributed computation communities, for complexity measures which are usually
 other than communication complexity.
 
In the present paper we attempt to fill this gap in the study of peer-to-peer communication complexity,
 and, further, to create a more solid bridge between the research field of communication complexity 
and the research field of distributed computation. We propose a computation model, together with an
information theoretic  complexity measure, 
for the analysis
of the communication complexity of protocols in the asynchronous multi-party peer-to-peer (number-in-hand) setting. 
We 
argue that our model is, on the 
one hand, only a slight restriction over the  asynchronous model usually used in the distributed computation literature, and, on the 
other hand, 
stronger than the models that have been previously suggested in order to study communication complexity in 
the peer-to-peer setting common in the
distributed computation literature (e.g.,~\cite{DF,WZ2}).
Furthermore, our model lends itself to the analysis of 
communication complexity,
most notably using information theoretic tools.

Indeed, after defining  our model and our information theoretic measure, that we call {
\em Multi-party Information Cost} ($\MIC$), 
we prove a number of properties of that measure, and then prove  a number
of fundamental properties of protocols in our model. 
We then exemplify the effectiveness of our model and information theoretic measure
 by proving two tight lower bounds. The more elaborate one is 
 a tight lower bound of $\Omega(kn)$, when ${n=\Omega(k)}$, on the peer-to-peer 
randomized communication complexity
of the function {\tt set-disjointness} ($\Disj_k^n$).
This function is a basic, important 
function, which has been the subject of a large number of studies in communication complexity, 
and is  often seen as a test for our ability to give lower bounds in a given model (cf.~\cite{CP}). We note that the 
communication  complexity  of Disjointness
in the two-party case is well understood \cite{KS,Raz,BJKS,Bra,BGPW}. From 
a quantitative point of view, our result for peer-to-peer multi-party Disjointness improves by a $\log k$ 
factor the lower bound that could be
deduced for the peer-to-peer model from the lower bound on the communication complexity of Disjointness in the 
coordinator model~\cite{BEOPV}. The second lower bound that we prove is a tight lower bound of 
$\Omega(kn)$, when ${n=\Omega(k)}$, on the peer-to-peer 
randomized communication complexity of the bitwise parity function $\XOR_k^n$. 
Both our lower bounds are obtained by giving a lower bound on the $\MIC$ of the function at hand, which yields the
 lower bound on the communication complexity of that function. We believe that our lower bounds are the
first tight 
(non-trivial)
lower bound on communication complexity in a peer-to-peer multi-party setting.\footnote{Lower bounds in a seemingly peer-to-peer 
setting were given in~\cite{WZ2}. However, in the model of that paper, the communication pattern is determined 
by an external view 
of the transcript, which makes the model equivalent 
 to the coordinator model.}

It is important to note that, to the best of our knowledge, there is no known method to obtain tight lower bounds 
on multi-party communication complexity in a peer-to-peer setting via lower bounds in other known multi-party
settings.
Lower bounds obtained in the coordinator model can be transferred to the  peer-to-peer model at the cost of 
a $\log k$ factor, where $k$ is the number of players, because any peer-to-peer protocol can be simulated in the 
coordinator model by having the players attach to every message the identity of the destination of that 
message~\cite{PVZ,EOPV}. 
The loss of this factor in the lower bounds is unavoidable when the communication protocols can exploit  
a flexible communication 
pattern, since there are examples of functions where this factor  in the communication complexity is necessary, 
while others, e.g., the parity function of single-bit inputs,  have the same communication 
complexity in the coordinator and 
peer-to-peer settings 
 (see a more detailed
discussion on this point in Section~\ref{sec:comparison_models}).
Therefore, one cannot prove tight  
lower bounds in the peer-to-peer setting by proving corresponding results in the
 coordinator model.
Note that flexible communication 
configurations arise naturally for mobile communicating devices, for example, when
these devices 
exchange information with the nearby devices.
Constructions based on the pointer jumping problem  also seem to be harder in the coordinator model, 
as solving the problem usually requires exchanging information in a specific order determined by the inputs of the 
players.  
It  is thus important to develop lower bound techniques which apply directly 
 in the peer-to-peer model, as we 
do in the present paper. Information theoretic tools seem, as we show, 
 most suitable for this task. 

\paragraphA{Information theoretic complexity measures.}
As indicated above, our work makes  use of information theoretic tools.
Based on information theory, developed by Shannon~\cite{Sha}, {\em Information Complexity} (IC), originally defined 
in \cite{Bar-YehudaCKO93,CSWY}, is a powerful tool for the study of two-party communication protocols. Information 
complexity is a measure of how much {\em information}, about each other's input, the players must learn 
during the course of the protocol, if that protocol must compute the function correctly. Since IC can be 
shown to provide a lower bound on the communication complexity,  this measure
has proven to be a strong and useful tool for  obtaining lower bounds on two-party communication complexity
in a sequence of papers (e.g.,~\cite{BJKS,BBCR,BR,Bra}).
However, information complexity cannot be extended in a  straightforward manner to the multi-party setting. 
This is because with three players or more, any function can be computed privately (cf.~\cite{BGW,CCD}), i.e., in a 
way such 
that the players learn nothing but the value of the function to compute. This implies that the information 
complexity of any function is too low to provide a meaningful lower bound on the communication complexity in the 
natural peer-to-peer multi-party setting.  Therefore, before the present paper, 
information complexity and its variants have been used to obtain lower bounds
on multi-party communication complexity only in settings which do not allow for private protocols (and most 
notably not in the natural peer-to-peer setting), with the single exception of \cite{KRU}.
For example, a number of lower bounds have been obtained
 via information complexity 
for a promise version of  set-disjointness in the broadcast model~\cite{BJKS, CKS, Gro}  (also cf.~\cite{Jay}), and 
external information complexity was used in \cite{BO} for a lower bound on the general disjointness function, also 
in the broadcast model. In the coordinator model, lower bounds on the communication complexity of
set-disjointness were given
 via variants of information complexity~\cite{BEOPV}. The latter result was extended in \cite{CM} to the 
 function \textit{Tribes}. A notion of external information cost in the coordinator model was introduced in~\cite{HRVZ} to study 
 maximum matching in a distributed setting.
We  note that the study of communication complexity in number-in-hand multi-party settings via
 techniques other than those based on information theory is limited to very few papers. One such example is the 
 technique of \textit{symmetrization} that was introduced for the coordinator model in \cite{PVZ}, and 
was shown to be useful to  study functions such as the bitwise $\AND$. 
That technique was further developed along with other
reduction techniques in \cite{WZ,WZ2,WZ3}. Another example is the notion of strong fooling sets, 
introduced in~\cite{CK} to study deterministic communication complexity of discreet protocols, also defined in~\cite{CK}.

\paragraphA{Private computation.}
It is well known that in the multi-party number-in-hand peer-to-peer setting, unlike in the 
 two-party case, {\em any} function can be privately computed \cite{BGW,CCD}.
The model that we define in the present paper does allow for (information theoretic) private 
computation of any function~\cite{BGW,CCD,AsharovL17}.
 The minimum amount of private randomness needed 
 in order to compute privately a given function is often  referred to  in this context as the \textit{randomness complexity} of 
 that function. Randomness complexity  (in private computation) is of interest because true randomness is  considered  a 
 costly resource, and since  randomness complexity in private computation 
 has been shown to be related to other complexity 
 measures, such as the circuit size of the function or its sensitivity. For example,
 it has been shown~\cite{KOR} that a boolean function $f$ has a linear size circuit if and only if $f$ has constant 
 randomness complexity. A small number of
  works \cite{BSPV,KM,GR,KRU} prove lower bounds on the randomness complexity of the parity function. 
  The parity and other modulo-sum functions are, to the best of our knowledge, 
 the only functions for which randomness complexity lower bounds are known. Using the information theoretic 
 results that we obtain in 
 the present paper for the set-disjointness function, we are able to give a lower bound of $\Omega(n)$
 on the randomness complexity 
 of $\Disj_k^n$. The significance of this result lies in that it is the first such lower bound that grows with the size of the
input (which is $kn$), while the  output  remains a single bit, contrary to the sum 
function (see~\cite{BSPV}) or the bitwise parity function (see~\cite{KRU}).

\subsection{Our techniques and contributions}\label{subsec:ourapp}

Our contribution in the present paper is  twofold.

 First, on the conceptual, modeling and definitions side we 
lay the foundations for proving  lower bounds on (randomized) communication complexity  in the natural peer-to-peer multi-party 
setting. Specifically,  we
propose a model  that, on the one hand,  is a very natural peer-to-peer model,
 and very close to the model used in the distributed 
computation literature, and, at the same time, does have properties that allow one to  analyze  protocols
in terms of their information complexity and communication complexity. 
While at first sight the elaboration
of such model  does not seem to be a difficult task, many technical, as well as fundamental, issues render this
task non-trivial. For example, one would like to define a notion of ``transcript'' that would guarantee 
both a relation between
the length of the transcript and the communication complexity, and at the same time will contain
 {\em all} the information
that the players get and use while running the protocol.  The difficulty in elaborating such model
  may be the reason for which, prior to the present paper, 
hardly any work studied  
communication complexity directly in a  {\em peer-to-peer}, multi-party setting (cf.~\cite{EOPV}), leaving 
the field with only the results that can be inferred from other models, hence suffering 
 the appropriate loss in the obtained bounds. We 
propose our model  (see 
Section~\ref{subsec model})
and prove  a number of fundamental properties that allow one to analyze protocols in that model (see 
Section~\ref{sec:multi-party_protocols}), as well as prove the accurate relationship between the
entropy of the transcript and the communication complexity of the protocol (Proposition~\ref{prop:entropy_and_cc}).

We then define our new information theoretic measure, that we call ``Multi-party Information Cost'' ($\MIC$), 
intended 
to be applied to  peer-to-peer multi-party protocols, and 
prove that it provides, for any (possibly randomized) protocol, a lower bound on the communication complexity of that protocol
(Lemma~\ref{lem:CC>=MIC}). We further 
show that  $\MIC$  has certain properties such as
a certain direct-sum property (Theorem~\ref{thm:dsmic}). We thus introduce a framework as well as tools for
proving lower bounds on  communication complexity in a peer-to-peer multi-party setting.

Second, we exemplify the effectiveness of our conceptual contributions by proving, using the new tools that 
we define,
two tight lower bounds on the randomized communication complexity of certain functions in the peer-to-peer multi-party setting. 
Both these lower bounds are  proved by giving a lower bound on the Multi-party Information Complexity  
of the function at hand. The more elaborate lower bound is 
a tight lower bound of $\Omega(nk)$
on the randomized communication complexity of the function $\Disj_k^n$
(under the condition that $n=\Omega(k)$). The function Disjointness is  a well studied 
function in communication complexity and is often seen as a test-case
of one's ability to give lower bounds in a given model (cf.~\cite{CP}).
 While the general structure of the proof of this
lower bound 
does have similarities
 to the proof of a lower bound for Disjointness in the coordinator model~\cite{BEOPV},\footnote{The 
lower bound in~\cite{BEOPV} would yield an $\Omega(\frac{1}{\log k} \cdot nk)$ lower bound in the peer-to-peer setting.} 
we do, even in the parts that bear similarities, have 
 to overcome a number of technical 
 difficulties that require new ideas and
new proofs. For example,  
the very basic {\em rectangularity} property of communication protocols is, in the multi-party (peer-to-peer) 
setting, very sensitive to the details of 
the definition of the model and the notion of a transcript. We therefore need first to give a proof of 
 this property in the peer-to-peer model 
(Lemma~\ref{le:det_rectangularity} and Lemma~\ref{lem:rectang}).
We then  use a distribution of the input which is a
modification over the distributions used in~\cite{BEOPV,CM}
(see Section~\ref{sec:AND}). 
Our proof
proceeds, as in~\cite{BEOPV}, by proving a lower bound for
the function $\AND$,  on a certain information theoretic measure that, in our proof,
 is called $\SIC$ (for Switched Multi-party Information Cost), and then, by using a direct-sum-like lemma,
  to infer a lower bound 
on $\SIC$  for Disjointness (we note that $\SIC$ is  an adaptation to the peer-to-peer model of a 
similar measure used in~\cite{BEOPV}).
However, the lack of  a ``coordinator'' in a peer-to-peer setting 
necessitates a definition of
a more elaborate reduction protocol, and a  more complicated proof for the direct-sum argument, 
inspired  by classic secret-sharing primitives. 
See 
Lemma~\ref{lem:DSSIC} for our construction and proof. 
We then show that $\SIC$ provides a lower bound on  $\MIC$, which yields our lower bound on the
 communication complexity of Disjointness. 
 
We further give a tight lower bound of $\Omega(nk)$ on the randomized communication complexity 
of the function $\XOR_k^n$ 
(bitwise parity)
in the peer-to-peer multi-party setting (under the condition that $n = \Omega(k)$). This proof proceeds by first
giving a lower bound on $\MIC$ for the parity function $\XOR_k^1$, and then using a direct-sum property of $\MIC$
to get a lower bound on $\MIC$ for $\XOR_k^n$. The latter yields the lower bound of $\Omega(nk)$ on the 
communication complexity  of  $\XOR_k^n$.

To the best of our knowledge, our lower bounds are the first tight (non-trivial) lower bound on communication complexity
in a peer-to-peer multi-party setting.

In addition to our results on communication complexity, we
 analyze the number of random bits necessary for  private computations~\cite{BGW,CCD}, making use of 
the model, tools
and techniques we develop in the present paper. It has been  shown~\cite{KRU} that the \textit{public information cost} 
(defined also in~\cite{KRU}) can be used to derive a lower bound on the randomness complexity of private 
computations. In the present  paper we  give a lower bound on  the public information cost of any synchronous protocol
computing the Disjointness function by relating it to its Switched Multi-party Information Cost, which yields the lower bound
on the randomness complexity of Disjointness.

\paragraphA{Organization.}
\LongVersion
The appendix contains a short review of information theoretic notions that we use in the 
present paper.
We start the paper, in Section~\ref{sec:Model}, by introducing our  model and by comparing it to other models.
 In Section~\ref{sec:Prelim} we define our new information theoretic measure, $\MIC$, and prove some
 of its properties, and then prove a number of fundamental properties of protocols in our peer-to-peer model.
 In Section~\ref{sec:parity} we give the lower bound for the bitwise parity function. In
Section~\ref{sec:AND} we prove a lower bound on the switched multi-party information cost of the function $\AND_k$,
and in  Section~\ref{sec:Disj},  we prove, using the results of  Section~\ref{sec:AND}, the lower bound on 
the communication complexity of the disjointness function $\Disj_k^n$. 
In Section \ref{sec:Rand}, we show how 
to apply our  information theoretic lower bounds in order 
to give  a lower bound on the number of random bits necessary for
 the private computation of the function $\Disj_k^n$. 
  Last, in Section~\ref{sec:conclusions} we discuss some open questions. 
\LongVersionEnd
\ShortVersion
Due to space limitation {\em all} proofs are deferred to the  full version of the paper. 
Section~\ref{sec:Model}  introduces our  model.
 In Section~\ref{sec:Prelim} we define our new information theoretic measure, $\MIC$,  give some
 of its properties, and  give a number of fundamental properties of protocols in our model.
 In Section~\ref{sec:parity} we give the lower bound for the bitwise parity function. In
Section~\ref{sec:AND} we prove a lower bound on the switched multi-party information cost of $\AND_k$,
and in  Section~\ref{sec:Disj},  we prove, using the results of  Section~\ref{sec:AND}, the lower bound on 
the communication complexity of $\Disj_k^n$. 
In Section \ref{sec:Rand} we  
 apply our  information theoretic  lower bounds in order 
to give  a lower bound on the number of random bits necessary for
 the private computation of  $\Disj_k^n$. 
 Last, in Section~\ref{sec:conclusions} we discuss some open questions. 
 \ShortVersionEnd

\section{Multi-party communication protocols}\label{sec:Model}

We start with our model, and, to this end, give a number of notations.

\paragraphA{Notations.}
We denote by $k$ the number of players. We often use $n$ to denote the size (in bits) of the input to each player.
Calligraphic letters will be used to denote sets. Upper case letters will be used to denote random variables, and given
 two random variables $A$ and $B$, we will denote by $AB$ the joint random variable $(A,B)$. Given
  a string (of bits) $s$, $|s|$ denotes the length of $s$. Using parentheses we denote an ordered set (family) of items, 
  e.g., $(Y_i)$. Given a family $(Y_i)$,  $Y_{-i}$ denotes the sub-family which is the family $(Y_i)$ {\em without} 
  the element $Y_i$. The letter $X$ will usually denote the input to the players, and we thus use the shortened 
  notation $X$ for $(X_i)$, i.e.,  the input to all players. A protocol will usually be denoted by $\pi$.

We now define a natural communication model which is a slight restriction of the 
general asynchronous 
peer-to-peer model.
The restriction of our model compared to the general asynchronous 
peer-to-peer model  is that for a given player at a given time, the set of players from which that player waits for 
a message before sending any message of its own
 is determined by that player's own local view, i.e., from that player's input and the messages it has read so far,
as well as its private randomness, and the public randomness.
This allows us to define information theoretic tools that pertain to the transcripts of the protocols, and at 
the same time to use these tools as lower bounds for communication complexity. This restriction however 
does not exclude the existence of private protocols, as other special cases of the general asynchronous  model do.  
We observe  that practically all multi-party protocols in the 
literature are implicitly defined in our model, and that without such restriction, 
 one bit of communication
can bring $\log k$ bits of information, because not only the content of the message, but also the identity of the
sender may reveal information. 
To exemplify why  the general asynchronous  model is problematic consider the following simple 
example (that we borrow from  our work in~\cite{KRU}).
\begin{example}
\label{ex:problematic_protocol}
There are $4$ players $A$, $B$ and $C$, $D$. The protocol allows $A$ to transmit 
to $B$ its input bit $x$. But all messages sent in the protocol are the bit $0$, and the 
protocol generates only a single 
transcript over all possible inputs. The protocol works as follows:

{\bf A}: If $x=0$ send $0$ to $C$; after receiving $0$ from $C$, send $0$ to $D$.

\hspace{0.35cm} If $x=1$ send $0$ to $D$; after receiving $0$ from $D$, send $0$ to $C$

{\bf B}: After receiving $0$ from a player, send $0$ back to that player.

{\bf C,D}: After receiving $0$ from $A$ send $0$ to $B$. After receiving $0$ from $B$ send $0$ to $A$.

\noindent It is easy to see that $B$ learns the value of $x$ from the order of the messages it gets.
\end{example}

In what follows we formally define our model, compare it 
to the general one and to other restricted ones, and explain the usefulness and logic of our specific model. 

\subsection{Definition of the model}\label{subsec model}
We work in a \textit{multi-party, number-in-hand, peer-to-peer} setting. Each player $1\leq i \leq k$
 has unbounded local computation
 power and, in addition to its input $X_i$, has access to a source of private randomness $R_i$. We will use the 
notation $R$ 
for $(R_i)$, i.e., the private randomness of all players. A source of public randomness 
$R^p$ is also available to all players. We will call a protocol with no private randomness a public-coins protocol.
The system consists of $k$ players and a family of $k$ 
functions $f = (f_i)_{i\in\Ni{1}{k}}$, with $\forall~i \in\Ni{1}{k},~ f_i: \Pi_{\ell=1}^{k}\calX_\ell \rightarrow \calY_i$,
 where $\calX_\ell$ denotes the set of possible inputs of player $\ell$, and  ${\calY_i}$  denotes the set of possible 
 outputs of player $i$.
The players are given some input $x = (x_i)\in \Pi_{i=1}^{k}\calX_i$, and for every $i$, player $i$ has to 
compute $f_i(x)$. 

We define the communication model as follows, which is the asynchronous setting, with some restrictions.
To make the discussion simpler we assume a global time which is {\em unknown} to the players.
Every pair of players is connected by a bidirectional communication link that allows them to send
messages to each other. 
There is no bound on the delivery time (i.e., when the message arrives to its destination node) of a message, 
but every message
is delivered in finite time, and the communication link maintains FIFO order in each of the two directions. 
Messages that arrive to the head of the link at the destination node of that link are buffered until they are read by that node.
Given a specific time we define the {\em view} of player $i$ as the input of this player, 
$X_i$, its private randomness, $R_i$,  the public randomness, $R^p$, and the messages read so far by player $i$. 
After the protocol has started,The protocol of each player $i$ runs in  {\em local rounds}.
In each round, player $i$ sends messages to some subset of the other players. The identity of these players, as well as
the content of these messages, depend on the current view of player $i$.  The player also decides whether
it should stop, and output (or ``return'') the result of the function $f_i$.
Then (if player $i$ did not stop and return the output), the player waits for messages from a certain subset of the other 
players, this subset being also determined by the current view of the player. 
That is,  the player reads a single message from each of the incoming links that connect it  to that subset of other players;
If for a certain such link no message
is available, then the player waits until such message is available (i.e., arrives). 
Then the (local) round of player $i$ terminates.\footnote{The fact that the receiving of the incoming messages comes 
as the last step of  the (local) round comes only to emphasize that the sending of the messages and the output 
are a function of only the messages received in previous (local) rounds.} To make it possible for the player to identify 
the arrival of the {\em complete} message that it waits for, we require that each message sent by a player in the protocol 
is self-delimiting.
 
Denote by ${\cal D}_i^\ell$ the set of possible views of player $i$ at the end of local round 
$\ell$, $\ell \geq 0$, where the beginning of 
the protocol is considered round $0$. \\
Formally, a {\bf protocol} $\pi$ is defined by  a set of local programs, one for each player $i$,  where the local 
program of player $i$ is defined by a sequence of functions, parametrized by the index of the {\em local} 
round $\ell$, $\ell\geq 1$:
\LongVersion
\begin{itemize}
\LongVersionEnd
\ShortVersion
\begin{compactitem}
\ShortVersionEnd
\item $S_i^{\ell,s}: {\cal D}_i^{\ell-1} \rightarrow 2^{\{1,\ldots,k\}\setminus\{i\}}$, defining the set of players 
to which player $i$ \textit{sends} the messages.
\item $m_{i,j}^\ell:  {\cal D}_i^{\ell-1} \rightarrow \{0,1\}^{*}$, such that for any $D_i^{\ell-1} \in{\cal D}_i^{\ell-1}$, 
if  $j \in S_i^{\ell,s}(D_i^{\ell-1})$,  then $m_{i,j}^\ell(D_i^{\ell-1})$ is 
the content of the message player $i$ sends to player $j$. Each such message is self-delimiting.
\item $O_i^\ell: {\cal D}_i^{\ell-1} \rightarrow \{0,1\}^{*} \cup \{\bot\}$, defining whether or not  the local 
program of player $i$ stops and 
the player returns its output, and what is that output.
 If the value is $\bot$ then no output occurs. If the value is $y \in \{0,1\}^*$, then the 
local program stops and the player returns the value $y$. 
\item $S_i^{\ell,r}: {\cal D}_i^{\ell-1} \rightarrow 2^{\{1,\ldots,k\}\setminus\{i\}}$, 
defining the set of players from which player $i$ waits to \textit{receive} a message.
\LongVersion
\end{itemize}
\LongVersionEnd
\ShortVersion
\end{compactitem}
\ShortVersionEnd

To define the {\bf transcript} of a protocol we proceed as follows. 
We first define $k(k-1)$ basic transcripts $\Pi_{i,j}^r$, denoting the transcript of
the messages read by player $i$ from its link from player $j$, and another $k(k-1)$ basic transcripts 
$\Pi_{i,j}^{s}$, denoting the transcript of the messages sent by player $i$ on its link to player $j$.\\
We then define the transcript of player $i$, $\Pi_i$, as the $2(k-1)$-tuple of the $2(k-1)$ basic transcripts
$\Pi_{i,j}^r,\Pi_{i,j}^s$, $j  \in \Ni{1}{k}\setminus \{i\}$. 
The transcript of the whole protocol $\Pi$ is defined as the $k$-tuple of the $k$ player transcripts $\Pi_i$, $i \in \Ni{1}{k}$.
We denote by $\Pi_i(x,r)$ the transcript of player $i$ when protocol $\pi$ is run on input $x$ and on
randomness (public and private of all players) $r$. By $\Pi_i^{\ell}(x,r)$ we denote $\Pi_i(x,r)$ modified such 
that all the messages that player $i$ sends in local rounds $\ell'>\ell$, and all the messages that player $i$ reads
in local rounds $\ell'>\ell$ are eliminated from the transcript.
Observe that while $\Pi_{i,j}^r$ is always a prefix of $\Pi_{j,i}^{s}$, the definition
of a protocol does not imply that they are equal. Further observe that each bit sent in $\pi$ appears in $\Pi$ at most twice.
 
We note that while seemingly the model that we introduce here is the same as the one used in
~\cite{KRU}, there are important differences between the models, and that these differences  are crucial for the properties
 that we prove in the present paper to hold. See Section~\ref{sec:comparison_models} for a comparison.

For a $k$-party protocol $\pi$ we  denote the set of possible inputs as $\calX$, and 
denote the projection of this set on the $i$'th coordinate (i.e., the set of possible inputs for player $i$)
by $\calX_i$.  Thus $\calX \subseteq \calX_1\times\cdots\times\calX_k$.
The set of possible transcripts for a protocol is  denoted $\calT$, and the projection 
of this set on the $i$'th coordinate (i.e., the set of possible transcripts of player $i$) is 
denoted $\calT_i$. Observe that $\calT \subseteq \calT_1\times\cdots\times\calT_k$.

        Furthermore, in the course of the proofs, we sometimes consider a protocol that does {\em not} have access to
public randomness (but may have private randomness). We call such protocol a {\em private-coins protocol}.

We now formally define the notion of a protocol computing a given function with certain bounded error.
We will give most of the following definitions for the case where all functions $f_i$ are the same function, that we 
denote by $f$. The definitions in the case of family of functions are similar.

\begin{definition}
For a given $0 \leq \epsilon <1 $, a protocol $\pi$ $\epsilon$\textit{-computes} a function $f$ if 
for all $x \in \Pi_{i=1}^{k}\calX_i$:
\LongVersion
\begin{itemize}
\LongVersionEnd
\ShortVersion
\begin{compactitem}
\ShortVersionEnd
\item  For all possible assignments for the random sources $R_i$, $ 1\leq i\leq  k$, and $R^p$,  
every player eventually stops and returns an output.
\item With probability at least $1-\eps$  (over all random sources) the following event occurs:
each player $i$ outputs the value $f(x)$, i.e., the correct value of the function.
\LongVersion
\end{itemize}
\LongVersionEnd
\ShortVersion
\end{compactitem}
\ShortVersionEnd
\end{definition}

 The {\bf communication complexity}  of a protocol is defined as the worst case, 
over the possible inputs and the possible randomness, of the number
of bits sent by all players. 
For a protocol $\pi$ we denote its communication complexity by $\CC(\pi)$.
For a given function $f$ and a given $0 \leq \epsilon <1$, we denote by $\CC^{\epsilon}(f)$ the 
{\em $\epsilon$-error communication
complexity of $f$}, i.e., 
$\CC^{\epsilon}(f) = \inf\limits_{\pi~\text{$\epsilon$-computing}~f }\CC(\pi)$.

In addition to the notion of a protocol computing a function, we also consider the notion of \textit{external computation} by  a protocol.
\begin{definition}\begin{sloppypar}
For a given ${0 \leq \epsilon <1}$, a protocol $\pi$  {\em externally}  $\epsilon$-computes $f$ 
if there exists a deterministic  function $\theta$ taking as input the possible transcripts of $\pi$ and 
verifying ${\forall~x \in \calX, ~\Pr[\theta(\Pi(x)) = f(x)] \ge 1 - \epsilon}$.\end{sloppypar}
\end{definition}

We remark that an $\epsilon$-computing protocol can be converted into an  externally  $\epsilon$-computing protocol keeping the same communication complexity 
(up to a multiplicative factor of $2$, and an additive term being the size of the output): 
every bit sent in the original protocol is replaced by two identical bits; when some player, w.l.o.g. 
player $1$, outputs the value of the function, it concatenates to the string it has to send to some other player, w.l.o.g.  player $2$, a string composed 
of the bits $01$ followed  by the value of the function it outputs (the concatenated 
new string is to be sent as a self delimiting message as all other messages in the protocol).
The new protocol operates as the original one, and will be an
 externally computing protocol. Thus, for any function with constant-size output (in particular boolean functions),
a lower bound on the communication complexity of externally $\epsilon$-computing protocols implies the same (up to a constant factor of $2$) lower bound for 
$\epsilon$-computing protocols.

We further remark that an  externally  $\epsilon$-computing protocol is a weaker notion than an $\epsilon$-computing protocol, as it only requires that the  
 the function can be computed from the transcript, not that all, or any specific, players can compute the function. For example,  
for the function $f=x_1$, $x_i\in \{0,1\}^n$ (i.e., the value  of the function is the input  of player $1$)
a protocol where player $1$ sends to player $2$ its input $x_1$ is a $0$-externally computing protocol 
with communication complexity $n$. But, a $0$-computing protocol for $f$ 
requires communication complexity of $n(k-1)$ in order  that all players can compute the function.
 
\vspace{0.4cm}

Finally, we give a proposition that relates the communication complexity of a $k$-party protocol $\pi$ to the entropy of
the transcripts of the protocol $\pi$.
\begin{proposition}
\label{prop:entropy_and_cc}
Let the input to a $k$-party protocol $\pi$ be distributed according to an arbitrary distribution.
Then, 
$ \sum_{i=1}^k H(\Pi_i) \leq 4\cdot \CC(\pi) + 4 k^2$,
where the entropy is according to the input distribution  and the randomization of protocol $\pi$.
\end{proposition}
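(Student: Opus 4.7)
The plan is to decompose each $H(\Pi_i)$ into entropies of its constituent basic transcripts, bound each such entropy by the expected length of a standard self-delimiting encoding, and then sum. Concretely, since $\Pi_i$ is by definition the $2(k-1)$-tuple of basic transcripts $(\Pi_{i,j}^s,\Pi_{i,j}^r)_{j\neq i}$, subadditivity of Shannon entropy gives
\[
H(\Pi_i)\;\leq\;\sum_{j\neq i}\bigl[H(\Pi_{i,j}^s)+H(\Pi_{i,j}^r)\bigr].
\]

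Next I would bound each $H(\Pi_{i,j}^s)$ (and symmetrically each $H(\Pi_{i,j}^r)$) via Shannon's noiseless coding inequality. The subtle point is that although \emph{individual} messages exchanged in $\pi$ are required to be self-delimiting, a basic transcript is the concatenation of a variable number of such messages, the count depending on the execution; consequently the \emph{support} of $\Pi_{i,j}^s$ need not be prefix-free as a set of strings (one execution may produce a proper prefix of what another execution produces). To remedy this I would compose with a fixed self-delimiting encoding $c\colon \{0,1\}^{*}\to\{0,1\}^{*}$, for instance the map that doubles every bit of $x$ and appends the terminator $01$, giving $|c(x)|=2|x|+2$. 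Since $c$ is injective and $c(\{0,1\}^{*})$ is prefix-free, Shannon's inequality applied to $c\circ\Pi_{i,j}^s$ yields
\[
H(\Pi_{i,j}^s)\;=\;H\bigl(c(\Pi_{i,j}^s)\bigr)\;\leq\;\mathbb{E}\bigl[|c(\Pi_{i,j}^s)|\bigr]\;=\;2\,\mathbb{E}\bigl[|\Pi_{i,j}^s|\bigr]+2,
\]
and analogously for $\Pi_{i,j}^r$.

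The last step is a global accounting. By definition of $\CC(\pi)$, for every realization of inputs and randomness $\sum_{i}\sum_{j\neq i}|\Pi_{i,j}^s|\leq\CC(\pi)$, so the same bound holds in expectation; and since each $\Pi_{i,j}^r$ is always a prefix of $\Pi_{j,i}^s$, we also have $\sum_{i}\sum_{j\neq i}\mathbb{E}[|\Pi_{i,j}^r|]\leq\CC(\pi)$. Each of the two sums ranges over $k(k-1)$ ordered pairs, so collecting the additive $2$ from the encoding gives
\[
\sum_{i=1}^{k}H(\Pi_i)\;\leq\;2\CC(\pi)+2k(k-1)+2\CC(\pi)+2k(k-1)\;\leq\;4\CC(\pi)+4k^{2},
\]
which is the claimed bound.

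The only genuinely nontrivial point is the recognition that even though individual messages are self-delimiting, the basic transcripts \emph{as a set of possible values} are not automatically prefix-free; this is what forces the $2\times$ overhead in the self-delimiting encoding and, after combining the send- and receive-side contributions, yields the coefficient $4$ in front of $\CC(\pi)$. Everything else is routine bookkeeping and I do not anticipate any further obstacles.
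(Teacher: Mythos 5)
Your proof is correct and follows essentially the same route as the paper's: a self-delimiting (bit-doubling plus terminator) encoding combined with Shannon's source-coding bound, together with the observation that each transmitted bit appears in at most two basic transcripts (once on the send side, once as a prefix on the receive side). The only difference is bookkeeping — you apply subadditivity of entropy and encode each basic transcript separately, whereas the paper encodes the whole tuple $\Pi_i$ at once — and both yield the same constants.
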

\begin{proof}
We first encode $\Pi_i$, for any $i$, into a variable $\Pi_i'$ such that the set of possible values of $\Pi_i'$ is a prefix-free set
of strings.
Observe that the transcript
$\Pi_i$ is composed of a number of basic transcripts:
 for every $j \in \Ni{1}{k} \setminus \{i\}$, a  pair of transcripts of messages, 
 $\Pi_{i,j}^s$, $\Pi_{i,j}^r$
containing the messages sent by player $i$ to player $j$, and the messages read by player $i$ from player $j$, respectively.
We convert $\Pi_i$ into $\Pi_i'$ as follows:  In each one of the above $2(k-1)$ components we
replace every bit $b\in \{0,1\}$  by $b.b$, and then add at the end of the component the two bits $01$. We then 
concatenate all components in order. Clearly this a one-to-one encoding, and the set of possible values of 
$\Pi_i'$ is a prefix-free set of strings.

\begin{sloppypar}Defining ${|\Pi_i| = \sum\limits_{j\neq i} |\Pi_{i,j}^s| + |\Pi_{i,j}^r|}$ and 
${|\Pi| = \sum\limits_{i=1}^k |\Pi_i|}$, we have 
${H(\Pi_i') = H(\Pi_i)}$, and ${\Exp[ | \Pi_i' |] = 2 \Exp[ | \Pi_i |] + 4(k-1)}$.
\end{sloppypar}
~~\\
We get
\begin{align*}
\sum\limits_{i=1}^k H(\Pi_i)&=  \sum\limits_{i=1}^k H(\Pi'_i)\\
&\le  \sum\limits_{i=1}^k \Exp[ | \Pi_i' |]  ~~~ \text{(by Theorem~\ref{thm:Shannon})}\\
&\le  \sum\limits_{i=1}^k \left( 2 \Exp[ | \Pi_i |] + 4(k-1)\right)\\
&\le 2 \cdot \Exp[ | \Pi |] + 4k^2 \\
&\le 4 \cdot \CC(\pi) + 4k^2~,
\end{align*}
where the last factor of $2$ is due to the fact that each message sent from, say, player $i$ to player $j$,
may appear in at most $2$ basic transcripts $\Pi_{i,j}^s$ and $\Pi_{j,i}^r$.
\end{proof}

\subsection{Comparison to other models}
\label{sec:comparison_models}
The somewhat restricted model (compared to the general asynchronous model) that we work with allows us to 
use information theoretic tools for the study of protocols in this model, and in particular to give lower bounds on
 the multi-party 
communication complexity. 
Notice that the general asynchronous model is problematic in this respect since one bit of communication
 can bring $\log k$ bits of information, because not only the content of the message, but also the identity of the
  sender may reveal information. 
Thus, information cannot be used as a lower bound on communication. In our case, the sets $S_i^{l,r}$ 
and $S_i^{l,s}$ are determined by the current view of the player, $\Pi$ contains only the content of the
 messages, and thus the desirable relation between the communication and the information is maintained. On 
 the other hand, our restriction is natural, does not seem to be very restrictive (practically all protocols in the 
 literature adhere to our
model), and does not exclude the existence of private protocols.
To exemplify why the general asynchronous  model is problematic  see
 Example~\ref{ex:problematic_protocol}.
 
  While the model that we introduce in the preset paper bears some similarities to the model used in~\cite{KRU},
   there are a number of important differences between them. First, the definition of the transcript is different, resulting in a 
   different  relation 
between the entropy of the transcript and the communication complexity. More important is the natural property of the model in 
the present paper that the local 
   program of a protocol in a given node ends its
    execution when it locally gives its output. It turns out that the very basic rectangularity property of protocols,
    used in many papers, holds in this case (and when the transcript is defied as we define in the present paper),
     while if the local protocol may continue to operate after output, there are 
    examples where this property does not hold. Thus, we view the introduction of the present model also as
     a contribution towards identifying the necessary features of a peer-to-peer model so that basic and useful
      properties of protocols hold in the peer-to-peer setting.

There has been a long series of works about multi-party communication protocols in different variants of 
models, for example  \cite{DF,CKS,Gro,Jay,PVZ,CRR,CR} (see~\cite{EOPV} for a comparison of a few of 
these models).
In the \textit{coordinator model} (cf.~\cite{DF,PVZ,BEOPV}), an additional player (the coordinator) with no input can
 communicate privately with each player, and the players can only communicate with the coordinator. 
We first note that the coordinator model does not yield exact bounds for the multi-party communication 
complexity in the peer-to-peer setting (neither in our model nor in the most general one). 
Namely, any protocol in the peer-to-peer model can be transformed into a protocol in the coordinator model with 
an $O(\log k)$ multiplicative factor in the communication complexity, by sending each message to the 
coordinator with an $O(\log k)$-bit label indicating its destination. This factor is sometimes necessary, 
e.g.,  for the {\tt permutation} functional
defined as follows: Given a permutation $\sigma:\Ni{1}{k}\rightarrow \Ni{1}{k}$, each player $i$ has as input a bit $b_i$ 
and $\sigma^{-1}(\sigma(i)-1)$ and $\sigma^{-1}(\sigma(i)+1)$  (i.e., each player has as input the indexes of the players
before and after itself in the permutation).\footnote{All additions are modulo $k$. This is a promise problem.} 
For player $i$ the function $f_i$ is defined as $f_i=b_{\sigma^{-1}(\sigma(i)+1)}$ 
(i.e., the value of the input bit of the next player in the permutation $\sigma$). Clearly in our model the 
communication complexity of this function is $k$ 
(each player sends its
input bit to the correct player), and the natural protocol is valid in our model. On the other hand, in the coordinator model
$\Omega(k \log k)$ bits of communication are necessary.
But this multiplicative factor between the complexities in the two models is not always necessary: the communication 
complexity of the parity function $\XOR$ is $\Theta(k)$ both in the peer-to-peer model and in the coordinator model. 

Moreover, when studying private protocols in the multi-party setting, the coordinator model does not offer any 
insight. In the coordinator model, described in \cite{DF} and used for instance in \cite{BEOPV}, 
if one does not impose any  
 privacy requirement with respect to the coordinator, it is trivial to have a private protocol by all players 
sending their input to the coordinator, and the coordinator returning the results to the players. If there is a privacy
 requirement with respect to the coordinator, then if there is a random source shared by all the players (but not
  the coordinator), privacy is always possible using the protocol of~\cite{FKN}. If no such source exists, privacy  is 
 impossible in general.
This follows from the results of Braverman et al.~\cite{BEOPV} who show a non-zero lower bound on the total
 internal information complexity of all parties (including the coordinator) for the function {\em Disjointness} in that model. 
Our model, on the other hand, does allow for the private computation of any function~\cite{BGW,CCD,AsharovL17}.

It is worthwhile to contrast our model, and the communication complexity measure that we are concerned with,
with work in the so-call congested-clique model that has gained increasing attention in the distributed computation
literature (cf.~\cite{KorhonenS17,KorhonenS17_arxiv}). While both models are based on a communication 
network in the form 
of a complete graph
 (i.e., every player can send messages to any other player, and these messages can be different) there are 
 two significant differences between them. Most of the works
 in the congested clique model deal with graph-theoretic problems and the input to each player 
  is related to
 the adjacency list of a node (identified with that player) in the input graph, while in  our model the 
 input is not associated in any way with the communication graph.
 More importantly,
 the 
 congested clique model
is a {\em synchronous} model while ours is an asynchronous one. 
This brings about a major difference between the complexity measures studied in each  of the models. Work in
 the congested clique model is concerned with giving bounds on the number of rounds necessary to 
  fulfill a certain task under the condition that in each round each player can send to any other player a 
  limited number of bits (usually $O(\log k)$ bits).  The measure of communication complexity, that 
  is of interest to us in the 
  present paper, deals with the total number of communication bits necessary to
fulfill a certain task in an asynchronous setting without any notion of global rounds.\footnote{Any function can be 
computed in the congested clique model with $O(k)$ 
 communication complexity (at a cost of having many rounds) by each player, having input $x$, sending  
 a single bit to player $1$ only at round number $x$. On the other hand, in the asynchronous model any function
 can be computed in a single ``round'' (at a cost of high communication complexity) by each player sending 
 its whole input to player $1$.}

\section{Tools for the study of multi-party communication protocols}\label{sec:Prelim}

In this section we consider two important tools for the study of peer-to-peer multi-party communication protocols.
First, we define and introduce an  information theoretic measure that we call 
{\em Multi-party Information Cost} (MIC); we later use it to prove our lower bounds.
Then, we prove, in the peer-to-peer multi-party model that we define,  the so-called rectangularity property of
communication protocols, that we also use in our proofs.

\subsection{Multi-party Information Cost}
\label{subsec:mic}

We now introduce  an information theoretic measure for multi-party peer-to-peer protocols that we later show
to be  useful  for proving lower bounds on the communication complexity of multi-party peer-to-peer protocols.
We note that a somewhat similar measure
was proposed in~\cite{BEOPV} for the coordinator model, but, to the best of our knowledge, never found an application as a 
tool in a proof of a lower bound.

\begin{definition}
For any $k$-player protocol $\pi$ and any input distribution $\mu$, 
we define the \textit{multi-party information cost} of $\pi$:
$$\MIC_{\mu}(\pi)=\sum\limits_{i=1}^k  \left(I(X_{-i} ; \Pi_i \mid X_i R_i) + I(X_i ; \Pi_i \mid X_{-i} R_{-i})\right)~.$$
\end{definition}

Observe that 
the second part of each of the $k$ summands can be interpreted as  the information that player $i$ ``leaks'' to the other
players on its input.
While the ``usual'' intuitive  interpretation of  two-party $\IC$ is ``what Alice learns on Bob's input plus what Bob
learns on Alice's input'', one can also interpret two-party $\IC$ as ``what Alice learns on Bob's input plus what Alice leaks
on her input''. 
Thus, $\MIC$ can be interpreted as summing over all players $i$ of ``what  player $i$ learns on 
the other players' inputs, plus what player $i$ leaks on its input.'' Indeed, the expression defining $\MIC$ is equal 
to  the sum, over all players $i$, of the two-party $\IC$ for the two-party protocol that results from collapsing all 
players, except $i$, into one virtual player. Thus, for number of players $k=2$, $\MIC=2\cdot \IC$.
We note that  defining our measure without the private randomness in the condition of the mutual 
information expressions would yield the exact same measure (as is the case for $2$-party $\IC$); we prefer however
to define $\MIC$ with the randomness in the conditions, as we believe that it allows one to give shorter, but still 
clear and accurate, proofs. 

On the other hand observe that the second of the two mutual information expressions has
$X_{-i}$ in the condition, contrary to a seemingly similar measure used in~\cite{BEOPV} (Definition 3 
in~\cite{BEOPV}).
 Our measure
 is thus ``internal'' 
in nature, while the one of~\cite{BEOPV} has an ``external'' component.

Further observe that the summation, over all players, of each one of the two  mutual information expressions
alone would not yield a  measure useful for proving lower
bounds on the communication complexity of functions.
The first  mutual information expression would yield a measure  for functions that would never be higher than the 
entropy of the function at hand, due to the existence
of private protocols for all functions~\cite{BGW,CCD}. For the second mutual information expression there are functions for which that measure
 would be far too low compared to the communication complexity: e.g., the function $f=x_1$, $x\in \{0,1\}^n$ (i.e., the value 
of the function is the input  of player $1$); in that case the measure would equal only $n$, while the communication
 complexity of that function is $\Omega(kn)$.

We now define the multi-party information complexity of a function. 
\begin{definition}
For any function $f$, any input distribution $\mu$, and any $0 \leq \epsilon \leq 1$, we define the quantity
$$\MIC^\epsilon_\mu(f) =  \inf\limits_{\pi\text{ $\epsilon$-computing }f} \MIC_\mu(\pi)~.$$
\end{definition}

\begin{definition}
For any $f$, and any $0 \leq \epsilon \leq 1$, we define the quantity
$$\MIC^\epsilon(f) =  \inf\limits_{\pi\text{ $\epsilon$-computing }f} \sup\limits_{\mu}~\MIC_\mu(\pi)~.$$
\end{definition}

We now claim that the multi-party information cost  and the communication complexity of a protocol
are related, as formalized by the following lemma.
\begin{lemma}\label{lem:CC>=MIC}\begin{sloppypar}
For any $k$-player protocol $\pi$, and for any input distribution $\mu$,
$$\CC(\pi) \ge \frac{1}{8}\MIC_{\mu}(\pi) - k^2~.$$
\end{sloppypar}\end{lemma}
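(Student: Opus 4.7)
The plan is to bound each summand of $\MIC_\mu(\pi)$ by the entropy of the relevant player's transcript, and then apply Proposition~\ref{prop:entropy_and_cc}, which already relates $\sum_i H(\Pi_i)$ to $\CC(\pi)$.

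Concretely, I would first handle a single player $i$. For any random variables $A,B,C$ one has $I(A;B\mid C) \le H(B\mid C) \le H(B)$, so by taking $A=X_{-i}$, $B=\Pi_i$, $C=X_i R_i$, I get $I(X_{-i};\Pi_i\mid X_i R_i)\le H(\Pi_i)$. By symmetry, applying the same chain of inequalities with $A=X_i$, $C=X_{-i}R_{-i}$ yields $I(X_i;\Pi_i\mid X_{-i}R_{-i})\le H(\Pi_i)$. Summing over all $k$ players therefore gives
\[
\MIC_\mu(\pi)\;=\;\sum_{i=1}^k\Bigl(I(X_{-i};\Pi_i\mid X_iR_i)+I(X_i;\Pi_i\mid X_{-i}R_{-i})\Bigr)\;\le\;2\sum_{i=1}^k H(\Pi_i).
\]

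Next I would invoke Proposition~\ref{prop:entropy_and_cc}, which states that $\sum_{i=1}^k H(\Pi_i)\le 4\,\CC(\pi)+4k^2$ under any input distribution. Substituting this bound in yields $\MIC_\mu(\pi)\le 8\,\CC(\pi)+8k^2$, which, after rearrangement, is exactly the claimed inequality $\CC(\pi)\ge \tfrac{1}{8}\MIC_\mu(\pi)-k^2$.

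There is no real obstacle here beyond the careful choice of conditioning set: the only step that could potentially fail is the reduction $I(A;B\mid C)\le H(B)$, and that is immediate from standard information-theoretic identities (conditioning can only decrease entropy, and mutual information is bounded by the entropy of either argument). The heavy lifting—relating the sum of transcript entropies to the communication complexity while accounting for the at-most-twofold duplication of bits across the sender/receiver basic transcripts and the $O(k^2)$ overhead coming from per-link prefix-free encodings—has already been absorbed into Proposition~\ref{prop:entropy_and_cc}, so the present lemma reduces to a short entropy estimate followed by this plug-in.
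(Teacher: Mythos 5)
Your argument is correct and coincides with the paper's own proof: both bound each mutual information term by $H(\Pi_i)$ (via $I(A;B\mid C)\le H(B\mid C)\le H(B)$, i.e.\ Proposition~\ref{prop:Hcondi}), obtain $\MIC_\mu(\pi)\le 2\sum_i H(\Pi_i)$, and then plug in Proposition~\ref{prop:entropy_and_cc}. Nothing further is needed.
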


\begin{proof}
\begin{align*}
\MIC_{\mu}(\pi)&=\sum\limits_{i=1}^k  \left(I(X_{-i} ; \Pi_i \mid X_i R_i) + I(X_i ; \Pi_i \mid X_{-i} R_{-i})\right)\\
&\le 2 \sum\limits_{i=1}^k H(\Pi_i) \\
&\le 8 \cdot \CC(\pi) + 8k^2~,
\end{align*}
where the first inequality follows from Proposition~\ref{prop:Hcondi}, and the last one 
from Proposition~\ref{prop:entropy_and_cc}.
\end{proof}

We now show that the multi-party information cost satisfies a direct sum property for product distributions. In what 
 follows, the notation $f^{\otimes n}$ denotes the task of computing $n$ instances of $f$, 
 where the  requirement from an  
$\epsilon$-computing protocol is that each instance is computed correctly with probability at least $1-\epsilon$ 
(as opposed to  the stronger requirement  that the whole vector of instances is computed correctly 
with probability at least  $1-\epsilon$).

\begin{theorem}\label{thm:dsmic}
For any protocol $\pi$  $\epsilon$-computing a function $f^{\otimes n}$,
and for any product distribution $\mu$ for the 
input, 
there exists a protocol $\pi'$  $\epsilon$-computing $f$
such that,
$$\MIC_{\mu^n}(\pi) \ge n \cdot \MIC_{\mu}(\pi')~.$$
\end{theorem}

\begin{proof}
We define $\pi'$ on input $(Y_i)_{i\in\Ni{1}{k}}$ as follows. We denote by $R'^{p}$ the public randomness available to
 the players, and by $R'_{i}$ the private randomness available to the players.

We consider the public and private randomness $R'_i$, $1\leq i \leq k$, and $R'^p$ as strings of random bits.
The players first use  the first bits of the public randomness  to publicly sample a random index $L$ uniformly
 in $\Ni{1}{n}$, and 
define $X_i^L=Y_i$.
The players then,  using the next random bits of the public randomness, publicly sample,  for every $d<L$, 
$X^d$ according to $\mu$. 
Each player $i$ then, using the first bits of its private randomness,  samples privately, for every $d>L$, 
$X_i^d$ according to $\mu$.
The players then run $\pi$ on input $X$.  They output, as the output of $\pi'$, the $L$'th coordinate of the output of $\pi$.
Observe that $\pi'$ has error at most $\epsilon$, and that if the input to $\pi'$ is distributed according to $\mu$, then 
the input of
 $\pi$ is distributed according to $\mu^n$.

Note that there is no extra communication in $\pi'$ compared to $\pi$,
 only some (private and public) sampling. Therefore we have 
 $\Pi'_i=\Pi_i$ for every $1 \leq i \leq k$.  We further denote by $R^p$ the random bits of $R'^{p}$
 beyond those used by the public sampling at the start of $\pi'$. Similarly, we denote by $R_{i}$, $1 \leq i \leq k$,
 the  random bits of $R'_{i}$ beyond those used by the private sampling at the start of $\pi'$.

We now show that $\MIC_\mu(\pi') = \frac{1}{n}\MIC_{\mu^n}(\pi)$. In what follows we explicitly state 
the public randomness next to the transcript. Thus, 	
$$\MIC_{\mu}(\pi')=\sum\limits_{i=1}^k  \left(I(Y_{-i} ; R'^p\Pi_i' \mid Y_i R'_i ) + I(Y_i ; R'^p\Pi_i' \mid Y_{-i} R'_{-i} )\right).$$

We have, for every player $i$,
\begin{align*}
I(Y_{-i} ; R'^p\Pi_i' \mid Y_i R'_i ) &=  I(Y_{-i} ;  LX^{<L}R^p \Pi_i \mid  Y_i X_i^{>L} R_i ) \text{~~~(making explicit the sampling 
from $R'_i$,$R'^p$)}\\
&= I(Y_{-i} ;  X^{<L}R^p \Pi_i \mid  Y_i X_i^{>L} R_i ) \text{~~~(because $I( Y_{-i} ; L \mid Y_i X_i^{>L} R_i X^{<L}R^p\Pi_i)=0$)}\\
&=  I(Y_{-i} ; R^p \Pi_i \mid  Y_iX_i^{>L} R_i X^{<L} ) \text{~~~(because $Y_{-i}$ and $X^{<L}$ are independent)} \\
&= \Exp\limits_\ell[I(X^\ell_{-i} ; R^p\Pi_i \mid X^\ell_i X_i^{>\ell} R_i X^{<\ell} )]   \\
&=\Exp\limits_\ell[I(X^\ell_{-i} ; R^p\Pi_i \mid X^\ell_i X_i^{>\ell} R_i X_i^{<\ell} X_{-i}^{<\ell} )]\\
&=\Exp\limits_\ell[I(X^\ell_{-i} ; R^p\Pi_i \mid X_i R_i  X_{-i}^{<\ell})]\\
&=\frac{1}{n}\sum\limits_\ell[I(X^\ell_{-i} ; R^p \Pi_i \mid X_i R_i X_{-i}^{<\ell})]\\
&=\frac{1}{n}I(X_{-i} ; R^p \Pi_i \mid X_i R_i )\text{~~~(chain rule)~,}
\end{align*}
and
\begin{align*}
I(Y_i ;  R'^p\Pi_i' \mid Y_{-i} R'_{-i})&=  I(Y_i ; LX^{<L}R^p\Pi_i \mid  Y_{-i} X_{-i}^{>L}R_{-i}) \text{~~~(making explicit the sampling 
from $R'_i$,$R'^p$)}\\
&=I(Y_i ; X^{<L}R^p\Pi_i \mid  Y_{-i} X_{-i}^{>L}R_{-i})  \text{~~~(because $I(Y_i ; L \mid  Y_{-i} X_{-i}^{>L}R_{-i}X^{<L}R^p\Pi_i )=0$)} \\
&= I(Y_i ; R^p\Pi_i \mid  Y_{-i} X_{-i}^{>L}R_{-i}X^{<L} )\text{~~~(because $Y_{i}$ and $X^{<L}$ are independent )} \\
&=\Exp\limits_\ell[I(X^\ell_i ; R^p\Pi_i \mid X^\ell_{-i} X_{-i}^{>\ell} R_{-i} X_{-i}^{<\ell} X_{i}^{<\ell} )]\\
&=\Exp\limits_\ell[I(X^\ell_i ; R^p\Pi_i \mid X_{-i} R_{-i}  X_{i}^{<\ell})]\\
&=\frac{1}{n}\sum\limits_\ell[I(X^\ell_i ; R^p\Pi_i \mid X_{-i} R_{-i}  X_{i}^{<\ell})]\\
&=\frac{1}{n}I(X_i ; R^p\Pi_i \mid X_{-i} R_{-i} )\text{~~~(chain rule).}
\end{align*}

Summing over $i\in\Ni{1}{k}$ concludes the proof.
\end{proof}

\subsection{The rectangularity property}
\label{sec:multi-party_protocols}

\LongVersion
\paragraphA{Rectangularity.}
\LongVersionEnd
The \textit{rectangularity property} (or \textit{Markov property}) is one of the key properties that follow 
from the structure 
and definition of (some) protocols. 
For randomized protocols it was introduced in the two-party setting and in the multi-party
 blackboard model in \cite{BJKS}, and in the coordinator model in \cite{BEOPV}. We prove a similar rectangularity 
 property in the peer-to-peer model that we consider in the present paper. 
 
 We note that the proof of this 
 property in the peer-to-peer model makes explicit use of the specific
 properties of the model we defined:
  the proof that follows explicitly uses the definition of the transcript on an edge by edge basis as in our model, 
 as well as the fact that a player returns and stops as one operation. One can build examples  where if
 any of these two properties does not hold, then the 
 rectangularity property of protools does not hold.
 Thus we view the following proof of rectangularity in our model also as an identification of model properties  needed 
 for  the useful rectangularity property of multiparty peer-to-peer protocols to hold.
 
To define this property,
for any transcript $\overline{\tau}\in\calT_i$, let $\calA_i(\overline{\tau})=\{(x,r) \mid \Pi_i(x,r)=\overline{\tau}\}$ 
(i.e., the set of input, randomness pairs that lead to transcript $\overline{\tau}$),
and define the projection of $\calA_i(\overline{\tau})$ on coordinate $i$ as 
\LongVersion
$$\calI_i(\overline{\tau})=\{(x',r'), \exists~(x,r)\in\calA_i(\overline{\tau}), x'=x_i ~\&~ r'=r_i\}~,$$
\LongVersionEnd
\ShortVersion
$\calI_i(\overline{\tau})=\{(x',r'), \exists~(x,r)\in\calA_i(\overline{\tau}), x'=x_i ~\&~ r'=r_i\}$,
\ShortVersionEnd
and the projection of $\calA_i(\overline{\tau})$ on the complement of coordinate $i$ as 
\LongVersion
$$\calJ_i(\overline{\tau})=\{(x',r'), \exists~(x,r)\in\calA_i(\overline{\tau}), x'=x_{-i} ~\&~ r'=r_{-i}\}~.$$
\LongVersionEnd
\ShortVersion
$\calJ_i(\overline{\tau})=\{(x',r'), \exists~(x,r)\in\calA_i(\overline{\tau}), x'=x_{-i} ~\&~ r'=r_{-i}\}$.
\ShortVersionEnd
Similarly, for any transcript $\tau\in\calT$, let 
$\calB(\tau)=\{(x,r) \mid \Pi(x,r)=\tau)\}$, and for any player $i$, let
$\calH_i(\tau)=\{(x',r'),\exists~(x,r)\in\calB(\tau),x'=x_{-i}~\&~r'=r_{-i}\}$.

We start by proving a combinatorial property of transcripts of communication protocols, 
which intuitively follows from the fact that each player has access to only its own input and private randomness. 
The proof of this property is technically more involved  compared to the analogous property
 in other settings, since the structure of protocols and the manifestation of the transcripts
  in the peer-to-peer setting are more flexible than in the other settings.

\begin{lemma}
\label{le:det_rectangularity}
Let $\pi$ be a $k$-player private-coins protocol
\LongVersion
\footnote{Recall that a private-coins protocol does not have 
access to public randomness, but may have private randomness.}  
\LongVersionEnd
with inputs from 
$\calX= \calX_1\times\cdots\times\calX_k$. Let $\calT$ denote the set of possible transcripts 
of $\pi$, and for $i\in\Ni{1}{k}$ let $\calT_i$ denote the set of possible transcript observed by player 
$i$, so that $\calT\subseteq \calT_1\times\cdots\times\calT_k$.
Then, $\forall~ i\in \Ni{1}{k}$:
\LongVersion
\begin{itemize}
\LongVersionEnd
\ShortVersion
\begin{compactitem}
\ShortVersionEnd
\item $\forall~\overline{\tau} \in\calT_i,~~\calA_i(\overline{\tau}) = \calI_i(\overline{\tau}) \times \calJ_i(\overline{\tau})$. 
\item $\forall~\tau\in\calT, ~~\calB(\tau)=\calI_i(\tau_i)\times\calH_i(\tau)$.
\LongVersion
\end{itemize}
\LongVersionEnd
\ShortVersion
\end{compactitem}
\ShortVersionEnd
\end{lemma}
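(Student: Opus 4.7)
I would prove both parts by a single ``exchange'' argument: two input--randomness pairs producing the same player-$i$ transcript $\overline{\tau}$ can be mixed at coordinate $i$ without changing the transcript observed by $i$; moreover, along the way the $-i$ coordinates keep producing the same views as in the original second pair. The forward inclusion in each part is immediate from the definition of the projections (using $\calB(\tau)\subseteq \calA_i(\tau_i)$ for Part 2), so the work lies entirely in the reverse inclusion of Part 1, and Part 2 follows as a corollary once a strengthened invariant is extracted from that proof.

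\textbf{Plan for Part 1 (reverse inclusion).} Fix $(x,r),(x',r')\in\calA_i(\overline{\tau})$, set $\tilde{x}=(x_i,x'_{-i})$, $\tilde{r}=(r_i,r'_{-i})$, and consider running $\pi$ on $(\tilde{x},\tilde{r})$. The steps of any execution can be ordered by the ``happens-before'' causality induced by message sends/receives together with the per-player sequence of local rounds. I would induct along such an order, maintaining two invariants up to the current event: \emph{(a)} player $i$'s local view agrees with its view at the corresponding local round of the $(x,r)$-execution; \emph{(b)} for every $j\neq i$, player $j$'s local view agrees with its view at the corresponding local round of the $(x',r')$-execution. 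For an event where $i$ acts, the messages $i$ sends, the set $S_i^{\ell,s}$, and the set $S_i^{\ell,r}$ depend only on $(x_i,r_i)$ and on $i$'s current view, so by (a) they match the $(x,r)$-execution; in particular the sent messages lie in $\overline{\tau}$, hence coincide with those sent on the same edges in the $(x',r')$-execution. For an event where some $j\neq i$ acts, the messages $j$ sends and its awaited senders depend on $(x'_j,r'_j)$ and on $j$'s view; the latter is built from messages $j$ has previously read on each of its incident links, and by (b) together with the FIFO and self-delimiting structure of the model, each such message equals the one $j$ reads in the $(x',r')$-execution (messages from $i$ by the previous case, messages from other $\ell\neq i$ by the IH). This preserves both invariants and, on termination, yields $\Pi_i(\tilde{x},\tilde{r})=\overline{\tau}$, establishing $(\tilde{x},\tilde{r})\in\calA_i(\overline{\tau})$.

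\textbf{Plan for Part 2.} The inclusion $\calB(\tau)\subseteq\calI_i(\tau_i)\times\calH_i(\tau)$ follows by projections together with the trivial observation $\calB(\tau)\subseteq\calA_i(\tau_i)$. For the reverse, let $(x_i,r_i)\in\calI_i(\tau_i)$ and $(x'_{-i},r'_{-i})\in\calH_i(\tau)$ be witnessed respectively by $(x,r)\in\calA_i(\tau_i)$ and $(x',r')\in\calB(\tau)$. I apply the induction of Part 1 to this pair (both witnesses produce $\tau_i$ for player $i$, the latter because its full transcript is $\tau$). Invariant \emph{(b)} now guarantees that in the mixed execution each $j\neq i$ sees exactly the view it has under $(x',r')$; since that reference execution realizes the full transcript $\tau$, player $j$'s transcript in the mixed run is $\tau_j$, while Part 1 gives player $i$'s transcript as $\tau_i$. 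Hence the mixed execution produces the full transcript $\tau$, so $(\tilde{x},\tilde{r})\in\calB(\tau)$.

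\textbf{Main obstacle.} The technical heart is formalizing the simultaneous induction in an asynchronous setting where there is no global round counter. The precise features of the model do the required work: outgoing messages and the sets $S_i^{\ell,s},S_i^{\ell,r}$ are functions only of the local view (so the sender's identity carries no ``hidden'' information), the per-direction FIFO ordering together with self-delimiting encoding makes the content of what a player reads a deterministic function of the senders' earlier actions, and a player terminates the instant it produces its output. These are exactly the ingredients violated in Example~\ref{ex:problematic_protocol}, and aligning the mixed-execution events with the appropriate reference execution on each edge is the step that would require the most care; once that alignment is in place, the exchange argument and Part 2 are straightforward.
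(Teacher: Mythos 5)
Your proposal is correct, and at its heart it is the same exchange argument as the paper's: form the mixed pair from the two witnesses, let player $i$ mirror the reference run that shares its coordinate and let the other players mirror the run that shares theirs, with the glue being that both reference runs produce the same $\overline{\tau}$ for player $i$ (so what $i$ sends and what $i$ reads is forced on every edge), relying on exactly the model features you list (views determine sends, awaited sets and stopping; FIFO; self-delimiting messages). The organization differs, though. The paper inducts only on player $i$'s local rounds, proving $\Pi_i^{\ell}(x,r)=\Pi_i^{\ell}(\tilde{x},\tilde{r})$, and handles the asynchronous behavior of the other players not by an explicit invariant but by introducing auxiliary pseudo-protocols $\psi$ and $\psi'$ in which player $i$ ``dumps'' its $\overline{\tau}$-messages up front, so that the responses of the players in $Q_i=\Ni{1}{k}\setminus\{i\}$ can be compared with the run sharing their coordinates; your simultaneous induction along a happens-before order, with invariant (a) for player $i$ and invariant (b) for the others, makes that comparison explicit instead. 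A side benefit of your formulation is that the second claim drops out of invariant (b) directly (each $j\neq i$ reproduces its $(x',r')$-view, hence its transcript $\tau_j$), whereas the paper proves the second claim by a short separate argument after invoking the first claim. The cost is that you must set up and justify the causal order and the per-edge alignment of events yourself, which is precisely the bookkeeping the paper's $\psi,\psi'$ device is designed to avoid; as you note, that alignment (reads on each link being forced prefixes under FIFO, and termination of the mixed run so that all awaited messages indeed arrive) is the part that needs the most care, and it is present but only sketched in your plan, at roughly the same level of detail as in the paper.
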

\begin{proof}
We start by proving the first claim.
Since the other inclusion is immediate from the definition, we only need to show that 
$$\forall~\overline{\tau} \in \calT_i, ~\calI_i(\overline{\tau}) \times \calJ_i(\overline{\tau}) \subseteq \calA_i(\overline{\tau})~.$$ 
\begin{sloppypar}To this end 
take an arbitrary $(x_i,r_i) \in \calI_i(\overline{\tau})$ and an arbitrary $(x_{-i},r_{-i}) \in \calJ_i(\overline{\tau})$.
Since ${(x_i,r_i)\in\calI_i(\overline{\tau})}$,  we have that 
$\exists~(\tilde{x},\tilde{r})\in \calA_i(\overline{\tau})$ s.t. $ x_i=\tilde{x}_i ~\&~ r_i=\tilde{r}_i$.
Similarly, since $(x_{-i},r_{-i}) \in \calJ_i(\overline{\tau})$, \end{sloppypar}
$$\exists~(\hat{x},\hat{r})\in\calA_i(\overline{\tau}) \mid x_{-i}=\hat{x}_{-i} ~\&~ r_{-i}=\hat{r}_{-i}~.$$
Let 
 $(x,r)$ be $((x_i,r_i),(x_{-i},r_{-i}))\in\calI_i(\overline{\tau}) \times \calJ_i(\overline{\tau})$.
We will now show that $$(x,r) \in \calA_i(\overline{\tau})~.$$ 

Let $L$ be the number of local rounds of player $i$ in the run of $\pi$ on input $(x,r)$.
We will show by induction on the index  of the local round of player $i$ that for any $\ell \leq L$, 
$\Pi_i^{\ell}(x,r)= \Pi_i^{\ell}(\tilde{x},\tilde{r})$. 
 Observe that whether or not the protocol of a player stops and returns its output  at a given round is a function 
 of its input and
  its transcript until that round, as well as its private randomness.  Therefore, 
  since  the protocol of player $i$ stops and returns its value 
at local round $L$  if the input is $(x,r)$, it will follow from $\Pi_i^{L}(x,r)= \Pi_i^{L}(\tilde{x},\tilde{r})$
that player $i$ stops and returns its output at local round $L$ also when the input is $(\tilde{x},\tilde{r})$. We will thus get that 
$\Pi_i(x,r)=\overline{\tau}$, and hence $(x,r) \in \calA_i(\overline{\tau})$.
 
The base of the induction, for $\ell=0$, follows since the transcript is empty. We now prove the claim for $\ell+1 \leq L$, based on the induction hypothesis that the claim holds for $\ell$.\footnote{Note that $\Pi_i(x,r)$ by itself does not define which messages are sent/read in which local round.}

The messages that player $i$ sends at local round $\ell+1$ are a function of $x_i$, $r_i$ and $\Pi_i^{\ell}(x,r)$.
As $x_i=\tilde{x}_i$ and $\tilde{r}_i=r_i$, and using the induction hypothesis, we get that the messages sent by player 
$i$ at local round $\ell+1$ are the same in $\pi_i(x,r)$ and in $\pi_i(\tilde{x},\tilde{r})$.
  
For the same reason we also get that the set of players from which player $i$ waits for a message at round $\ell+1$ is the same when $\pi$ is run on input  in $(x,r)$ and on input $(\tilde{x},\tilde{r})$.
  
We now claim that the messages read by player $i$ at round $\ell+1$ are the same when $\pi$ is run on input $(x,r)$ and on input $(\tilde{x},\tilde{r})$. To this end we define an imaginary ``protocol'' $\psi$ where player $i$ sends in its first local round all the messages that it sends  in $\overline{\tau}$, and the players in $Q_i=\Ni{1}{k}\setminus\{i\}$ run $\pi$.\footnote{Technically speaking, this is not a protocol according to our definition as more than one message may be sent in a single round on a single link.}
Player $i$ sends the messages on each link according to the order in $\overline{\tau}$.\footnote{Recall that a transcript of a players is a $2(k-1)$-tuple of transcripts, one for each of its $2(k-1)$ directed links.}
The messages that the players in $Q_i$ send in each of their local rounds are a function of their inputs (and their local randomness)
and the messages  they read from the links that connect to player $i$. Since $\Pi(\hat{x},\hat{r})=\overline{\tau}$,
we can conclude that in $\psi$ (when the input is $(\hat{x},\hat{r})$) the messages sent by the players in $Q_i$ (in particular, to player $i$) are the same as those sent in $\pi$ on input $(\hat{x},\hat{r})$.

Recall that we have proved above that when $\pi$ is run on $(x,r)$, the messages player $i$ sends up to round $\ell+1$ are consistent with $\overline{\tau}$. We therefore can consider now a ``protocol'' $\psi'$ which is the same as $\psi$ with the only difference that player $i$ sends (in its first local round) only the messages of $\overline{\tau}$ it would have sent in $\pi(x,r)$ until (and including) round $\ell+1$ (and not all the message it sends in $\overline{\tau}$). It follows that in $\psi'$, when run on input $(\hat{x},\hat{r})$, the sequences of messages sent from the players in $Q_i$ to $i$ are a prefix of the sequences they send in $\psi$.
Since $x_{-i}=\hat{x}$ and  $r_{-i}=\hat{r}$, the same claim holds when $\psi'$ is run on $(x,r)$. Observe now that when $\pi$ is run on $(x,r)$, at the time where player $i$ is waiting at local round $\ell+1$ for incoming messages it, has sent exactly the messages that player $i$ sends in $\psi'$.

Using the induction hypothesis $\Pi_i^{\ell}(x,r)= \Pi_i^{\ell}(\tilde{x},\tilde{r})$, 
the fact hat $x_i=\tilde{x}_i$ and $r_i=\tilde{r}_i$, and
the fact that the set of players from which player $i$ waits for a message at local round $\ell+1$ is the same for input 
$(x,r)$ and $(\tilde{x},\tilde{r})$,
we can conclude that the messages that player $i$ reads while waiting for messages at local round $\ell+1$ when 
$\pi$ is run on $(x,r)$ are consistent with the messages it would read when $\pi$ is run on $(\tilde{x},\tilde{r})$. Since 
player $i$ running $\pi$ must, by the definition of a protocol, reach its ``return'' statement, it must receive messages from 
all the players it is waiting for. We therefore conclude that the messages read by player $i$ in local round 
$\ell+1$ when $\pi$ is run on $(x,r)$ are the same as those it read when run on $(\tilde{x},\tilde{r})$.

Together with the induction hypothesis, and the fact (proved above) that the messages sent by player $i$ at local round $\ell+1$ are the same when $\pi$ is run on in $(x,r)$ and on $(\tilde{x},\tilde{r})$, we have that 
$\Pi_i^{\ell+1}(x,r)= \Pi_i^{\ell+1}(\tilde{x},\tilde{r})$.

We now prove the second claim. We only need to show that
$$\forall~\tau\in\calT,~ \calI_i(\tau_i)\times\calH_i(\tau)\subseteq\calB(\tau)~,$$
the other inclusion being immediate from the definitions, since $\calB(\tau) \subseteq \calA_i(\tau_i)$.

Take an arbitrary $(x_i,r_i) \in \calI_i(\tau_i)$ and an arbitrary $(x_{-i},r_{-i}) \in \calH_i({\tau})$.
Since $(x_{-i},r_{-i}) \in \calH_i({\tau})$, 
$\exists~(\hat{x},\hat{r})$ s.t. $\pi(\hat{x},\hat{r})=\tau$ and $x_{-i}=\hat{x}_{-i}$ and $ r_{-i}=\hat{r}_{-i}$.
Let  $(x,r)=((x_i,r_i),(x_{-i},r_{-i}))$. Since $\calB(\tau)\subseteq\calA(\tau_i)$, 
we have $\calH_i(\tau)\subseteq\calJ_i(\tau_i)$. 
Thus, using the first claim,
$$\calI_i(\tau_i)\times\calH_i(\tau)\subseteq\calI_i(\tau_i)\times\calJ_i(\tau_i)\subseteq\calA_i(\tau_i)~,$$ 
and $\Pi_i(x,r)=\tau_i$.  It remains to show that  $\forall~j\neq i, ~\Pi_j(x,r)=\tau_j$. 

Consider the two runs of protocol $\pi$ on the input $(x,r)$ and on the input $(\hat{x},\hat{r})$. We have that 
$\Pi(\hat{x},\hat{r})=\tau$, and that $\Pi_i(x,r)=\tau_i$. Since $~x_{-i}=\hat{x}_{-i}$ and $ r_{-i}=\hat{r}_{-i}$,
we have that also for all $j \neq i$ $\Pi_j({x},{r})=\Pi_j(\hat{x},\hat{r})=\tau_j$. It follows that 
$(x,r) \in \calB(\tau)$ as needed.

\end{proof}

We now prove the \textit{rectangularity property of randomized protocols} in the peer-to-peer setting.
\LongVersion
It follows from Lemma~\ref{le:det_rectangularity} and straightforward calculations. 
The full proof is given in the appendix.
\LongVersionEnd

\begin{lemma}\label{lem:rectang}
Let $\pi$ be a $k$-player private-coins protocol with inputs from 
$\calX= \calX_1\times\cdots\times\calX_k$. Let $\calT$ denote the set of possible transcripts 
of $\pi$, and for $i\in\Ni{1}{k}$ let $\calT_i$ denote the set of possible transcript observed by player 
$i$, so that $\calT\subseteq \calT_1\times\cdots\times\calT_k$.
Then for every $i\in\Ni{1}{k}$, there exist functions
$q_i:\calX_i\times\calT_i\rightarrow[0,1]$, $q_{-i}:\calX_{-i}\times\calT_i\rightarrow[0,1]$ and 
$p_{-i}:\calX_{-i}\times\calT\rightarrow[0,1]$ such that 
\LongVersion
$$\forall~x\in\calX, \forall~\tau=(\tau_1,\dots,\tau_k)\in\calT, \Pr[\Pi_i(x)=\tau_i]=q_i(x_i,\tau_i)q_{-i}(x_{-i},\tau_i)~,$$
\LongVersionEnd
\ShortVersion
$\forall~x\in\calX, \forall~\tau=(\tau_1,\dots,\tau_k)\in\calT, \Pr[\Pi_i(x)=\tau_i]=q_i(x_i,\tau_i)q_{-i}(x_{-i},\tau_i)$,
\ShortVersionEnd
and
\LongVersion
$$\forall~x\in\calX, \forall~\tau=(\tau_1,\dots,\tau_k)\in\calT, \Pr[\Pi(x)=\tau]=q_i(x_i,\tau_i)p_{-i}(x_{-i},\tau)~.$$
\LongVersionEnd
\ShortVersion
$\forall~x\in\calX, \forall~\tau=(\tau_1,\dots,\tau_k)\in\calT, \Pr[\Pi(x)=\tau]=q_i(x_i,\tau_i)p_{-i}(x_{-i},\tau)$.
\ShortVersionEnd
\end{lemma}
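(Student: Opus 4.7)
The plan is to lift the deterministic set-theoretic decomposition given by Lemma~\ref{le:det_rectangularity} to the probabilistic setting by exploiting the product structure of the private randomness. Since $\pi$ is a private-coins protocol, the random vector $R = (R_1,\ldots,R_k)$ has an induced product distribution on its coordinates, so that the marginal of $R_i$ is independent of the marginal of $R_{-i}$. Throughout the argument I will treat $\Pr[\Pi_i(x)=\tau_i]$ and $\Pr[\Pi(x)=\tau]$ as probabilities over $R$ only, with $x$ fixed.

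For the first identity, I would begin by observing that, by the very definition of $\calA_i(\tau_i)$,
\[
\Pr[\Pi_i(x)=\tau_i] \;=\; \Pr_{R}\bigl[(x,R)\in\calA_i(\tau_i)\bigr].
\]
Applying the first part of Lemma~\ref{le:det_rectangularity}, $\calA_i(\tau_i)=\calI_i(\tau_i)\times\calJ_i(\tau_i)$, so that $(x,R)\in\calA_i(\tau_i)$ iff $(x_i,R_i)\in\calI_i(\tau_i)$ \emph{and} $(x_{-i},R_{-i})\in\calJ_i(\tau_i)$. Because $R_i$ and $R_{-i}$ are independent, these two events are independent, and the joint probability factors. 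I then simply define
\[
q_i(x_i,\tau_i) \;=\; \Pr_{R_i}\bigl[(x_i,R_i)\in\calI_i(\tau_i)\bigr], \qquad q_{-i}(x_{-i},\tau_i) \;=\; \Pr_{R_{-i}}\bigl[(x_{-i},R_{-i})\in\calJ_i(\tau_i)\bigr],
\]
which yields the first factorization.

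For the second identity, the plan is identical but invokes the second part of Lemma~\ref{le:det_rectangularity}, $\calB(\tau)=\calI_i(\tau_i)\times\calH_i(\tau)$. Writing
\[
\Pr[\Pi(x)=\tau] \;=\; \Pr_{R}\bigl[(x,R)\in\calB(\tau)\bigr]
\]
and again using independence of $R_i$ and $R_{-i}$, the first factor is exactly the same $q_i(x_i,\tau_i)$ as above (which is the point of the claim, since both decompositions must share the $x_i$-dependent factor), and the second factor defines
\[
p_{-i}(x_{-i},\tau) \;=\; \Pr_{R_{-i}}\bigl[(x_{-i},R_{-i})\in\calH_i(\tau)\bigr].
\]

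There is no genuine obstacle beyond this bookkeeping: the whole difficulty of the rectangularity property is already packed into Lemma~\ref{le:det_rectangularity}, whose proof is delicate because of the flexible peer-to-peer communication pattern. Once that deterministic decomposition is in hand, the randomized version is a one-line consequence of the independence of the private randomness sources across players. The only minor care needed is that the functions $q_i, q_{-i}, p_{-i}$ must be defined on all of $\calX_i\times\calT_i$, $\calX_{-i}\times\calT_i$, $\calX_{-i}\times\calT$ respectively; this is harmless since the probabilities in question are $0$ whenever the relevant projection set is empty.
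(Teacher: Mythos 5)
Your proof is correct and follows essentially the same route as the paper's: both define $q_i$, $q_{-i}$, $p_{-i}$ as the probabilities (over the independent private randomness $R_i$ and $R_{-i}$) of landing in the projection sets $\calI_i$, $\calJ_i$, $\calH_i$, and then factor using the two product decompositions of Lemma~\ref{le:det_rectangularity}. Nothing is missing.
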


\ThmOnlyInLong
\LongVersion
The following lemma formalizes the fact that the distribution of the transcript of a protocol that externally-computes 
a function $f$ must differ on two inputs with different values of $f$ (see also~\cite{BJKS}).
The proof is deferred to the appendix.
\begin{lemma}\label{lem:HelErr}
Let $f$ be a $k$-party function, and let $\pi$ be a protocol  externally $\epsilon$-computing $f$. 
If $x$ and $y$ are two inputs such that $f(x)\neq f(y)$, then $h(\Pi(x),\Pi(y)) \ge \frac{1-2\epsilon}{\sqrt{2}}$.
\end{lemma}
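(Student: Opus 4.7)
The plan is to reduce the claim to a standard inequality between total variation distance and Hellinger distance. Since $\pi$ externally $\epsilon$-computes $f$, by definition there is a deterministic function $\theta$ on transcripts such that $\Pr[\theta(\Pi(x))=f(x)]\ge 1-\epsilon$ and $\Pr[\theta(\Pi(y))=f(y)]\ge 1-\epsilon$. Define the set of transcripts
\[
E = \{\tau \in \calT \,:\, \theta(\tau) = f(x)\}.
\]
Because $f(x) \neq f(y)$, the event $\theta(\Pi(y))=f(y)$ forces the transcript into $\calT \setminus E$. Hence $\Pr[\Pi(x)\in E]\ge 1-\epsilon$ while $\Pr[\Pi(y)\in E]\le \epsilon$.

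Next I would read off the total variation distance between the two transcript distributions from the witness set $E$: by the variational definition,
\[
\|\Pi(x)-\Pi(y)\|_{TV} \ge \Pr[\Pi(x)\in E] - \Pr[\Pi(y)\in E] \ge 1-2\epsilon.
\]
Then I would invoke the classical comparison between total variation and Hellinger distance, namely $\|P-Q\|_{TV} \le \sqrt{2}\,h(P,Q)$ (this is one of the standard inequalities listed in the information-theoretic preliminaries in the appendix). Combining the two bounds gives
\[
h(\Pi(x),\Pi(y)) \ge \frac{\|\Pi(x)-\Pi(y)\|_{TV}}{\sqrt{2}} \ge \frac{1-2\epsilon}{\sqrt{2}},
\]
which is exactly the claim.

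There is essentially no obstacle: the only content is the translation of the external $\epsilon$-computation guarantee into a total-variation lower bound via a distinguishing set, followed by a black-box application of the Cauchy–Schwarz-type inequality between $\|\cdot\|_{TV}$ and $h$. The proof is identical in structure to analogous Hellinger lower bounds used in the two-party setting (e.g.\ in~\cite{BJKS}); the peer-to-peer definition of $\Pi$ does not interfere, because the argument only uses that $\Pi(x)$ and $\Pi(y)$ are two probability distributions on the same set $\calT$ of possible transcripts and that a deterministic decoder distinguishes them with error at most $\epsilon$.
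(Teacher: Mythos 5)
Your proposal is correct and follows essentially the same route as the paper: you use the decoder $\theta$ to build the distinguishing set $\theta^{-1}(f(x))$, lower-bound the statistical (total variation) distance between $\Pi(x)$ and $\Pi(y)$ by $1-2\epsilon$, and then apply the statistical-to-Hellinger comparison (Lemma~\ref{lem:HelAndStat}). No gaps; this matches the paper's proof step for step.
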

\LongVersionEnd

\ThmOnlyInLong
\LongVersion
\paragraphA{The Diagonal Lemma.}
The following lemma is often called the \textit{diagonal lemma}. It was proved in \cite{BJKS} for the two-party setting
 under the name of  the \textit{Pythagorean lemma}, 
 and in \cite{BEOPV} for the coordinator model. We show here that is also holds in the peer-to-peer model. 
This lemma 
follows from Lemma~\ref{lem:rectang} and  
Proposition~\ref{prop:Hellsquare} in the same way that its two-party analogue follows from the analogous lemma and proposition.  
For completeness we give the proof in the appendix.
For $x\in\{0,1\}^k$ and $b\in\{0,1\}$, let $x_{[i\leftarrow b]}$ represent the input obtained from $x$ 
by replacing the $i^\text{th}$ bit of $x$ by $b$.
\begin{lemma}\label{lem:diag}
Let $\pi$ be a $k$-party  private-coins protocol  taking input in $\{0,1\}^k$.
Then
${\forall~x\in\{0,1\}^k}$, $\forall~y\in\{0,1\}^k$, $\forall~i\in\Ni{1}{k}$,
$h^2(\Pi(x),\Pi(y)) \ge \frac{1}{2}\left[h^2(\Pi(x),\Pi(y_{[i\leftarrow x_i]}))+h^2(\Pi(x_{[i\leftarrow y_i]}),\Pi(y))\right]$.
\end{lemma}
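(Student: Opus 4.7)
The plan is to mimic the classical two-party Pythagorean argument (as in BJKS) but invoking the peer-to-peer rectangularity from Lemma~\ref{lem:rectang}. The key observation is that the four inputs $x$, $y$, $y_{[i\leftarrow x_i]}$, $x_{[i\leftarrow y_i]}$ form a combinatorial rectangle in the $i$-th coordinate versus the other $k-1$ coordinates, and rectangularity lets us factor the transcript distribution accordingly.

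First I would apply Lemma~\ref{lem:rectang} to get functions $q_i : \calX_i \times \calT_i \to [0,1]$ and $p_{-i} : \calX_{-i} \times \calT \to [0,1]$ such that $\Pr[\Pi(z) = \tau] = q_i(z_i, \tau_i)\, p_{-i}(z_{-i}, \tau)$ for every input $z$ and every transcript $\tau$. The crucial point is that the \emph{same} pair of functions $(q_i, p_{-i})$ works for all four inputs simultaneously, so I can define, for each $\tau$, the four nonnegative values $a(\tau) = q_i(x_i, \tau_i)$, $b(\tau) = q_i(y_i, \tau_i)$, $c(\tau) = p_{-i}(x_{-i}, \tau)$, $d(\tau) = p_{-i}(y_{-i}, \tau)$. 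Then
\[
\Pr[\Pi(x)=\tau] = a c,\quad \Pr[\Pi(y)=\tau] = b d,\quad \Pr[\Pi(y_{[i\leftarrow x_i]})=\tau] = a d,\quad \Pr[\Pi(x_{[i\leftarrow y_i]})=\tau] = b c,
\]
suppressing the argument $\tau$.

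Next I would expand both sides of the desired inequality using the definition $h^2(P,Q) = 1 - \sum_\tau \sqrt{P(\tau) Q(\tau)}$. A short algebraic manipulation reduces the inequality to
\[
2\sum_\tau \sqrt{abcd} \;\le\; \sum_\tau (a+b)\sqrt{cd},
\]
which in turn is exactly $\sum_\tau \sqrt{cd}\,(\sqrt{a}-\sqrt{b})^2 \ge 0$, and this is manifestly true termwise. Equivalently one can phrase this as an application of Proposition~\ref{prop:Hellsquare} (the product-distribution Pythagorean identity) to the pair of ``product'' distributions $(a,c)$ vs.\ $(b,d)$.

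The main obstacle is really just the bookkeeping: one has to verify that the factorization from Lemma~\ref{lem:rectang} is \emph{uniform} in the input $z$, i.e.\ that $q_i$ and $p_{-i}$ do not change as we slide $z_i$ between $x_i$ and $y_i$ or $z_{-i}$ between $x_{-i}$ and $y_{-i}$. This is built into the statement of Lemma~\ref{lem:rectang}, since the functions depend only on the named coordinate, so once this is invoked the rest is the elementary Cauchy--Schwarz style inequality sketched above. No further properties of the peer-to-peer model are needed beyond rectangularity.
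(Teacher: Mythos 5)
Your proof is correct and follows essentially the same route as the paper's: invoke the rectangularity factorization of Lemma~\ref{lem:rectang} (with the same functions $q_i$, $p_{-i}$ serving all four inputs), expand the squared Hellinger distances via Proposition~\ref{prop:Hellsquare}, and reduce to the termwise inequality $\sqrt{cd}\,(\sqrt{a}-\sqrt{b})^2\ge 0$, which is exactly the AM--GM step the paper applies to the $q_i$ factors before regrouping. The only cosmetic quibble is your parenthetical calling Proposition~\ref{prop:Hellsquare} a ``product-distribution Pythagorean identity'' (it is just the identity $h^2(P,Q)=1-\sum_\tau\sqrt{P(\tau)Q(\tau)}$), but your argument does not rely on that framing.
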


\LongVersionEnd

\section{The function parity}
\label{sec:parity}

We now prove a lower bound on the multi-party peer-to-peer randomized communication complexity of the
 $k$-party $n$-bit parity function  $\XOR_k^n$, defined as follows:
each player $i$ receives $n$ bits $(x_i^p)_{p \in \Ni{1}{n}}$ and player $1$ has to output the bitwise sum modulo $2$ of the inputs, i.e.,
\LongVersion
$$\XOR_k^n(x)=\left( \oplus_{i=1}^k x_i^1, \oplus_{i=1}^k x_i^2,\ldots, \oplus_{i=1}^k x_i^n \right)$$
\LongVersionEnd
\ShortVersion
$\XOR_k^n(x)=\left( \oplus_{i=1}^k x_i^1, \oplus_{i=1}^k x_i^2,\ldots, \oplus_{i=1}^k x_i^n \right)$
\ShortVersionEnd
(the case where all $k$ players compute the   function is trivial).
To start, we prove a lower bound on the multi-party information complexity of the parity function, where each player has
 a single input bit. For simplicity we denote this function $\XOR_k$, rather than $ \XOR_k^1$.

\begin{theorem}\label{thm:micxor}
Let $\mu$ be the uniform distribution on $\{0,1\}^k$. 
Given any  fixed $0 \leq  \epsilon < \frac{1}{2}$,  for any protocol $\pi$  $\epsilon$-computing $\XOR_k$, it holds that
$\MIC_{\mu}(\pi) = \Omega(k)$.
\end{theorem}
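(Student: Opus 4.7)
The plan is to lower bound each of the $k$ summands of $\MIC_\mu(\pi)$ by a positive constant via a two-party reduction, one per player. Fix $i\in\Ni{2}{k}$ and view $\pi$ as a two-party protocol in which Alice is player~$i$ and Bob is the coalition of the remaining $k-1$ players; thus Bob holds $X_{-i}$, $R_{-i}$, and in particular owns player~$1$. The messages crossing the cut between $\{i\}$ and its complement are exactly the basic transcripts appearing in $\Pi_i$, and the same reasoning that drives Lemma~\ref{le:det_rectangularity} lets Bob deterministically replay the entire run of $\pi$ from $(X_{-i},R_{-i},R^p,\Pi_i)$ by simulating each of his own players and reading player~$i$'s outgoing messages off the appropriate basic transcripts of $\Pi_i$. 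In particular Bob reproduces player~$1$'s output.

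By correctness of $\pi$ this reproduced output equals $\XOR_k(X)=X_i\oplus\bigoplus_{j\neq i}X_j$ with probability at least $1-\epsilon$, and since $\bigoplus_{j\neq i}X_j$ is a function of $X_{-i}$, there is a predictor of $X_i$ from $(X_{-i},R_{-i},R^p,\Pi_i)$ whose error is at most $\epsilon$. A standard Markov/majority argument over the public randomness (together with $O(1)$-fold error amplification when $\epsilon$ is close to $1/2$) turns this into a predictor of $X_i$ from $(X_{-i},R_{-i},\Pi_i)$ alone whose error is some $\delta=\delta(\epsilon)<1/2$. Fano's inequality then gives $H(X_i\mid X_{-i},R_{-i},\Pi_i)\le h(\delta)<1$, and since $X_i$ is uniform in $\{0,1\}$ and independent of $(X_{-i},R_{-i})$ under $\mu$, $H(X_i\mid X_{-i},R_{-i})=1$, so
\[
I(X_i;\Pi_i\mid X_{-i},R_{-i})\;\ge\;1-h(\delta)\;=\;\Omega(1).
\]

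For $i=1$ I run the symmetric argument: player~$1$ itself computes the output from its own view $(X_1,R_1,R^p,\Pi_1)$, and Fano applied to the uniform bit $\bigoplus_{j\neq 1}X_j$ yields $I(X_{-1};\Pi_1\mid X_1,R_1)\ge I(\bigoplus_{j\neq 1}X_j;\Pi_1\mid X_1,R_1)=\Omega(1)$. Summing over all $k$ summands of $\MIC_\mu(\pi)$ and keeping in each the term that is controlled gives $\MIC_\mu(\pi)=\Omega(k)$. The only real obstacle is the reconstruction step: because our model is asynchronous and $\Pi_i$ records messages edge by edge rather than in a global order, verifying that $(X_{-i},R_{-i},R^p,\Pi_i)$ truly determines player~$1$'s output requires the kind of model-specific care that underlies Lemma~\ref{le:det_rectangularity}. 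Once that step is secured, the rest is a routine application of Fano's inequality and a standard averaging over the public randomness.
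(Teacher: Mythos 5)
Your skeleton is the same as the paper's: for each $i\ge 2$ you lower bound the leakage term $I(X_i;\Pi_i\mid X_{-i}R_{-i})$ by noting that the coalition of all players other than $i$ can deterministically replay the run from its own inputs and randomness together with $\Pi_i$ (the paper packages this as $H(\Pi_1\mid X_{-i}R_{-i}\Pi_i)=0$ and a chain-rule step), and then uses the facts that player $1$ outputs $\XOR_k(X)$ with error $\epsilon$ and that $X_i$ is a function of $\XOR_k(X)$ and $X_{-i}$ to extract $1-h(\epsilon)$ per summand via Fano (Claim~\ref{cl:entropy_of_calculation} in the paper). Your extra treatment of the $i=1$ summand is fine but not needed; the paper simply discards it and gets $(k-1)(1-h(\epsilon))$.

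The step that does not work is the removal of the public randomness. The claim that a predictor of $X_i$ from $(X_{-i},R_{-i},R^p,\Pi_i)$ with error $\epsilon$ can be converted, by a Markov/majority argument plus ``error amplification'', into a predictor from $(X_{-i},R_{-i},\Pi_i)$ alone with error bounded away from $1/2$ is false in general: if player $i$'s only message is $X_i\oplus R^p$ (a one-time pad with a public coin), the first predictor has error $0$ while no predictor of the second kind beats error $1/2$; and there is nothing to amplify, since you observe a single transcript of one fixed protocol, not independent repetitions. Indeed, if $\MIC$ were read literally with $R^p$ appearing neither in the transcript nor in the conditioning, the theorem itself would be false: let each player $i\ge 2$ send $X_i$ padded with a fresh public coin to player $1$, who unmasks and outputs $\XOR_k(X)$; every mutual-information term in $\MIC$ is then $0$. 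The intended convention, stated explicitly before the $\SIC$ manipulations and used in the proof of Theorem~\ref{thm:dsmic}, is that the public randomness is implicitly part of the transcript (equivalently, of the conditioning); the paper's own use of $H(\Pi_1\mid X_{-i}R_{-i}\Pi_i)=0$ and of Claim~\ref{cl:entropy_of_calculation} relies on it. With that convention your detour is unnecessary: keep $R^p$ in the conditioning, apply Fano there to get $I(X_i;\Pi_i\mid X_{-i}R_{-i}R^p)\ge 1-h(\epsilon)$, and the rest of your argument goes through exactly as in the paper.
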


\begin{proof}
\begin{align*}
\MIC_{\mu}(\pi) &= \sum\limits_{i=1}^k  \left(I(X_{-i} ; \Pi_i \mid X_i R_i) + I(X_i ; \Pi_i \mid X_{-i} R_{-i})\right)\\
&\ge \sum\limits_{i=2}^k I(X_i ; \Pi_i \mid X_{-i} R_{-i})\\
&= \sum\limits_{i=2}^k (I(X_i ; \Pi_i \mid X_{-i} R_{-i}) + I(X_i ; \Pi_1 \mid X_{-i} R_{-i} \Pi_i))
\text{~~~(as $H(\Pi_1 \mid X_{-i} R_{-i} \Pi_i) = 0$)}\\
&= \sum\limits_{i=2}^k I(X_i ; \Pi_1 \Pi_i \mid X_{-i} R_{-i}) \text{~~~(chain rule)}\\
&\ge \sum\limits_{i=2}^k I(X_i ; \Pi_1 \mid X_{-i} R_{-i})\\
&= \sum\limits_{i=2}^k (H(X_i \mid X_{-i} R_{-i}) - H(X_i \mid X_{-i} R_{-i} \Pi_1))\\
&= \sum\limits_{i=2}^k (1 - H(X_i \mid X_{-i} R_{-i} \Pi_1)) 
			\text{~~~(because $X_i$ is uniform and independent of $X_{-i}$ and of $R_{-i}$)}\\
&\ge \sum\limits_{i=2}^k (1 - H(\XOR_k (X)\mid X_{-i} R_{-i} \Pi_1))
		 \text{~~~(data processing inequality, as $\exists~\Phi \mid X_i=\Phi(\XOR_k(X),X_{-i})$)}\\
&\ge \sum\limits_{i=2}^k (1 - H(\XOR_k(X) \mid X_1 R_1 \Pi_1))\\
&\ge (k-1)(1 - H(\XOR_k(X)\mid X_1 R_1 \Pi_1))\\
&\ge (k-1)(1 - h(\epsilon))\text{~~~~(since player $1$ outputs $\XOR_k(X)$ with error $\epsilon$; see Claim~\ref{cl:entropy_of_calculation})}~.
\end{align*}

\end{proof}

The next theorem follows immediately from Theorem~\ref{thm:micxor} and Theorem~\ref{thm:dsmic}.
\begin{theorem}\label{thm:micxorn}
Let $\mu$ be the uniform distribution on $\{0,1\}^k$. Given any  fixed $0 \leq  \epsilon < \frac{1}{2}$,  for any protocol $\pi$  $\epsilon$-computing $\XOR^n_k$, it holds that
$\MIC_{\mu^n}(\pi) = \Omega(kn)$.
\end{theorem}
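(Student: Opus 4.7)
The plan is to combine the two previously established results in a straightforward manner: the single-instance lower bound of Theorem~\ref{thm:micxor} with the direct-sum property of $\MIC$ from Theorem~\ref{thm:dsmic}.

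First, I would observe that the task $\XOR_k^n$ of computing the bitwise parity on $n$-bit inputs is exactly the task $(\XOR_k)^{\otimes n}$: the $p$-th output coordinate depends only on the $p$-th input bit of each player, so the $n$ output coordinates correspond to $n$ independent instances of the single-bit parity $\XOR_k$. Moreover, the notion of $\epsilon$-computing used in Theorem~\ref{thm:dsmic} is the (weaker) per-coordinate requirement, and any protocol $\pi$ that $\epsilon$-computes $\XOR_k^n$ certainly has per-coordinate error at most $\epsilon$, so Theorem~\ref{thm:dsmic} applies to $\pi$ as a protocol $\epsilon$-computing $(\XOR_k)^{\otimes n}$.

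Second, I would note that $\mu$, being the uniform distribution on $\{0,1\}^k$, is a product distribution (the $k$ coordinates are mutually independent uniform bits), which is exactly the hypothesis required for Theorem~\ref{thm:dsmic}. Applying that theorem to $\pi$ yields a protocol $\pi'$ which $\epsilon$-computes $\XOR_k$ and satisfies
\[
\MIC_{\mu^n}(\pi) \;\geq\; n \cdot \MIC_\mu(\pi').
\]

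Finally, since $\pi'$ $\epsilon$-computes $\XOR_k$ under the uniform distribution on $\{0,1\}^k$, Theorem~\ref{thm:micxor} gives $\MIC_\mu(\pi') = \Omega(k)$. Substituting, we obtain $\MIC_{\mu^n}(\pi) = \Omega(nk)$, as required. There is no real obstacle here: the argument is an immediate composition of the two prior results, and the only thing to verify is that $\XOR_k^n$ literally decomposes as $n$ independent copies of $\XOR_k$ and that the uniform distribution on $\{0,1\}^k$ is a product distribution, both of which are transparent.
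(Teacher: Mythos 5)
Your proposal is correct and matches the paper's argument: the paper derives Theorem~\ref{thm:micxorn} exactly as an immediate composition of the single-instance bound of Theorem~\ref{thm:micxor} with the direct-sum property of Theorem~\ref{thm:dsmic}, using that $\mu$ is a product distribution. Your added check that whole-vector $\epsilon$-correctness implies the per-coordinate correctness required for $f^{\otimes n}$ is a valid (and welcome) verification of a detail the paper leaves implicit.
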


We can now prove a lower bound on the communication complexity of $\XOR^n_k$.
Note that the lower bound for $\XOR^n_k$   given in~\cite{KRU} is valid  only for a restricted class of protocols, called
``oblivious'' in~\cite{KRU}.

\begin{theorem}\label{thm:CCXOR2}
Given any fixed $0 \le \epsilon < \frac{1}{2}$, there is a constant $\alpha$ such that for $n\ge\frac{1}{\alpha} k$,
$$\CC^\epsilon(\XOR_k^n) = \Omega(kn)~.$$
\end{theorem}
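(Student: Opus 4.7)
The plan is to combine the multi-party information cost lower bound for $\XOR_k^n$ with the relation between $\CC$ and $\MIC$, and then absorb the additive $k^2$ error term using the hypothesis $n = \Omega(k)$.

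First, I would fix any protocol $\pi$ that $\epsilon$-computes $\XOR_k^n$. Applying Theorem~\ref{thm:micxorn} with the uniform product distribution $\mu^n$ on $\{0,1\}^{kn}$ gives $\MIC_{\mu^n}(\pi) \ge c \cdot kn$ for some absolute constant $c > 0$ (depending only on $\epsilon$). By Lemma~\ref{lem:CC>=MIC}, we then have
\begin{equation*}
\CC(\pi) \;\ge\; \tfrac{1}{8}\,\MIC_{\mu^n}(\pi) \;-\; k^2 \;\ge\; \tfrac{c}{8}\, kn \;-\; k^2.
\end{equation*}

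The only remaining point is to absorb the $-k^2$ term. Choose a constant $\alpha > 0$ small enough so that whenever $n \ge \tfrac{1}{\alpha} k$, i.e., $k^2 \le \alpha\, kn$, we have $\tfrac{c}{8}\, kn - k^2 \ge \bigl(\tfrac{c}{8} - \alpha\bigr) kn \ge \tfrac{c}{16}\, kn$. Taking the infimum over all protocols $\pi$ that $\epsilon$-compute $\XOR_k^n$ yields $\CC^\epsilon(\XOR_k^n) = \Omega(kn)$ under the stated assumption $n \ge \tfrac{1}{\alpha} k$.

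I do not expect any real obstacle here, since all the work has been done in Theorem~\ref{thm:micxor} (the single-bit case), Theorem~\ref{thm:dsmic} (direct sum for product distributions, which yielded Theorem~\ref{thm:micxorn}), and Lemma~\ref{lem:CC>=MIC} (the $\MIC \le 8\CC + 8k^2$ inequality). The only subtlety is that the additive $k^2$ slack in the $\CC$-vs-$\MIC$ inequality becomes negligible exactly in the regime $n = \Omega(k)$, which is precisely why the theorem is stated with this hypothesis.
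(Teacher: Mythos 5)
Your proposal is correct and follows essentially the same route as the paper's proof: apply Theorem~\ref{thm:micxorn} under the uniform product distribution, combine with Lemma~\ref{lem:CC>=MIC}, and absorb the additive $k^2$ term using $k^2 \le \alpha kn$ for a suitably small constant $\alpha$ in the regime $n \ge \frac{1}{\alpha}k$. There is no gap to report.
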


\begin{proof}
\begin{sloppypar}
Let $\pi$ be a protocol $\epsilon$-computing $\XOR_k^n$.
By Lemma~\ref{lem:CC>=MIC} and Theorem \ref{thm:micxorn}, there exists a constant $\beta$ such that
${\CC(\pi) \ge \beta kn - k^2}$. Let $\alpha < \beta $ be a  constant. For $n\ge\frac{1}{\alpha} k$, we have
$k^2 \le \alpha k n$ and we get $\CC(\pi) \ge (\beta - \alpha) kn = \Omega(kn)$.
\end{sloppypar}

\end{proof}

\section{The function $\AND$}\label{sec:AND}

In this section we consider  an arbitrary $k$-party protocol, $\pi$, where each player has an input bit $x_i$, and
where $\pi$ has to compute the $\AND$ of all the input bits. 
We
prove a lower bound on a certain information theoretic measure  (that we define below) for $\pi$.
\ShortVersion
The proof makes use of the following input distribution.
\ShortVersionEnd
\LongVersion
The proof makes use of a certain input distribution that we will define below.  In the proof we use the
following notations.
Denote by $\overline{1}^t$ the all-$1$ bit-vector of length $t$.
Denote by $\overline{e}^t_{a_1,\dots,a_d}$ the vector obtained from $\overline{1}^t$ by changing the bit $1$ into
the bit $0$ at indexes $a_1,\dots,a_d$. 
To simplify notations, we sometimes omit the  superscript 
$t$ when $t=k$, and write $\overline{e}_{a_1,\dots,a_d}$ or $\overline{1}$.
We further use in the sequel the notation $\delta_{a,b}$ for the Kronecker delta, i.e.,
$\delta_{a,b}=1$ if $a=b$ and $0$ otherwise.
\LongVersionEnd

 \paragraphA{Input distribution.}
Consider the distribution $\mu$
defined as follows. Draw a bit 
$M\sim\Ber(\frac{2}{3},\frac{1}{3})$, and a uniformly random index $Z\in\Ni{1}{k}$. Assign $0$ to $X_Z$. 
If $M=0$, sample $X_{-Z}$ uniformly in $\{0,1\}^{k-1}$; if $M=1$, assign $1^{k-1}$ to $X_{-Z}$. 
We will also work  with the product distribution $\mu^n$. Our distribution is similar to the ones  of 
\cite{BEOPV,CM} in that it  leads to a high information cost (or similar measures) 
for the function $\AND_k$. 
The  distribution that we use has the property that the AND of any input in the support of $\mu$ is $0$.
 This allows us to prove lower bounds for the Disjointness function without the constraint that $k=\Omega(\log n)$ which 
was necessary in~\cite{BEOPV} (but not
in~\cite{CM}).

\LongVersion
Given a protocol $\pi$, let 
$\Pi_i[x_i,m,z]$ denote the distribution of
$\Pi_i$, when the input $X$ is sampled as follows:  $X\sim \mu$, 
conditioned on the fact that $X_i=x_i$, $M=m$ and $Z=z$.
\LongVersionEnd

\LongVersion
\subsection{Basic properties}
We first prove a number of basic properties of $\pi$, under the input distribution $\mu$. The proofs make use of the
general properties of protocols, proved in Section~\ref{sec:multi-party_protocols}.

\paragraphA{Rectangularity.}
We first prove the following lemma, which is an application of Lemma~\ref{lem:rectang} to the specific case of the distribution $\mu$ that we  defined above.
Its proof is given in the appendix.

\begin{lemma}\label{lem:rectangspe}
Let $\pi$ be a  private-coins protocol. 
Let $\calT$ denote the set of possible transcripts of  $\pi$, and for $i\in\Ni{1}{k}$ let $\calT_i$ denote the set of 
possible transcript of by player $i$ so that $\calT\subseteq \calT_1\times\cdots\times\calT_k$.
Then there  exists a function $c:\{0,1\}\times\Ni{1}{k}\times\calT \rightarrow [0,1]$, and for 
every $i\in\Ni{1}{k}$ there is a function $c_i:\{0,1\}\times\Ni{1}{k}\times\calT_i \rightarrow [0,1]$, such that 
$\forall~i\in\Ni{1}{k}$, $\forall~x'\in\{0,1\}$, $\forall~m\in\{0,1\}$, 
$\forall~z\in \Ni{1}{k}\setminus\{i\}$, $\forall~\tau=(\tau_1,\dots,\tau_k)\in\calT$,
$$\Pr[\Pi_i=\tau_i\mid X_i=x',M=m,Z=z]=q_i(x',\tau_i)c_i(m,z,\tau_i)~,$$
 and
$$\Pr[\Pi=\tau\mid X_i=x',M=m,Z=z]=q_i(x',\tau_i)c(m,z,\tau)~.$$
\end{lemma}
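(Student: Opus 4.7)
The plan is to combine Lemma~\ref{lem:rectang} with a conditional independence property that $\mu$ enjoys once the hidden variables $M, Z$ are revealed.

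\textbf{Step 1.} I would first invoke Lemma~\ref{lem:rectang} applied to $\pi$ with input space $\{0,1\}^k$. For every player $i$ this yields functions $q_i, q_{-i}, p_{-i}$ such that for every input $x$ and every transcript $\tau$,
\begin{equation*}
\Pr[\Pi_i(x)=\tau_i] = q_i(x_i,\tau_i)\, q_{-i}(x_{-i},\tau_i),\quad \Pr[\Pi(x)=\tau] = q_i(x_i,\tau_i)\, p_{-i}(x_{-i},\tau).
\end{equation*}

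\textbf{Step 2.} Next I would observe that for any $z\neq i$, under $\mu$ the random variables $X_i$ and $X_{-i}$ are independent conditionally on $(M, Z=z)$. This is a direct inspection of $\mu$: given $Z=z\neq i$, the coordinate $X_z$ (which belongs to $X_{-i}$) is always $0$, and the remaining coordinates $X_{-z}$ are either uniform i.i.d.\ bits (if $M=0$) or the all-ones string (if $M=1$); in both cases $X_i$ and $X_{-i,-z}$ are independent given $(M,Z)$, hence so are $X_i$ and $X_{-i}$. Write $P_{-i}(x_{-i}\mid m,z) := \Pr[X_{-i}=x_{-i}\mid M=m, Z=z]$.

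\textbf{Step 3.} Finally I would combine the two. Since the randomness of $\pi$ depends only on $X$ and on the internal random sources, we have $\Pi \perp (M,Z) \mid X$, and hence, for any $(x',m,z)$ with $z\neq i$ for which the conditioning event has positive probability,
\begin{align*}
\Pr[\Pi=\tau\mid X_i=x', M=m, Z=z]
&= \sum_{x_{-i}} \Pr[\Pi=\tau\mid X=(x',x_{-i})]\cdot \Pr[X_{-i}=x_{-i}\mid X_i=x', M=m, Z=z]\\
&= \sum_{x_{-i}} q_i(x',\tau_i)\, p_{-i}(x_{-i},\tau)\, P_{-i}(x_{-i}\mid m,z)\\
&= q_i(x',\tau_i)\, c(m,z,\tau),
\end{align*}
where $c(m,z,\tau) := \sum_{x_{-i}} p_{-i}(x_{-i},\tau)\, P_{-i}(x_{-i}\mid m,z)$ (set arbitrarily at the unique infeasible triple $x'=0, m=1, z\neq i$). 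The second equality uses Step 1 and the conditional independence of Step 2. The same computation, replacing $\Pi$ by $\Pi_i$ and $p_{-i}$ by $q_{-i}$, yields the other factorization with $c_i(m,z,\tau_i) := \sum_{x_{-i}} q_{-i}(x_{-i},\tau_i)\, P_{-i}(x_{-i}\mid m,z)$.

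\textbf{Main obstacle.} The argument is essentially a bookkeeping exercise and has no serious technical difficulty. The one point that has to be handled with care is the restriction $z\in\Ni{1}{k}\setminus\{i\}$: it is precisely this restriction that makes $X_i$ conditionally independent of $X_{-i}$ given $(M,Z)$, which is the whole content of Step 2 and the reason the specialized factorization follows from the general rectangularity.
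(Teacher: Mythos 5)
Your proposal is correct and matches the paper's own proof essentially step for step: both invoke Lemma~\ref{lem:rectang}, exploit that under $\mu$ the variables $X_i$ and $X_{-i}$ are independent conditioned on $(M,Z=z)$ with $z\neq i$, and then average $q_{-i}$ (resp.\ $p_{-i}$) against $\Pr[X_{-i}=\cdot\mid M=m,Z=z]$ to obtain exactly the same functions $c_i$ and $c$. Your explicit remarks about $\Pi\perp(M,Z)\mid X$ and the degenerate triple $x'=0,m=1$ are minor points the paper leaves implicit, and do not change the argument.
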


\paragraphA{Diagonal lemma.}
The following lemma is a version of Lemma \ref{lem:diag} adapted to our distribution. 
Its proof is given in the appendix.

\begin{lemma}\label{lem:diagspe}
Let $\pi$ be  a private-coins protocol.
For any $i,j\in\Ni{1}{k}$ with $i \neq j$, we have
$h^2(\Pi_i[0,0,j],\Pi_i[1,1,j]) \ge
\frac{1}{2}h^2(\Pi_i(\overline{e}_{i,j}),\Pi_i(\overline{e}_j))$.
\end{lemma}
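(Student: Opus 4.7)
The plan is to mirror the proof of the diagonal lemma (Lemma~\ref{lem:diag}), using Lemma~\ref{lem:rectangspe} in place of ordinary rectangularity. The key observation is that, with $j$ fixed, the conditional density $\Pr[\Pi_i=\tau_i\mid X_i=x',M=m,Z=j]=q_i(x',\tau_i)\,c_i(m,j,\tau_i)$ has a product form in the variable $x'$ versus $m$, exactly analogous to the $x_i$ vs.\ $x_{-i}$ factorization used in the diagonal lemma. This lets me introduce a ``swap'' distribution and run the standard Cauchy--Schwarz/AM--GM argument.

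Concretely, define the four non-negative functions on $\calT_i$:
\begin{align*}
P(\tau_i) &= q_i(0,\tau_i)\,c_i(0,j,\tau_i), &  Q(\tau_i) &= q_i(1,\tau_i)\,c_i(1,j,\tau_i),\\
A(\tau_i) &= q_i(0,\tau_i)\,c_i(1,j,\tau_i), &  B(\tau_i) &= q_i(1,\tau_i)\,c_i(0,j,\tau_i).
\end{align*}
By Lemma~\ref{lem:rectangspe}, $P$ is the density of $\Pi_i[0,0,j]$ and $Q$ that of $\Pi_i[1,1,j]$; and by the full-input rectangularity of $\Pi_i$ together with the fact that $\overline{e}_{i,j}$ and $\overline{e}_j$ agree on coordinate $-i$ with the $M{=}1,Z{=}j$ configuration (so $c_i(1,j,\tau_i)=q_{-i}(\cdot,\tau_i)$ evaluated on that vector), $A$ is the density of $\Pi_i(\overline{e}_{i,j})$ and $Q$ is also the density of $\Pi_i(\overline{e}_j)$. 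Finally $B$ is the density of the conditional distribution of $\Pi_i$ given $X_i=1, M=0, Z=j$, in particular a genuine probability distribution summing to~$1$.

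Now I carry out the diagonal-style step pointwise: for every $\tau_i$,
\begin{align*}
\sqrt{A(\tau_i)Q(\tau_i)}+\sqrt{P(\tau_i)B(\tau_i)}
 &= \bigl[c_i(1,j,\tau_i)+c_i(0,j,\tau_i)\bigr]\sqrt{q_i(0,\tau_i)\,q_i(1,\tau_i)}\\
 &\ge 2\sqrt{c_i(0,j,\tau_i)\,c_i(1,j,\tau_i)}\,\sqrt{q_i(0,\tau_i)\,q_i(1,\tau_i)}
 = 2\sqrt{P(\tau_i)Q(\tau_i)},
\end{align*}
by AM--GM on the bracketed sum. Summing over $\tau_i$ yields $BC(A,Q)+BC(P,B)\ge 2\,BC(P,Q)$ for the Bhattacharyya coefficients, and since all four of $P,Q,A,B$ are bona fide probability distributions we may use $h^2(\cdot,\cdot)=1-BC(\cdot,\cdot)$ to convert this into
\[
 2\,h^2(P,Q)\ \ge\ h^2(A,Q)+h^2(P,B)\ \ge\ h^2(A,Q),
\]
which is precisely the claimed inequality after substituting back $P=\Pi_i[0,0,j]$, $Q=\Pi_i[1,1,j]$, $A=\Pi_i(\overline{e}_{i,j})$, $Q=\Pi_i(\overline{e}_j)$.

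The only subtle point, which I consider the main (though still mild) obstacle, is correctly identifying the ``swap'' density $B$ as a legitimate probability distribution: this is what makes the AM--GM inequality yield a usable Hellinger bound. Once one sees that $B$ corresponds to the well-defined conditional law $\Pi_i[1,0,j]$, the rest reduces to the same pointwise inequality used in the proof of Lemma~\ref{lem:diag}.
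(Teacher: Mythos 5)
Your proof is correct and follows essentially the same route as the paper's: the same factorization from Lemma~\ref{lem:rectangspe}, the same pointwise AM--GM on the $c_i(0,j,\cdot)$ and $c_i(1,j,\cdot)$ factors, and the same identification of the two cross terms with the Bhattacharyya coefficients involving $\Pi_i[1,0,j]$ (your $B$) and the pair $(\Pi_i(\overline{e}_{i,j}),\Pi_i(\overline{e}_j))$ via the fact that the input is deterministic when $M=1,Z=j$, so $\Pi_i[1,1,j]=\Pi_i(\overline{e}_j)$. The only cosmetic difference is that the paper establishes $c_i(1,j,\tau_i)=q_{-i}(\overline{e}^{k-1}_j,\tau_i)$ only where $q_i(1,\tau_i)\neq 0$ and restricts the corresponding sum accordingly, whereas you assert the identification of $A$ outright; this is harmless because any $\tau_i$ where the two could differ contributes zero to the relevant Bhattacharyya sums.
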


\paragraphA{Localization.}
The following lemma formalizes the fact that if changing the input of a player changes the transcript of the protocol, then 
this change necessarily appears in the partial transcript of that player. For randomized protocols this change is observed and 
quantified by the Hellinger distance between the distributions of the transcripts.
The proof is given in the appendix.

\begin{lemma}\label{lem:localspe}
Let $\pi$ be a  private-coins protocol.
$\forall i\in\Ni{1}{k},~\forall j\in\Ni{1}{k}\setminus\{i\}$,
$$h(\Pi_{i}(\overline{e}_{i,j}),\Pi_{i}(\overline{e}_{j})) = h(\Pi(\overline{e}_{i,j}),\Pi(\overline{e}_{j}))~.$$
\end{lemma}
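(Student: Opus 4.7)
The plan is to establish the two inequalities separately. The easy direction, $h(\Pi_i(\overline{e}_{i,j}),\Pi_i(\overline{e}_j)) \le h(\Pi(\overline{e}_{i,j}),\Pi(\overline{e}_j))$, is immediate from the data-processing inequality for Hellinger distance, since $\Pi_i$ is the projection of the full transcript $\Pi$ onto its $i$-th coordinate (a deterministic post-processing).

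For the reverse inequality, the key observation is that the two inputs $x := \overline{e}_j$ and $y := \overline{e}_{i,j}$ differ \emph{only} in the $i$-th coordinate, so $x_{-i}=y_{-i}$. I would apply the rectangularity lemma (Lemma~\ref{lem:rectang}) to the private-coins protocol $\pi$ to obtain functions $q_i, q_{-i}, p_{-i}$ such that for every $\tau=(\tau_i,\tau_{-i})\in\calT$,
\[
\Pr[\Pi_i(x)=\tau_i] \;=\; q_i(x_i,\tau_i)\,q_{-i}(x_{-i},\tau_i),
\qquad \Pr[\Pi(x)=\tau] \;=\; q_i(x_i,\tau_i)\,p_{-i}(x_{-i},\tau),
\]
and analogously for $y$. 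Using the affinity formulation $1 - h^2(P,Q) = \sum_\omega \sqrt{P(\omega)Q(\omega)}$ and the fact that $x_{-i}=y_{-i}$, I would compute
\[
1 - h^2(\Pi(x),\Pi(y)) \;=\; \sum_{\tau_i}\sqrt{q_i(x_i,\tau_i)\,q_i(y_i,\tau_i)}\,\sum_{\tau_{-i}} p_{-i}(x_{-i},(\tau_i,\tau_{-i})).
\]

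The final step is to recognize that the inner sum $\sum_{\tau_{-i}} p_{-i}(x_{-i},(\tau_i,\tau_{-i}))$ is precisely $q_{-i}(x_{-i},\tau_i)$ (whenever $q_i(x_i,\tau_i)\neq 0$): this follows from marginalizing the rectangular expression for $\Pi(x)$ over $\tau_{-i}$ and comparing with the rectangular expression for $\Pi_i(x)$. Substituting back (and again using $x_{-i}=y_{-i}$) yields
\[
1 - h^2(\Pi(x),\Pi(y)) \;=\; \sum_{\tau_i}\sqrt{q_i(x_i,\tau_i)q_{-i}(x_{-i},\tau_i)\cdot q_i(y_i,\tau_i)q_{-i}(y_{-i},\tau_i)} \;=\; 1 - h^2(\Pi_i(x),\Pi_i(y)),
\]
giving the equality. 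The one subtlety to watch for is handling values of $\tau_i$ with $q_i(x_i,\tau_i)=0$, but such $\tau_i$ contribute nothing to either side of the identity, so they can be discarded without affecting the calculation. This combinatorial/marginalization step is where the peer-to-peer rectangularity from Lemma~\ref{lem:rectang} does all the work; the rest is bookkeeping.
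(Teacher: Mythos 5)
Your proposal is correct and matches the paper's own argument: both rest on the two rectangularity identities of Lemma~\ref{lem:rectang}, the affinity form $1-h^2(P,Q)=\sum_\omega\sqrt{P(\omega)Q(\omega)}$, the fact that $\overline{e}_{i,j}$ and $\overline{e}_j$ agree off coordinate $i$, and the marginalization identity $q_{-i}(x_{-i},\tau_i)=\sum_{\tau\,\mid\,\tau_i}p_{-i}(x_{-i},\tau)$ when $q_i(\cdot,\tau_i)\neq 0$, with the zero terms discarded exactly as in the paper. The only (harmless) difference is your preliminary two-inequality framing; the computation yields the equality directly, so the data-processing direction is redundant.
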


\LongVersionEnd

\subsection{Switched multi-party information cost of $\AND_k$}\label{subsec:SICAND}

We propose the following definition, which is an adaptation of the switched information cost of \cite{BEOPV}. 
We call it Switched Multi-party Information Cost ($\SIC$).
\begin{definition}
For a $k$-player protocol $\pi$ with inputs drawn from $\mu^n$ let 
$$\SIC_{\mu^n}(\pi)=\sum\limits_{i=1}^k \left(I(X_i ; \Pi_i \mid M Z) + I(M;\Pi_i \mid X_i Z)\right)~.$$
\end{definition}

Note that the notion of $\SIC$ is only defined with respect to the distribution $\mu^n$ that we  defined,
 and we may thus omit the distribution from the notation. 
 We note that in order to simplify the expressions we often consider the  public randomness as implicit
in the information theoretic expressions we use below. It can be materialized  either as part of the transcript or
 in the conditioning of the information theoretic expressions.

We can now prove the main result of this section.

\begin{theorem}\label{prop:SICAND}
For any fixed $0 \leq  \epsilon < \frac{1}{2}$,  for any protocol $\pi$ externally $\epsilon$-computing $\AND_k$, 
$${\SIC_\mu(\pi) = \Omega(k)}~.$$
\end{theorem}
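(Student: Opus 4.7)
The plan is to bound each term in the definition of $\SIC_\mu(\pi)$ from below by a Hellinger-distance expression, and then to combine these using the diagonal lemma (Lemma~\ref{lem:diagspe}), localization (Lemma~\ref{lem:localspe}), rectangularity (Lemma~\ref{lem:rectangspe}), and the error-induced Hellinger bound (Lemma~\ref{lem:HelErr}). First I would decompose by the value of $(M,Z)$. Under $\mu$, the variable $X_i$ is determined by $(M,Z)$ except when $Z=j\ne i$ and $M=0$, where $X_i$ is uniform on $\{0,1\}$; similarly, $M$ is uniform on $\{0,1\}$ after conditioning on $Z=j\ne i$ and $X_i=1$, both conditioning events having weight $\tfrac{2}{3k}$. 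Applying the standard information--Hellinger inequality $I(U;V)\ge c\,h^2(V_{U=0},V_{U=1})$ for uniform binary $U$ to these two cases, and discarding the remaining non-negative contributions (notably the $Z=i$ term in $I(M;\Pi_i\mid X_iZ)$), I would obtain
\[
\SIC_\mu(\pi)\;\gtrsim\;\frac{1}{k}\sum_{i\ne j}\Bigl(h^2(\Pi_i[0,0,j],\Pi_i[1,0,j])+h^2(\Pi_i[1,0,j],\Pi_i[1,1,j])\Bigr).
\]

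Second, for each pair $(i,j)$ the quadratic triangle inequality $h^2(A,C)\le 2(h^2(A,B)+h^2(B,C))$ yields
\[
h^2(\Pi_i[0,0,j],\Pi_i[1,0,j])+h^2(\Pi_i[1,0,j],\Pi_i[1,1,j])\;\ge\;\tfrac{1}{2}\,h^2(\Pi_i[0,0,j],\Pi_i[1,1,j]).
\]
I would then invoke the diagonal lemma (Lemma~\ref{lem:diagspe}) to bound the right-hand side below by $\tfrac{1}{4}h^2(\Pi_i(\overline{e}_{i,j}),\Pi_i(\overline{e}_j))$ and---since the inputs $\overline{e}_{i,j}$ and $\overline{e}_j$ differ only in coordinate $i$---apply localization (Lemma~\ref{lem:localspe}) to rewrite this as $\tfrac{1}{4}h^2(\Pi(\overline{e}_{i,j}),\Pi(\overline{e}_j))$. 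Combining gives
\[
\SIC_\mu(\pi)\;\gtrsim\;\frac{1}{k}\sum_{i\ne j}h^2(\Pi(\overline{e}_{i,j}),\Pi(\overline{e}_j)).
\]

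Third, the remaining step---showing that this double sum is $\Omega(k^2)$---is the main obstacle. The error bound (Lemma~\ref{lem:HelErr}) yields $h^2(\Pi(\overline{1}),\Pi(\overline{e}_j))\ge(1-2\epsilon)^2/2$ for every $j$, but since both $\overline{e}_{i,j}$ and $\overline{e}_j$ are inputs on which $\AND_k$ evaluates to $0$, no such bound applies directly to the summands. To bridge this gap I would use rectangularity (Lemma~\ref{lem:rectangspe}) to express each $h^2(\Pi(\overline{e}_{i,j}),\Pi(\overline{e}_j))$ and the benchmark $h^2(\Pi(\overline{e}_i),\Pi(\overline{1}))$ in the parallel product form $\tfrac{1}{2}\sum_{\tau_i}(\sqrt{q_i(0,\tau_i)}-\sqrt{q_i(1,\tau_i)})^2\,w(\tau_i)$, differing only in the ``context weight'' $w$ that the rectangular factorization at coordinate $i$ produces. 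A second application of rectangularity at coordinate $j$ then relates the context for the pair $(\overline{e}_{i,j},\overline{e}_j)$ to that for the benchmark $(\overline{e}_i,\overline{1})$ up to a $q_j$-factor, and averaging over $j\ne i$ propagates the constant lower bound on $h^2(\Pi(\overline{e}_i),\Pi(\overline{1}))$ into a lower bound of $\Omega(k)$ on $\sum_{j\ne i}h^2(\Pi(\overline{e}_{i,j}),\Pi(\overline{e}_j))$ for every $i$; summing over $i$ produces the required $\Omega(k^2)$. Carrying out this aggregation cleanly in the peer-to-peer model, where the transcript is finer-grained than the single blackboard of the coordinator model, is where the specific form of Lemma~\ref{lem:rectangspe} is essential and where the novel technical difficulty lies relative to the analogous coordinator-model argument of~\cite{BEOPV}.
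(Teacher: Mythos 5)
Your first two steps coincide with the paper's proof: conditioning on $(M,Z)$ and using Lemma~\ref{lem:ICandHel} to get the two Hellinger terms with weight $2/(3k)$ each, then the quadratic triangle inequality, Lemma~\ref{lem:diagspe} and Lemma~\ref{lem:localspe} to arrive at $\SIC_\mu(\pi)\gtrsim \frac1k\sum_{i\neq j}h^2(\Pi(\overline{e}_{i,j}),\Pi(\overline{e}_j))$ (after the standard reduction to private-coins protocols by averaging over the public coins, which you should state since Lemmas~\ref{lem:rectangspe}--\ref{lem:localspe} are only proved for private-coins protocols). The gap is in your third step, and it is not merely a matter of detail: the statement you aim for, that $\sum_{j\neq i}h^2(\Pi(\overline{e}_{i,j}),\Pi(\overline{e}_j))=\Omega(k)$ \emph{for every} $i$, is false. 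Consider the zero-error protocol in which players $2,\dots,k$ first compute the AND $a$ of their own bits along a ring, player $2$ then sends $a$ to player $1$, and player $1$ replies with $a\wedge x_1$. For $i=1$ and any $j\neq 1$, the inputs $\overline{e}_{1,j}$ and $\overline{e}_j$ produce identical transcripts (in both cases $a=0$ and player $1$ replies $0$), so every summand in the row $i=1$ vanishes. Consequently no ``context-weight comparison up to a $q_j$-factor'' between $h^2(\Pi(\overline{e}_{i,j}),\Pi(\overline{e}_j))$ and the benchmark $h^2(\Pi(\overline{e}_i),\Pi(\overline{1}))$ can exist pointwise in $i$: the former can be $0$ while the latter equals $1$.

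The missing idea is to exploit the symmetry of the double sum rather than bound it row by row. The paper groups the two ordered terms of each unordered pair $\{i,j\}$, namely $h^2(\Pi(\overline{e}_{i,j}),\Pi(\overline{e}_j))$ and $h^2(\Pi(\overline{e}_{i,j}),\Pi(\overline{e}_i))$, applies the triangle inequality for the Hellinger distance (with the usual loss of a factor $2$) to obtain $h^2(\Pi(\overline{e}_i),\Pi(\overline{e}_j))$, and only then applies the full-transcript diagonal lemma (Lemma~\ref{lem:diag}, i.e.\ a second cut-and-paste, this time at coordinate $j$) to get $h^2(\Pi(\overline{e}_i),\Pi(\overline{e}_j))\ge\frac12\,h^2(\Pi(\overline{e}_i),\Pi(\overline{1}))$; since $\AND_k(\overline{e}_i)\neq\AND_k(\overline{1})$, Lemma~\ref{lem:HelErr} now gives the constant $(1-2\epsilon)^2/2$ per unordered pair, hence $\Omega(k^2)$ in total and $\Omega(k)$ after the $1/k$ normalization. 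In the counterexample above this works because the vanishing term $h^2(\Pi(\overline{e}_{1,j}),\Pi(\overline{e}_j))$ is rescued by its partner $h^2(\Pi(\overline{e}_{1,j}),\Pi(\overline{e}_1))$, which is large. So your proposal is recoverable, but the bridge you need in step three is the pairing--triangle--second-diagonal argument, not a rectangularity-based comparison of the two contexts.
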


\begin{proof}
We prove below the claim for an arbitrary  private-coins protocol  $\pi$. The claim for general protocols
(i.e., with public randomness)
 then follows from averaging over all possible assignments to the public randomness.  
  
Observe that by the definition of $\mu$, for any $i\in\Ni{1}{k}$, if  $M=0$ and $Z=z\neq i $, then
$ X_i \sim \Ber(\frac{1}{2},\frac{1}{2})$. We therefore get by Lemma \ref{lem:ICandHel} that
\begin{equation}
\label{ineq:helinger_1}
\forall i\in\Ni{1}{k}, ~\forall z\in\Ni{1}{k}\setminus\{i\}, ~~
I(X_i;\Pi_i \mid M=0,Z=z) \ge h^2(\Pi_i[0,0,z],\Pi_i[1,0,z])~.
\end{equation}

Similarly, by the definition of $\mu$  
we have that  for any $i\in\Ni{1}{k}$,
 if
$X_i=1$ and $Z=z\neq i $, then $M\sim\Ber(\frac{1}{2},\frac{1}{2})$,  
and we get by Lemma \ref{lem:ICandHel} that 
\begin{equation}
\label{ineq:helinger_2}
\forall i\in\Ni{1}{k},~  \forall z\in\Ni{1}{k}\setminus\{i\}, ~~
I(M;\Pi_i \mid X_i=1,Z=z) \ge h^2(\Pi_i[1,0,z],\Pi_i[1,1,z])~.
\end{equation}

Let us now define $\SIC_i(\pi) = I(X_i ; \Pi_i \mid M Z) + I(M;\Pi_i \mid X_i Z)$, so that 
$\SIC(\pi) = \sum\limits_{i=1}^k \SIC_i(\pi)$.
We get
\begin{align*}
\SIC_i(\pi) &= I(X_i ; \Pi_i \mid M Z) + I(M;\Pi_i \mid X_i Z)\\
&= \Exp\limits_z \left[I(X_i ; \Pi_i \mid M, Z=z) + I(M;\Pi_i \mid X_i, Z=z)\right]\\
&\ge \frac{1}{k}\sum\limits_{z\neq i} \left[I(X_i ; \Pi_i \mid M, Z=z) + I(M;\Pi_i \mid X_i, Z=z)
\right]\\
&\ge \frac{1}{k}\sum\limits_{z\neq i}\left[\Pr[M=0 \mid Z=z]I(X_i ; \Pi_i \mid M=0, Z=z)\right.+\\  
&\left.~~~~~~~~~~~~\Pr[X_i=1 \mid Z=z] I(M;\Pi_i \mid X_i=1, Z=z) \right]~.
\end{align*}

By the definition of $\mu$, $\Pr[M=0 \mid Z=z] = \frac{2}{3}$ for any $z$. Also, for any $i\neq z$,
\begin{align*}
\Pr[X_i=1 \mid Z=z] &= \Pr[M=0 \mid Z=z]\Pr[X_i=1 \mid M=0, Z=z]\\
&\,+\Pr[M=1 \mid Z=z]\Pr[X_i=1 \mid M=1, Z=z]\\
&=\frac{2}{3}\cdot\frac{1}{2} + \frac{1}{3}\cdot 1 = \frac{2}{3}~.
\end{align*}

Thus, using Inequalities~(\ref{ineq:helinger_1}) and~(\ref{ineq:helinger_2}), we have
\begin{align*}
\SIC_i(\pi) &\ge \frac{1}{k}\sum\limits_{z\neq i}\left[\frac{2}{3} h^2(\Pi_i[0,0,z],\Pi_i[1,0,z]) + 
 \frac{2}{3} h^2(\Pi_i[1,0,z],\Pi_i[1,1,z]) \right]\\
&\ge\frac{1}{3k}\sum\limits_{z\neq i}\left[ h(\Pi_i[0,0,z],\Pi_i[1,0,z]) + h(\Pi_i[1,0,z],\Pi_i[1,1,z])\right]^2\\
&\ge\frac{1}{3k}\sum\limits_{z\neq i}h^2(\Pi_i[0,0,z],\Pi_i[1,1,z]) \text{~~~(by the triangular inequality).}
\end{align*}

We have
\begin{align*}
\SIC(\pi)& = \sum\limits_{i=1}^k \SIC_i(\pi)\\
&\ge\frac{1}{3k}\sum\limits_{i,z \mid i\neq z}h^2(\Pi_i[0,0,z],\Pi_i[1,1,z])\\
&\ge\frac{1}{3k}\sum\limits_{\{i,z\}}
[h^2(\Pi_i[0,0,z],\Pi_i[1,1,z]) + h^2(\Pi_z[0,0,i],\Pi_z[1,1,i])]\\
&\ge \frac{1}{6k}\sum\limits_{\{i,z\}} 
[h^2(\Pi_{i}(\overline{e}_{i,z}),\Pi_{i}(\overline{e}_{z})) +
h^2(\Pi_{z}(\overline{e}_{i,z}),\Pi_{z}(\overline{e}_{i}))] \text{~~~(by Lemma \ref{lem:diagspe})}\\
&\ge \frac{1}{6k}\sum\limits_{\{i,z\}} 
[h^2(\Pi(\overline{e}_{i,z}),\Pi(\overline{e}_{z})) +
h^2(\Pi(\overline{e}_{i,z}),\Pi(\overline{e}_{i}))]\text{~~~(by Lemma \ref{lem:localspe})}\\
&\ge \frac{1}{12k}\sum\limits_{\{i,z\}} 
[h(\Pi(\overline{e}_{i,z}),\Pi(\overline{e}_{z})) +
h(\Pi(\overline{e}_{i,z}),\Pi(\overline{e}_{i}))]^2\\
&\ge \frac{1}{12k}\sum\limits_{\{i,z\}} 
h^2(\Pi(\overline{e}_{i}),\Pi(\overline{e}_{z}))\text{~~~(by the triangular inequality)}\\
&\ge \frac{1}{24k}\sum\limits_{\{i,z\}} 
h^2(\Pi(\overline{e}_{i}),\Pi(\overline{1}))
\text{~~(by Lemma \ref{lem:diag}, omitting part of the right hand side term)}\\
&\ge \frac{1}{24k}\sum\limits_{\{i,z\}} \frac{(1-2\epsilon)^2}{2}\text{~~~(by Lemma \ref{lem:HelErr})}\\
&\ge \frac{(k-1)(1-2\epsilon)^2}{96} = \Omega(k)~.
\end{align*}
\end{proof}

\section{The function Disjointness}\label{sec:Disj}

In the $k$ players $n$-bit disjointness function $\Disj_k^n$, every player $i\in\Ni{1}{k}$ has an $n$-bit string 
$(x_i^{\ell})_{\ell\in\Ni{1}{n}}$, and the players have to output $1$ if and only if there exists a coordinate $\ell$ where 
all players have the bit  $1$. Formally, 
$\Disj_k^n(x) = \bigvee_{\ell=1}^n \bigwedge_{i=1}^k x_i^{\ell}$.

\subsection{Switched multi-party information cost of $\Disj_k^n$}\label{subsec:SICDISJ}

We first prove a direct-sum-type property which  allows us to make the link between the 
functions $\AND_k$ and $\Disj_k^n$. A similar property was proved
in~\cite{BEOPV} in the coordinator model; our peer-to-peer model requires a different,  more 
 involved, construction, since we do not have the coordinator, and moreover no player can act as the coordinator since
  it would get too  much information.  
  Since  $\Disj_k^n$ is  the disjunction of $n$ $\AND_k$ functions, we 
  analyze the switched multi-party information cost of $\Disj_k^n$ using the  distribution $\mu^n$.

\begin{lemma}\label{lem:DSSIC}
Let $k>3$. For any protocol $\pi$ externally $\epsilon$-computing $\Disj_k^n$, 
there exists a protocol $\pi'$ externally $\epsilon$-computing $\AND_k$ such that
$$\SIC_{\mu^n}(\pi) \ge n \cdot \SIC_{\mu}(\pi')~.$$
\end{lemma}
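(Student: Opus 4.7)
I would prove the lemma by an information-theoretic embedding of a single copy of $\AND_k$ into one of the $n$ coordinates of $\Disj_k^n$, analogous in spirit to the direct-sum proof for $\MIC$ (Theorem~\ref{thm:dsmic}) but adapted to $\SIC$ and to the fact that $\mu$ is not a product distribution across players. The protocol $\pi'$, on input $Y\sim\mu$, simulates a run of $\pi$ on an input $X\sim\mu^n$ in which a uniformly chosen coordinate $L\in\Ni{1}{n}$ carries $Y$ and every other coordinate is drawn independently from $\mu$. Concretely, the players jointly and publicly sample $L$ together with the distribution parameters $(M^d,Z^d)$ for each $d\neq L$; conditional on those public parameters, each player $j$ locally produces its own bit $X_j^d$ according to the conditional marginal of $\mu$, which is deterministic ($0$ if $j=Z^d$, and $1$ if $M^d=1$ and $j\neq Z^d$) except in the remaining case $M^d=0$, $j\neq Z^d$, where $X_j^d\sim\Ber(1/2,1/2)$ is drawn from its private randomness. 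Player $j$ then sets $X_j^L=Y_j$ and runs $\pi$ on $X$. No communication is added beyond that of $\pi$, so $\Pi_i'=\Pi_i$, and the external decoder of $\pi'$ is taken to be the external decoder $\theta$ of $\pi$.

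For correctness, every auxiliary coordinate $X^d$, $d\neq L$, lies in the support of $\mu$ and therefore satisfies $\AND_k(X^d)=0$, whence $\Disj_k^n(X)=\AND_k(X^L)=\AND_k(Y)$ pointwise for every $Y\in\{0,1\}^k$. Hence the guarantee of $\theta$ on $\pi$ transfers verbatim, and $\pi'$ externally $\epsilon$-computes $\AND_k$.

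For the information-theoretic inequality it suffices to prove, for each player $i$, that $I(X_i;\Pi_i\mid MZ)_\pi\ge n\cdot I(Y_i;\Pi_i'\mid M'Z')_{\pi'}$ together with the symmetric inequality $I(M;\Pi_i\mid X_iZ)_\pi\ge n\cdot I(M';\Pi_i'\mid Y_iZ')_{\pi'}$, and then sum over $i$. In the probability space of $\pi'$, identify $(Y_i,M',Z')$ with $(X_i^L,M^L,Z^L)$. Because the marginal of $\mu$ is the same at every coordinate, $(X_i^L,M^L,Z^L)$ is independent of $L$, so adding $L$ to the conditioning cannot decrease the mutual information, yielding $I(Y_i;\Pi_i'\mid M'Z')\le I(X_i^L;\Pi_i\mid M^LZ^LL)=\tfrac{1}{n}\sum_\ell I(X_i^\ell;\Pi_i\mid M^\ell Z^\ell,L=\ell)$. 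Conditional on $(M^\ell,Z^\ell,L=\ell)$, the tuple $(M^{-\ell},Z^{-\ell},X_i^{<\ell})$ is independent of $X_i^\ell$ by the product structure of $\mu^n$ across coordinates, so adding these variables to the conditioning again does not decrease the MI. Furthermore, the joint distribution of $(X,\Pi_i)$ given $L=\ell$ in $\pi'$ coincides with that of $\pi$ on input $\mu^n$, so each summand is bounded by $I(X_i^\ell;\Pi_i\mid MZ\,X_i^{<\ell})_\pi$. The chain rule then yields $\sum_\ell I(X_i^\ell;\Pi_i\mid MZ\,X_i^{<\ell})_\pi=I(X_i;\Pi_i\mid MZ)_\pi$, as desired. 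The second inequality follows by the same chain of steps with the roles of $X_i$ and $M$ interchanged, now using that $M^L$ is independent of $L$ conditional on $(X_i^L,Z^L)$.

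The principal obstacle is carrying out the embedding in the peer-to-peer model without a coordinator. In the coordinator model used in~\cite{BEOPV}, the coordinator can centrally draw the $n-1$ auxiliary coordinates and privately forward to each player only the single bit it needs; here this step must instead be realized by the players themselves without any communication, since any exchanged bits would contaminate $\Pi_i'$ and destroy the identity $\Pi_i'=\Pi_i$ on which the analysis relies. The key enabling observation is that, conditional on the publicly shared $(M^d,Z^d)$, the entries $X_j^d$ factor into independent (and mostly deterministic) distributions across players, so each player can generate its own bit locally from its private randomness. This distributed-sampling ingredient replaces the coordinator's broadcast-and-distribute step and is the secret-sharing-flavoured device alluded to in the introduction; the remainder of the direct-sum argument is otherwise standard.
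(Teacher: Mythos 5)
Your construction and correctness argument are fine, and the first family of terms, $\sum_i I(X_i;\Pi_i\mid MZ)$, does go through essentially as you describe. The genuine gap is in your last step, where you dispose of the second family of terms ``by the same chain of steps with the roles of $X_i$ and $M$ interchanged''. It is not the same situation. Because you sample \emph{all} of $(M^d,Z^d)_{d\neq L}$ publicly, the vector $M^{-L}$ is part of every player's transcript $\Pi'_i$. When you expand $I(N;\Pi'_i\mid Y_iS)$ and condition on $L=\ell$, the per-coordinate quantity you are left with is $I(M^\ell;\Pi_i\mid X_i^\ell Z M^{-\ell})$, i.e.\ the conditioning contains the $M$-values of \emph{all} other coordinates, on both sides of $\ell$. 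The target $I(M;\Pi_i\mid X_iZ)$ decomposes by the chain rule only into one-sided terms $\sum_\ell I(M^\ell;\Pi_i\mid X_iZM^{<\ell})$ (or the reverse order), and there is no general inequality bounding $\sum_\ell I(M^\ell;\Pi_i\mid X_iZM^{-\ell})$ by $I(M;\Pi_i\mid X_iZ)$: leave-one-out informations can sum to far more than the joint information (think of a transcript containing the parity over $\ell$ of near-deterministic indicators of the events $M^\ell=0$; given all other coordinates it essentially reveals the remaining $M^\ell$, while its joint information is at most one bit). So the inequality you need for the second terms is not established, and this is exactly where the ``no extra communication'' shortcut breaks.

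This is precisely the difficulty the paper's proof is built around, and it explains both its extra machinery and the hypothesis $k>3$, which your argument never uses. In the paper, the auxiliary $M^t$ are \emph{not} public: players $1,2$ learn only $M^{<L}$ (player $1$ samples it and sends it to player $2$), players $3,\dots,k$ learn only $M^{>L}$, and the bits $X_j^t$ of the ``other side'' are delivered to player $j$ as two XOR-shares (from players $1$ and $2$, resp.\ $3$ and $4$), so that player $j$ obtains its bit while its transcript contains none of the corresponding $M$-values. As a result $\Pi'_i$ can be rewritten, for the purposes of the mutual-information expressions, as $Z^{-L}M^{<L}X_i^{>L}\Pi_i$ for $i\le 2$ and $Z^{-L}M^{>L}X_i^{<L}\Pi_i$ for $i\ge 3$; the per-coordinate terms are then conditioned only on $M^{<\ell}$ (resp.\ $M^{>\ell}$) and only on $X_i^{>\ell}$ (resp.\ $X_i^{<\ell}$), and the two groups telescope via the chain rule in opposite directions, which is what yields $\SIC_\mu(\pi')\le\frac{1}{n}\SIC_{\mu^n}(\pi)$. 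Your observation that the $X_j^d$ are conditionally independent across players given $(M^d,Z^d)$ is true, but insufficient: publishing $M^{-L}$ to enable local sampling is exactly what contaminates the transcripts. A smaller, repairable slip: your intermediate inequality $I(Y_i;\Pi_i'\mid M'Z')\le I(X_i^L;\Pi_i\mid M^LZ^LL)$ silently drops the public randomness $M^{-L}Z^{-L}$ from the transcript; for the first terms this is harmless after a chain-rule step, but for the second terms the dropped variables are precisely the problem.
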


\begin{proof}
Based on an arbitrary protocol $\pi$ for $\Disj_k^n$, we define a protocol $\pi'$ for $\AND_k$, and then analyze 
$\SIC_{\mu^n}(\pi)$  and  $\SIC_{\mu}(\pi')$.
Let $u \in \{0,1\}^k$ be the input to $\pi'$ such that $u_i$ is given to player $i$.
We note that we cannot use a protocol similar to the one used in~\cite{BEOPV} since in the peer-to-peer setting one
does not have a coordinator that can sample the inputs for the player. We thus need to sample the inputs
in a distributed way, while keeping the information complexity under control using  classic secret sharing techniques.
The protocol $\pi'$ is defined as follows. 

\begin{enumerate}
\item
The players first sample publicly an index $L$ uniformly in $\Ni{1}{n}$, and then sample 
publicly $Z^t$, for ${t\in\Ni{1}{n}\setminus\{L\}}$, independently and uniformly in $\Ni{1}{k}$.
\item
They then proceed to sample $M^t$, for ${t\in\Ni{1}{n}\setminus\{L\}}$, as follows.
The set of players is partitioned  into two subsets, $\{1,2\}$ and $\{3,\dots,k\}$.
Player $1$ samples $M^1\dots M^{L-1}$ and sends the sampled values to player $2$
(player $3$ samples $M^{L+1}\dots M^n$, see below).
\item
Then Player $1$ samples $X_1^1\dots X_1^{L-1}$ according to the distribution $\mu$, 
and player $2$ samples $X_2^1\dots X_2^{L-1}$, according to the distribution $\mu$. 
Observe that they can do this as they know $M^1,\dots,M^{L-1},Z^1,\dots,Z^{L-1}$.
\item
Players $1$ and $2$ then apply the following procedure to communicate $X_j^t$ to player $j$, for $j>2$ and $t<L$:
Player $1$ sends a bit $p_j^t$ to player $j$,  and sends a bit  $v_j^t$ to player $2$. Player $2$  then sends a bit $q_j^t$ to player $j$.
 Player $j$  then defines
$X_j^t=p_j^t\oplus q_j^t$. The bits $p_j^t$, $q_j^t$ and  $v _j^t$ are generated in the following way.
\begin{itemize}
\item If $Z^t=j$ player $1$ privately samples a random bit $v_j^t$. It then sets $p_j^t=v_j^t$. Player $2$
sets $q_j^t=v_j^t$. Player $j$ thus defines $X_j^t=0$.
\item If $Z^t\neq j$ and $M^t=0$, player $1$ privately samples two independent random  bits $p_j^t$ and $v_j^t$.
Player $2$ privately samples a random bit $q_j^t$. The bit $X_j^t$ defined by player $j$ is in this case a uniform random bit.
Note that it is not necessary for the correctness of the protocol that bit $v_j^t$ is sent to Player 2 in this case; it is sent here
only to make our notations simpler.
\item If $Z^t\neq j$ and $M^t=1$, player $1$ privately samples a random bit $v_j^t$. It then sets $p_j^t=v_j^t$. Player $2$  defines $q_j^t=v_j^t\oplus 1$. Player $j$ thus defines $X_j^t=1$.
\end{itemize}
\item
Player $3$ samples $M^{L+1}\dots M^n$ and sends the sampled values to players $4$ to $k$. Every player 
$i\ge 3$ privately samples $X_i^{L+1}\dots X_i^n$.
\item
Players $3$ and $4$ then apply the same procedure as players $1$ and $2$, in order 
to communicate $X_j^t$ to player $j$, for $j\le 2$ and $t>L$. We denote by $p_1^t$ and by $p_2^t$ the bits sent
by player $3$ to player $1$ and to player $2$, respectively; by $q_1^t$ and by $q_2^t$ the bits sent by player $4$ to 
player $1$ and to player $2$, respectively; and by $v_1^t$ and by $v_2^t$ the bits sent by player $3$ to player $4$.
\item 
\label{step:prot_pi}
Now all the players run protocol $\pi$, on the input composed of (1) the values defined above for 
$x_i^t$, $i \in \Ni{1}{k}$, $t \in \Ni{1}{n} \setminus \{ L \}$ , and (2) $x_i^{L}=u_i$, for $i \in \Ni{1}{k}$.
\item The output of the protocol $\pi'$ is the output of the protocol $\pi$.
Observe that one can define the function needed as in the definition of an externally $\epsilon$-computing protocol, 
since the transcript of $\pi'$ is composed of a fixed-length initial portion and then the transcript of $\pi$.
\end{enumerate}

First observe that if $\pi$ computes $\Disj_k^n$ with error $\epsilon$, then $\pi'$ computes $\AND_k$ with error $\epsilon$,
and this is regardless of the values of the random bits used in the construction of  the input to $\pi$
(this property of the distribution of the input to $\pi$ is called \textit{collapsing on  coordinate $L$} in, e.g.,  ~\cite{BEOPV}).

Now observe that if the input to protocol $\pi'$, denote it $U$, is distributed according to $\mu$ (as defined above)
 then the definition of $\pi'$ guarantees that the input to protocol $\pi$,  $X$, is distributed 
 according to $\mu^n$.
Using the notation we use for $\mu$ we can write that  if  $(U,N,S) \sim\mu$ then $(X,M,Z)\sim\mu^n$.

We now give an upper bound on $\SIC_{\mu}(\pi')$ in terms of $\SIC_{\mu^n}(\pi)$.
To this end we first express the transcripts of protocol $\pi'$, $\Pi'_i$, $1 \leq i \leq k$, in terms of the transcripts 
$(\Pi_i)$ of the protocol $\pi$, run in Step~\ref{step:prot_pi}.

Let us take player $2$ and express $\Pi'_2$ as a 
 function of $\Pi_2$. 
 Taking into account the preliminary sampling procedure of protocol $\pi'$, we can write $\Pi'_2$ in four parts.
\begin{enumerate}
\item The values which are a function of the public randomness  used by 
$\pi'$: $L,Z^{-L}$ (for simplicity we include 
 the sampled values and not the random bits).
\item
\begin{itemize}
\item Read by player $2$ (and sent by player $1$), $M^{<L}$.
\item  Read by player $2$ (and sent by player $1$), all the $v_j^t$ for $j>2, t<L$ (denoted below as $v_{>2}^{<L}$).
\item  Sent by player $2$, all the $q_j^{t}$, for $j>2, t<L$ (denoted below as $q_{>2}^{<L}$).
\end{itemize}
\item Player $2$ also receives $p_2^{L+1}\dots p_2^n,q_2^{L+1}\dots q_2^n$ from players $3$ and $4$ (denoted below 
as $p_2^{>L}$ and $q_2^{>L}$).
\item The last part is the transcript  of player $2$ when running $\pi$.
\end{enumerate}

Thus, the transcript $\Pi'_2$ can be written as $L Z^{-L}M^{<L} v_{>2}^{<L}  q_{>2}^{<L}p_2^{>L}q_2^{>L}\Pi_2$. However, 
in the manipulations of $\SIC$ we can write $\Pi'_2$ also as $Z^{-L}M^{<L}X_2^{>L}\Pi_2$. This is because
\begin{align*}
  I(U_2;\Pi'_2 \mid NS) + I(N;\Pi'_2 \mid U_2 S) 
    & =      I(U_2; L Z^{-L}M^{<L} v_{>2}^{<L}  q_{>2}^{<L}p_2^{>L}q_2^{>L}\Pi_2\mid NS)\,+\\
    &~~~~I(N; L Z^{-L}M^{<L} v_{>2}^{<L}  q_{>2}^{<L}p_2^{>L}q_2^{>L}\Pi_2\mid U_2 S) \\ 
    &=   I(U_2; L Z^{-L}M^{<L} v_{>2}^{<L}  q_{>2}^{<L}X_2^{>L}\Pi_2\mid NS)\,+\\
    &~~~~I(N; L Z^{-L}M^{<L} v_{>2}^{<L}  q_{>2}^{<L}X_2^{>L}\Pi_2\mid U_2 S) \\ 
    &  =   I(U_2;Z^{-L}M^{<L}X_2^{>L}\Pi_2 \mid NS) 
           +   I(U_2;L  v_{>2}^{<L}  q_{>2}^{<L} \mid NSZ^{-L}M^{<L}X_2^{>L}\Pi_2)\,+\\ 
    &~~~~I(N; Z^{-L}M^{<L}X_2^{>L}\Pi_2 \mid U_2 S) 
             +    I(N;  L  v_{>2}^{<L}  q_{>2}^{<L} \mid U_2 SZ^{-L}M^{<L}X_2^{>L}\Pi_2 )  \\  
    & =  I(U_2; Z^{-L}M^{<L}X_2^{>L}\Pi_2 \mid NS) + I(N;Z^{-L}M^{<L}X_2^{>L}\Pi_2 \mid U_2 S) ~,
\end{align*}
where the second equality follows from the fact that the distribution of $p_2^{t}q_2^{t}$, for all
$t > L$, is uniform for $p_2^{t}\oplus q_2^{t}=x_2^{t}$ and independent 
of $U_2$ (resp., of N), conditioned on  $X_2^{>L}$, the rest of the 
transcript $\Pi'_2$,  and $S$ (resp., $U_2$); the third equality follows from the chain 
rule; and the last equality follows from the fact that $I(U_2;L  v_{>2}^{<L}  q_{>2}^{<L} \mid NSZ^{-L}M^{<L}X_2^{>L}\Pi_2)=0$
 and ${I(N;  L  v_{>2}^{<L}  q_{>2}^{<L} \mid U_2 SZ^{-L}M^{<L}X_2^{>L}\Pi_2 ) =0}$.
These last two equations follow from
the fact that $Lv_{>2}^{<L}  q_{>2}^{<L}$ is independent of $U_2$ (resp.,  of $N$), even conditioned on $SZ^{-L}M^{<L}X_2^{>L}\Pi_2$ and on $N$ (resp., on $U_2$). 
\adifuture{the last argument should be improved. Could we write for ourselves a full proof ?}

By similar argument we can write, in the manipulations of $\SIC$,  $\Pi'_1$ as $Z^{-L}M^{<L}X_1^{>L}\Pi_1$, 
and for $i\ge 3$, $\Pi'_i$ as $Z^{-L}M^{>L}X_i^{<L}\Pi_i$.

\vspace{0.25cm}
We have
\begin{align*}
\SIC_\mu(\pi') &= \sum\limits_{i=1}^k \left(I(U_i;\Pi'_i \mid NS) + I(N;\Pi'_i \mid U_i S) \right)\\
&=\Exp_{\ell} \left[
\sum\limits_{i=1}^2\left(I(X_i^\ell ; Z^{-\ell}M^{<\ell}X_i^{>\ell}\Pi_i \mid M^\ell Z^\ell) + 
I(M^\ell;Z^{-\ell}M^{<\ell}X_i^{>\ell}\Pi_i \mid X_i^\ell Z^\ell)\right) \right.\\
&~~~~+\left.\sum\limits_{i=3}^k\left(I(X_i^\ell ; Z^{-\ell}M^{>\ell}X_i^{<\ell}\Pi_i \mid M^\ell Z^\ell )
+I(M^\ell;Z^{-\ell}M^{>\ell}X_i^{<\ell}\Pi_i \mid X_i^\ell Z^\ell)\right)
\vphantom{\sum\limits_{i=1}^2}\right]\\
&=\Exp_\ell \left[
\sum\limits_{i=1}^2\left(I(X_i^\ell ; \Pi_i \mid X_i^{>\ell} M^{\le \ell} Z) + 
I(M^\ell;\Pi_i \mid X_i^{\ge \ell} M^{<\ell} Z)\right) \right.\\
&~~~~+\left.\sum\limits_{i=3}^k\left(I(X_i^\ell ; \Pi_i \mid X_i^{<\ell} M^{\ge \ell} Z)
+I(M^\ell;\Pi_i \mid X_i^{\le \ell} M^{>\ell} Z)\right)
\vphantom{\sum\limits_{i=1}^2}\right]~.
\end{align*}

Now, applying Lemma \ref{lem:Bra2.10}, we have that for any $\ell$
\begin{align*}
I(X_i^\ell ; \Pi_i \mid X_i^{>\ell} M^{\le \ell} Z) &\le I(X_i^\ell ; \Pi_i \mid X_i^{>\ell} M Z) ~~
\text{ (since }I(X_i^\ell ; M^{>\ell} \mid X_i^{>\ell} M^{\le \ell} Z)=0\text{),}\\
I(M^\ell;\Pi_i \mid X_i^{\ge \ell} M^{<\ell} Z) &\le I(M^\ell;\Pi_i \mid X_i M^{<\ell} Z) ~~
\text{ (since }I(M^\ell;X_i^{<\ell} \mid X_i^{\ge \ell} M^{<\ell} Z)=0\text{),}\\
I(X_i^\ell ; \Pi_i \mid X_i^{<\ell} M^{\ge \ell} Z) &\le I(X_i^\ell ; \Pi_i \mid X_i^{<\ell} M Z) ~~
\text{ (since }I(X_i^\ell ; M^{<\ell} \mid X_i^{<\ell} M^{\ge \ell} Z)=0\text{),}\\
I(M^\ell;\Pi_i \mid X_i^{\le \ell} M^{>\ell} Z) &\le I(M^\ell;\Pi_i \mid X_i M^{>\ell} Z) ~~
\text{ (since }I(M^\ell;X_i^{>\ell} \mid X_i^{\le \ell} M^{>\ell} Z)=0\text{).}
\end{align*}

Thus
\begin{align*}
\SIC_\mu(\pi') &\le \Exp_\ell \left[
\sum\limits_{i=1}^2\left(I(X_i^\ell ; \Pi_i \mid X_i^{>\ell} M Z) +
I(M^\ell;\Pi_i \mid X_i M^{<\ell} Z)\right) \right.\\
&~~~+\left.\sum\limits_{i=3}^k\left(I(X_i^\ell ; \Pi_i \mid X_i^{<\ell} M Z)
+I(M^\ell;\Pi_i \mid X_i M^{>\ell} Z)\right)
\vphantom{\sum\limits_{i=1}^2}\right]\\
&\le \frac{1}{n}\sum\limits_{\ell=1}^n \left[
\sum\limits_{i=1}^2\left(I(X_i^\ell ; \Pi_i \mid X_i^{>\ell} M Z) + 
I(M^\ell;\Pi_i \mid X_i M^{<\ell} Z)\right) \right.\\
&~~~+\left.\sum\limits_{i=3}^k\left(I(X_i^\ell ; \Pi_i \mid X_i^{<\ell} M Z)
+I(M^\ell;\Pi_i \mid X_i M^{>\ell} Z)\right)
\vphantom{\frac{1}{n}}\right]\\
&\le \frac{1}{n}\left[
\sum\limits_{i=1}^2\left(\sum\limits_{\ell=n}^1 I(X_i^\ell ; \Pi_i \mid X_i^{>\ell} M Z) + 
\sum\limits_{\ell=1}^n I(M^\ell;\Pi_i \mid X_i M^{<\ell} Z)\right) \right.\\
&~~~~~+\left.\sum\limits_{i=3}^k\left(\sum\limits_{\ell=1}^n I(X_i^\ell ; \Pi_i \mid X_i^{<\ell} M Z)
+\sum\limits_{\ell=n}^1 I(M^\ell;\Pi_i \mid X_i M^{>\ell} Z)\right)
\vphantom{\frac{1}{n}}\right]\\
&\le \frac{1}{n}\left[
\sum\limits_{i=1}^2\left( I(X_i ; \Pi_i \mid M Z) + 
I(M;\Pi_i \mid X_i Z)\right) \right.\\
&~~~~~+\left.\sum\limits_{i=3}^k\left(I(X_i ; \Pi_i \mid M Z)
+ I(M;\Pi_i \mid X_i Z)\right)
\vphantom{\frac{1}{n}}\right]\\
&\le \frac{1}{n}\sum\limits_{i=1}^k\left( I(X_i ; \Pi_i \mid M Z) + 
I(M;\Pi_i \mid X_i Z)\right)\\
&\le \frac{1}{n} \SIC_{\mu^n}(\pi)~.
\end{align*}
\end{proof}

Coupled with the lower bound on $\SIC(\pi')$ for any protocol $\pi'$  that computes $\AND_k$ 
(Section \ref{sec:AND}), the above lemma gives us a lower bound on $\SIC(\pi)$ for any protocol that computes  the function $\Disj_k^n\,$:

\begin{theorem}\label{thm:SIC>=kn}
Let $k>3$.
Given any fixed  $0 \leq \epsilon < \frac{1}{2}$,  for any protocol $\pi$ externally $\epsilon$-computing $\Disj_k^n$ it holds that 
$$\SIC_{\mu^n}(\pi) = \Omega(kn)~.$$
\end{theorem}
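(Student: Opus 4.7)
The plan is to combine the direct-sum-type reduction of Lemma~\ref{lem:DSSIC} with the single-copy lower bound of Theorem~\ref{prop:SICAND}, in a more or less mechanical way. This is the standard ``hardness amplification via direct sum'' paradigm: since $\Disj_k^n=\bigvee_{\ell=1}^n \AND_k$ and the input distribution $\mu$ is collapsing for $\AND_k$ on coordinate $L$ (i.e., every sample in the support of $\mu$ yields $\AND_k = 0$, so the disjunction is determined by the single ``hard'' coordinate $L$), one expects an $\Omega(n)$-factor amplification from the $\AND_k$ bound.

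More concretely, I would start from an arbitrary protocol $\pi$ that externally $\epsilon$-computes $\Disj_k^n$, and invoke Lemma~\ref{lem:DSSIC} (which requires $k>3$), obtaining a protocol $\pi'$ that externally $\epsilon$-computes $\AND_k$ and satisfies
\[
\SIC_{\mu^n}(\pi) \;\geq\; n\cdot \SIC_{\mu}(\pi').
\]
Then I would apply Theorem~\ref{prop:SICAND} to $\pi'$, which gives $\SIC_{\mu}(\pi') = \Omega(k)$ (the constant depends only on $\epsilon$, which is fixed). Chaining these two inequalities immediately yields $\SIC_{\mu^n}(\pi)=\Omega(kn)$, as required.

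There is no real obstacle here, since all the work has already been done: Lemma~\ref{lem:DSSIC} provides the reduction from the $n$-fold Disjointness instance to a single $\AND_k$ instance while preserving the switched multi-party information cost up to a factor of $n$, and Theorem~\ref{prop:SICAND} supplies the $\Omega(k)$ bound for the single instance. The only things worth checking are that the hypotheses of both results line up: $\pi'$ externally $\epsilon$-computes $\AND_k$ (this is part of the conclusion of Lemma~\ref{lem:DSSIC}), the input distribution used for $\pi'$ is exactly $\mu$ (as required by Theorem~\ref{prop:SICAND}), and the constraint $k>3$ from Lemma~\ref{lem:DSSIC} is inherited by the statement. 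Absorbing constants into the $\Omega(\cdot)$ then finishes the proof.
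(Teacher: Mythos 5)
Your proposal is correct and coincides with the paper's argument: Theorem~\ref{thm:SIC>=kn} is stated there as an immediate consequence of chaining Lemma~\ref{lem:DSSIC} with Theorem~\ref{prop:SICAND}, exactly as you do. Your checks on the hypotheses ($k>3$, external $\epsilon$-computation, the distribution $\mu$) are the right ones and nothing further is needed.
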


\subsection{Multi-party information complexity and communication complexity of $\Disj_k^n$}

\ShortVersion
The next  lemma is key to our argument. The theorem that follows is a consequence of it and of Theorem~\ref{thm:SIC>=kn}.
\ShortVersionEnd
\LongVersion
We now prove a lemma that will allow us to obtain a lower bound on the multi-party peer-to-peer 
communication complexity of the disjointness function.
\LongVersionEnd

\begin{lemma}\label{lem:SMIC<MIC}
For any $k$-player protocol $\pi$, 
$\SIC_{\mu^n}(\pi) \le \MIC_{\mu^n}(\pi)$.
\end{lemma}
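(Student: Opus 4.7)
The plan is to prove the inequality termwise: for each $i \in \Ni{1}{k}$, I will show
\[
I(X_i ; \Pi_i \mid MZ) \le I(X_i ; \Pi_i \mid X_{-i} R_{-i})
\quad\text{and}\quad
I(M ; \Pi_i \mid X_i Z) \le I(X_{-i} ; \Pi_i \mid X_i R_i),
\]
so that summing over $i$ immediately pairs the two parts of the $\SIC$ summand with the two parts of the $\MIC$ summand. It is enough to treat private-coins protocols: the general case follows by averaging over $R^p$, since every manipulation below remains valid with $R^p$ appended to every condition.

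The whole argument rests on two easy observations. \emph{About the distribution:} under $\mu^n$, the vectors $X_1,\ldots,X_k$ are mutually independent conditional on $(M,Z)$, and $R=(R_i)_i$ is independent of $(X,M,Z)$; in particular $X_i \perp (X_{-i},R_{-i}) \mid (M,Z)$ and $X_{-i} \perp R_i \mid (X_i,Z)$. \emph{About the protocol:} since $\Pi_i$ is a deterministic function of $(X,R,R^p)$ and $(M,Z)$ is independent of the randomness, once $X$ is in the conditioning the transcript is independent of $(M,Z)$, giving the zero mutual information identities $I(MZ ; \Pi_i \mid X R_{-i}) = 0$, $I(M ; \Pi_i \mid X Z) = 0$, and $I(Z ; \Pi_i \mid X R_i) = 0$.

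For the first inequality I would use observation~1 to enlarge the condition, $I(X_i ; \Pi_i \mid MZ) \le I(X_i ; \Pi_i \mid MZ X_{-i} R_{-i})$, and then use the chain-rule identity
\[
I(MZ X_i ; \Pi_i \mid X_{-i} R_{-i}) = I(X_i ; \Pi_i \mid X_{-i} R_{-i}) + I(MZ ; \Pi_i \mid X R_{-i}) = I(X_i ; \Pi_i \mid X_{-i} R_{-i}),
\]
and expanding the same quantity the other way to conclude that $I(X_i ; \Pi_i \mid MZ X_{-i} R_{-i}) \le I(X_i ; \Pi_i \mid X_{-i} R_{-i})$. Composing the two steps gives the claim.

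For the second inequality I would combine observation~2 ($I(M ; \Pi_i \mid X Z) = 0$) with the two chain-rule expansions of $I(M X_{-i} ; \Pi_i \mid X_i Z)$ to get $I(M ; \Pi_i \mid X_i Z) \le I(X_{-i} ; \Pi_i \mid X_i Z)$, then invoke observation~1 to insert $R_i$ into the condition, and finally use the analogous chain-rule identity together with $I(Z ; \Pi_i \mid X R_i) = 0$ to remove $Z$. The main technical point to watch is that each ``add to the condition'' step and each ``remove from the condition'' step requires its own independence or zero-mutual-information justification; the order has to be chosen so that inequalities always run in the right direction, after which the terms telescope into exactly the desired comparison between $\SIC$ and $\MIC$.
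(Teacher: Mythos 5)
Your proposal is correct and is essentially the paper's own argument: the same termwise pairing ($I(X_i;\Pi_i\mid MZ)$ against $I(X_i;\Pi_i\mid X_{-i}R_{-i})$ and $I(M;\Pi_i\mid X_iZ)$ against $I(X_{-i};\Pi_i\mid X_iR_i)$), with conditions enlarged via the conditional independence of the inputs given $(M,Z)$ under $\mu^n$, and the auxiliary terms killed because $\Pi_i$ is determined by the inputs and randomness, which are independent of $(M,Z)$. The only difference is cosmetic: where you remove $M$, $Z$ from the conditioning by expanding a joint mutual information two ways via the chain rule, the paper does the equivalent manipulation at the level of conditional entropies.
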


\begin{proof}
We first prove that $$\forall~i\in\Ni{1}{k},~~ I(M;\Pi_i \mid X_i R_i Z) \le I(X_{-i} ; \Pi_i \mid X_i R_i)~.$$

\begin{align*}
I(M;\Pi_i \mid X_i R_i Z) &\le I(M X_{-i};\Pi_i  \mid X_i R_i Z) \\
&= I(X_{-i};\Pi_i \mid X_i R_i Z) + I(M ;\Pi_i \mid X R_i Z) \text{~~~(chain rule)}\\
&\le I(X_{-i};\Pi_i \mid X_i R_i Z) + I(M ;\Pi_i R_{-i} \mid X R_i Z)\\
&=  I(X_{-i};\Pi_i \mid X_i R_i Z) + I(M ;R_{-i} \mid X R_i Z) + I(M ; \Pi_i \mid X R Z) \text{~~~(chain rule)}\\
&= I(X_{-i};\Pi_i \mid X_i R_i Z) + I(M ; \Pi_i \mid X R Z)\\ 
&\le I(X_{-i};\Pi_i \mid X_i R_i Z) + H(\Pi_i \mid X R Z)\\
&= I(X_{-i};\Pi_i \mid X_i R_i Z) \text{~~~(because $XR$ determines $\Pi_i$)}\\
&= H(\Pi_i \mid X_i R_i Z) - H(\Pi_i \mid X R_i Z)\\
&=H(\Pi_i \mid X_i R_i Z) - H(\Pi_i \mid X R_i)\\
&\le H(\Pi_i \mid X_i R_i ) - H(\Pi_i \mid X R_i)  \\
&\le I(X_{-i};\Pi_i \mid X_i R_i)~.
\end{align*}

We now prove that
$$\forall~i\in\Ni{1}{k}, ~~I(X_i ; \Pi_i \mid M Z) \le I(X_i ; \Pi_i \mid X_{-i} R_{-i}).$$

Since  by the definition of $\mu$ $I(X_i ; X_{-i} R_{-i} \mid MZ) = 0$, we get by Lemma \ref{lem:Bra2.10} that
$$I(X_i ; \Pi_i \mid M Z) \le I(X_i ; \Pi_i \mid X_{-i} R_{-i} M Z)~,$$
and
\begin{align*}
I(X_i ; \Pi_i \mid X_{-i} R_{-i} M Z) &= H(\Pi_i \mid X_{-i} R_{-i} M Z) - H(\Pi_i \mid X R_{-i} M Z)\\
&= H(\Pi_i \mid X_{-i} R_{-i} M Z) - H(\Pi_i \mid X R_{-i})\\
&\le H(\Pi_i \mid X_{-i} R_{-i}) - H(\Pi_i \mid X R_{-i})\\
&= I(X_i ; \Pi_i \mid X_{-i} R_{-i})~.
\end{align*}
Thus we have
$$I(X_i ; \Pi_i \mid M Z) \le I(X_i ; \Pi_i \mid X_{-i} R_{-i})~.$$
Summing over $i\in\Ni{1}{k}$ concludes the proof.
\end{proof}

The next theorem follows immediately from Theorem~\ref{thm:SIC>=kn} and Lemma~\ref{lem:SMIC<MIC}.

\begin{theorem}\label{thm:MICDisj}
Let  $k>3$.  Given any fixed  $0 \leq \epsilon <\frac{1}{2}$, for any protocol $\pi$ externally 
$\epsilon$-computing $\Disj_k^n$, it holds that
 $$\MIC_{\mu^n}(\pi) = \Omega(kn)~.$$
\end{theorem}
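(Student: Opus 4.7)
The plan is to chain together the two results that immediately precede this theorem. Specifically, I would first invoke Theorem~\ref{thm:SIC>=kn}, which gives that any protocol $\pi$ externally $\epsilon$-computing $\Disj_k^n$ satisfies $\SIC_{\mu^n}(\pi) = \Omega(kn)$ whenever $k > 3$ and $\epsilon < 1/2$. Then I would apply Lemma~\ref{lem:SMIC<MIC}, which asserts that $\SIC_{\mu^n}(\pi) \le \MIC_{\mu^n}(\pi)$ for any $k$-player protocol. Combining these two inequalities yields $\MIC_{\mu^n}(\pi) \ge \SIC_{\mu^n}(\pi) = \Omega(kn)$, as desired.

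In other words, the theorem is an immediate corollary; there is no additional obstacle at this stage, because all of the substantive work has been done earlier. The heavy lifting lies in Theorem~\ref{thm:SIC>=kn} (itself a consequence of the direct-sum-like reduction in Lemma~\ref{lem:DSSIC} together with the lower bound on $\SIC$ for $\AND_k$ from Theorem~\ref{prop:SICAND}), and in Lemma~\ref{lem:SMIC<MIC}, whose proof term-by-term bounds $I(M;\Pi_i \mid X_i R_i Z)$ against $I(X_{-i};\Pi_i \mid X_i R_i)$ and $I(X_i;\Pi_i \mid MZ)$ against $I(X_i;\Pi_i \mid X_{-i}R_{-i})$ using the independence structure of $\mu$. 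Once those are in hand, the present theorem is simply the statement obtained from summing the two bounds and reading off the conclusion.

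Since the statement refers to an external $\epsilon$-computing protocol, I would note for clarity that both Theorem~\ref{thm:SIC>=kn} and Lemma~\ref{lem:SMIC<MIC} apply to protocols externally $\epsilon$-computing the function (the latter makes no assumption on correctness at all), so the hypotheses carry through without modification. The only thing to keep in mind is that the lower bound in Theorem~\ref{thm:SIC>=kn} holds under the distribution $\mu^n$ as defined in Section~\ref{sec:AND}, and the inequality in Lemma~\ref{lem:SMIC<MIC} is stated with respect to this same distribution, so the two bounds are directly composable. No further estimates or case analyses are required.
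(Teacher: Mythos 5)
Your proposal is correct and coincides with the paper's own argument: the paper states this theorem as an immediate consequence of Theorem~\ref{thm:SIC>=kn} and Lemma~\ref{lem:SMIC<MIC}, exactly as you chain them. Your additional remarks about the shared distribution $\mu^n$ and the external-computation hypothesis carrying through are accurate but not needed beyond what the paper already records.
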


We   now conclude with a lower bound on  the randomized communication complexity of the disjointness function.
\begin{theorem}
Given any  fixed $0 \leq \epsilon <\frac{1}{2}$, there is a constant $\alpha$ such that for $n\ge\frac{1}{\alpha} k$,
$$\CC^\epsilon(\Disj_k^n) = \Omega(kn)~.$$
\end{theorem}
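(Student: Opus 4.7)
The plan is to mirror exactly the argument used for Theorem~\ref{thm:CCXOR2}: combine the generic lower bound $\CC(\pi) \ge \tfrac{1}{8}\MIC_\mu(\pi) - k^2$ from Lemma~\ref{lem:CC>=MIC} with the function-specific information lower bound $\MIC_{\mu^n}(\pi) = \Omega(kn)$ from Theorem~\ref{thm:MICDisj}, and then absorb the additive $k^2$ term into the leading $\Omega(kn)$ by assuming $n$ is at least a constant multiple of $k$.

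First I would fix an arbitrary protocol $\pi$ that $\epsilon$-computes $\Disj_k^n$. Since an $\epsilon$-computing protocol is in particular externally $\epsilon$-computing (each player's output is a deterministic function of its view, which is contained in $\Pi$), Theorem~\ref{thm:MICDisj} applies, provided $k > 3$. It yields a constant $\beta>0$ (depending only on $\epsilon$) such that
\begin{equation*}
\MIC_{\mu^n}(\pi) \ge \beta \, kn.
\end{equation*}
Plugging this into Lemma~\ref{lem:CC>=MIC} gives $\CC(\pi) \ge \tfrac{\beta}{8} kn - k^2$. Now pick any constant $\alpha$ with $0 < \alpha < \beta/8$. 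Whenever $n \ge k/\alpha$, we have $k^2 \le \alpha k n$, hence
\begin{equation*}
\CC(\pi) \ge \left( \tfrac{\beta}{8} - \alpha \right) kn = \Omega(kn),
\end{equation*}
and taking the infimum over all $\epsilon$-computing $\pi$ yields $\CC^\epsilon(\Disj_k^n) = \Omega(kn)$. The small cases $k \le 3$ are handled trivially: for constant $k$ the statement $\CC^\epsilon(\Disj_k^n) = \Omega(n)$ follows from standard two-party disjointness lower bounds by collapsing players, and since $k$ is constant this matches $\Omega(kn)$ up to a constant absorbed in $\alpha$.

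I do not expect a significant obstacle here: all the heavy lifting has been done in Theorem~\ref{thm:MICDisj} (which itself invokes the $\SIC$ lower bound for $\AND_k$ together with the direct-sum-type Lemma~\ref{lem:DSSIC} and the comparison Lemma~\ref{lem:SMIC<MIC}). The only delicate point is the condition $n = \Omega(k)$, which is needed precisely so that the $-k^2$ additive slack in Lemma~\ref{lem:CC>=MIC}, coming from the prefix-free encoding of the per-player transcripts in Proposition~\ref{prop:entropy_and_cc}, is dominated by the main $\Omega(kn)$ term; this is the same trade-off as in Theorem~\ref{thm:CCXOR2}.
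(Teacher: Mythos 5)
There is a genuine gap in your argument: the claim that an $\epsilon$-computing protocol ``is in particular externally $\epsilon$-computing'' is false in this model, and it is exactly the point the paper has to work around. External computation requires a deterministic function $\theta$ of the \emph{transcript alone} that recovers $f(x)$ with probability $1-\epsilon$; a player's output, however, is a function of its whole view, which includes its own input $X_i$ and private randomness $R_i$, and these are \emph{not} part of $\Pi$. A protocol in which, say, players $2,\dots,k$ ship their inputs to player $1$, who then outputs the answer without sending anything further, $\epsilon$-computes $\Disj_k^n$ but is not externally computing, since the transcript does not determine $x_1$ and hence not $\Disj_k^n(x)$. This matters because Theorem~\ref{thm:MICDisj} (via Theorem~\ref{thm:SIC>=kn}, Lemma~\ref{lem:DSSIC} and Theorem~\ref{prop:SICAND}) ultimately rests on Lemma~\ref{lem:HelErr}, which needs the transcript distributions on inputs with different function values to be far apart in Hellinger distance --- a property that can fail for protocols that only compute the value internally. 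So you cannot invoke Theorem~\ref{thm:MICDisj} directly on an arbitrary $\epsilon$-computing $\pi$.

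The paper closes this gap with an explicit transformation: from $\pi$ it builds $\pi'$ in which every transmitted bit $b$ is sent as the pair $b.b$, and when player $1$ stops it additionally sends $b.(1-b)$ to player $2$, where $b$ is its output. The doubling makes the output-announcement message recognizable in the transcript, so $\pi'$ externally $\epsilon$-computes $\Disj_k^n$, and $\CC(\pi')=2\CC(\pi)+2$, so the bound $\CC(\pi')\ge \beta kn - k^2$ obtained from Lemma~\ref{lem:CC>=MIC} and Theorem~\ref{thm:MICDisj} transfers to $\CC(\pi)=\Omega(kn)$ with only a constant-factor loss. The rest of your argument --- choosing $\alpha<\beta$ so that $k^2\le \alpha kn$ when $n\ge k/\alpha$, and handling small $k$ by collapsing players onto the two-party disjointness bound --- matches the paper and is fine; you only need to insert this internal-to-external reduction (or an equivalent one) before applying the information-theoretic lower bound.
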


\begin{proof}
For $k=3$ the theorem follows from the fact that $\CC^\epsilon(\Disj_3^n) \geq \CC^\epsilon(\Disj_2^n)$
(simply by letting Alice simulate internally a third player with an all-$1$ input), and from  
$\CC^\epsilon(\Disj_2^n) =\Omega(n)$ (cf.~\cite{CP}). 

Assume now that $k>3$.
Let $\pi$ be a protocol $\epsilon$-computing $\Disj_k^n$. We first convert $\pi$ into a protocol $\pi'$ which
{\em externally} $\epsilon$-computes $\Disj_k^n$.
The protocol $\pi'$ is defined as follows. 
For every bit $b$ sent by a player in $\pi$, the same player sends in $\pi'$ two bits  $b.b$. 
In addition, in $\pi'$, when player $1$ stops and returns its output, it sends
to player $2$ the message $b.(1-b)$, where $b$ is the output it computed.

Since in $\pi$ player $1$ $\epsilon$-computes the function $\Disj_k^n$, $\pi'$ externally $\epsilon$-computes $\Disj_k^n$. 
Observe that $\CC(\pi')= 2\cdot \CC(\pi) + 2$.

By Lemma~\ref{lem:CC>=MIC} and Theorem \ref{thm:MICDisj}, there exists a constant $\beta$ such that
$\CC(\pi') \ge \beta kn - k^2 $. Let $\alpha < \beta $ be a constant. For $n\ge\frac{1}{\alpha} k$, we have
$k^2 \le \alpha k n$ and we get $\CC(\pi') \ge (\beta - \alpha) kn = \Omega(kn)$, and  $\CC(\pi) = \Omega(kn)$.

\end{proof}

We  note that our tight lower bound holds also for protocols where only one player is required to output the 
value of the function. 

\section{Randomness complexity of private protocols}\label{sec:Rand}

\ShortVersion
A  protocol $\pi$ is said to {\em privately} compute a  function  $f$ if, at the end of the execution of the protocol, 
the players have learned nothing but the value of that function. 
We now prove that  the (information theoretic) private computation of $\Disj_k^n$ requires $\Omega(n)$ random bits.
We prove this result using  the 
information theoretic results for $\Disj_k^n$ of  the previous sections.
The definitions and the details of the proof are deferred to the  full version of the paper.
\begin{theorem}
Let $k>3$. Then $\calR(\Disj_k^n) = \Omega(n)$,  where $\calR(f)$  is the minimum  number of random bits necessary for
 a protocol to privately compute $f$. 
\end{theorem}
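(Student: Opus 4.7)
The plan is to invoke the public-information-cost framework of~\cite{KRU}, which establishes that for any private protocol computing $f$, the amount of private randomness it uses is at least its public information cost $\PIC$. First I would recall (in the notation of Section~\ref{sec:Model}) the definition of $\PIC_\mu(\pi)$, which, intuitively, measures the information that the concatenation of the public randomness and the full transcript $\Pi$ carries about the input $X$ beyond what is revealed by the value $f(X)$. By~\cite{KRU}, for any protocol $\pi$ privately computing $f$ using $r$ random bits in total, $\PIC_\mu(\pi)\le r$; hence $\calR(f)\ge \sup_\mu \PIC_\mu(f)$.

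The core technical step is to lower-bound $\PIC_{\mu^n}(\pi)$ for every protocol $\pi$ computing $\Disj_k^n$ on the specific input distribution $\mu^n$ introduced in Section~\ref{sec:AND}. A key feature of $\mu^n$ that I would exploit is that $\Disj_k^n(X)=0$ almost surely under $\mu^n$, so the output contributes nothing to ``hide'' information already present in the transcript. I would then prove an inequality of the form
\[
\PIC_{\mu^n}(\pi)\ \ge\ \frac{1}{ck}\,\SIC_{\mu^n}(\pi)
\]
for an absolute constant $c$, by expanding each of the $2k$ summands $I(X_i;\Pi_i\mid MZ)$ and $I(M;\Pi_i\mid X_iZ)$ appearing in $\SIC$, and relating them to a single mutual-information term involving the pooled transcript $\Pi$. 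The chain rule together with the rectangularity property (Lemma~\ref{lem:rectang}), the collapsing property of $\mu^n$ used in Lemma~\ref{lem:DSSIC}, and the fact that each bit of communication contributes to at most two per-player transcripts (used in the proof of Proposition~\ref{prop:entropy_and_cc}) allow one to absorb $M$ and $Z$ into the conditioning on $X$ and to aggregate the per-player contributions into one global transcript term, losing at most a factor $k$.

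Combining this relation with Theorem~\ref{thm:SIC>=kn}, which gives $\SIC_{\mu^n}(\pi)=\Omega(kn)$ for any $\pi$ externally $\epsilon$-computing $\Disj_k^n$, yields $\PIC_{\mu^n}(\pi)=\Omega(n)$, and therefore $\calR(\Disj_k^n)=\Omega(n)$ as claimed. The main obstacle will be the precise derivation of the $\PIC\ge \SIC/(ck)$ inequality: $\PIC$ is an external measure built from the single pooled transcript $\Pi$ and the public randomness, whereas $\SIC$ is a sum of $k$ \emph{per-player} internal measures whose conditioning on the auxiliary variables $(M,Z)$ of the switched distribution has to be unwound. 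A careful application of the rectangularity and collapsing properties is essential to ensure that this step loses only a multiplicative factor of $k$, which is exactly what makes the final $\Omega(n)$ lower bound tight with respect to the $\Omega(kn)$ communication lower bound.
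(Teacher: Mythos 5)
There is a genuine gap, and it is quantitative: your two factor-of-$k$ accounts are exactly reversed relative to what can actually be proved. The relation between randomness and public information cost that \cite{KRU} provides (Theorem~\ref{thm:boundprivacy}) is $\calR_\eta(f) \ge \frac{1}{k}\bigl(\PIC_\eta(f) - k\cdot H_\eta(f(X))\bigr)$, i.e.\ it already loses a factor $k$; your claim that a private protocol using $r$ random bits satisfies $\PIC_\mu(\pi)\le r$, hence $\calR(f)\ge \PIC_\mu(f)$, is not the cited statement and is false in general (note that $\PIC$ is not an external quantity on the pooled transcript but the sum $\sum_i I(X_{-i};\Pi_i R_{-i}\mid X_i R_i R^p)$ of $k$ per-player terms; for $\XOR_k$ with one-bit inputs this sum is $\Omega(k)$ while the randomness complexity of private parity is $O(1)$, so $\calR\ge\PIC$ cannot hold). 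Once you replace your inequality by the correct one, your planned bound $\PIC_{\mu^n}(\pi)\ge \SIC_{\mu^n}(\pi)/(ck)$ together with $\SIC_{\mu^n}(\pi)=\Omega(kn)$ (Theorem~\ref{thm:SIC>=kn}) gives only $\PIC_{\mu^n}=\Omega(n)$ and hence $\calR=\Omega(n/k)$, which falls short of the claimed $\Omega(n)$.

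The paper closes this by making the other step lossless up to a constant: Theorem~\ref{prop:PIC>=SIC} shows $\PIC_{\mu^n}(\pi)\ge\frac{1}{2}\SIC_{\mu^n}(\pi)$, so that $\PIC_{\mu^n}(\pi)=\Omega(kn)$ (Theorem~\ref{thm:PICofDISJ}) and the single unavoidable $1/k$ loss occurs only in Theorem~\ref{thm:boundprivacy}, with $H_{\mu^n}(\Disj_k^n)=0$ because $\mu^n$ is supported on inputs with Disjointness value $0$ (this is where the collapsing property enters, not in the $\PIC$ versus $\SIC$ comparison). Moreover, the proof of $\PIC\ge\frac{1}{2}\SIC$ does not go through rectangularity or a pooled-transcript argument at all: one first restricts to public-coins protocols via Theorem~\ref{thm:minonpub}, converts the protocol to a proper synchronous one (using Lemma~\ref{lemma:alaKonig}), and then runs a message-by-message chain-rule argument (Lemmas~\ref{lem:PIC>=PIChat}--\ref{le:ic_hat}) that crucially exploits the global round structure to re-attribute, from receiver to sender, the information each message carries about the sender's input. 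Without replacing your $\PIC\ge\SIC/(ck)$ plan by such a constant-factor comparison (or otherwise strengthening the randomness-versus-$\PIC$ step, which the known results do not support), the argument does not deliver $\calR(\Disj_k^n)=\Omega(n)$.
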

\ShortVersionEnd

\LongVersion
In this section we give a lower bound of $\Omega(n)$ on the (information theoretic private computation) randomness complexity 
of the function 
$\Disj_k^n$, i.e., we prove that in order to privately compute $\Disj_k^n$ one needs $\Omega(n)$ random bits.
The significance  of this result lies in that it is the first such lower bound that grows with the size of the
input, which is $kn$, while the  output  remains a single bit.

\subsection{Private protocols and randomness}

A  protocol $\pi$ is said to {\em privately} compute a given function if, at the end of the execution of the protocol, the 
players have learned nothing but the value of that function.  We note that the literature 
devoted to private computation 
usually focuses on $0$-error protocols, and therefore, in the rest of this section, we will restrict ourselves to the 
case of  $0$-error protocols. 

Furthermore, the literature on private computation is  focused  on  \textit{synchronous} protocols. In what follows we 
therefore only consider protocols in that setting. 
In the synchronous setting, protocols advance according to a global round structure. At every round, each player sends a message to every other player. In addition, each player has an output tape.
In order to ensure that no player is ever engaged in an infinite computation process, it is required that on any input and randomness assignment, every player eventually stops sending messages. 
That is, for a synchronous  protocol $\pi$ let $t_i(x,r)$ be the smallest integer such that if $\pi$ is run on 
$(x,r)$ then player $i$ does not send any message and does not write on its output tape after round $t_i(x,r)$. If no such integer exists then 
$t_i(x,r)=\infty$. The requirement is that  for every player $i$, input $x$, and randomness assignment $r$  $t_i(x,r) < \infty$.

\medskip
The following lemma, which is a consequence of K\"{o}nig's lemma (cf.~\cite{Kleene}), applies to any synchronous protocol.
\begin{lemma}\label{lemma:alaKonig}
Let $\pi$ be  a synchronous  protocol. If for any $i$, $x$, and $r$, $t_i(x,r) < \infty$, then 
there exists an integer $t_f$ such that for any $i$, $x$, and $r$, $t_i(x,r) < t_f $.
\end{lemma}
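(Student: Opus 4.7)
The plan is to apply K\"onig's lemma. Since both $\{1,\ldots,k\}$ and $\calX$ are finite, it suffices to show that for each fixed $(i,x)$ the quantity $\sup_r t_i(x,r)$ is finite; the uniform bound $t_f$ is then obtained as the maximum of these finite suprema over all $(i,x)$, incremented by one.

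Fix $(i,x)$ and consider the tree $T$ defined as follows. A node at depth $t$ is a tuple $\sigma = (\sigma_1,\ldots,\sigma_k)$ of finite bit strings representing the randomness each player has consumed during the first $t$ rounds of an execution of $\pi$ on input $x$. Such a node is retained in $T$ iff it is consistent with some infinite randomness extension $r \supseteq \sigma$ for which player $i$ performs at least one further action (sends a message or writes to its output tape) at some round strictly greater than $t$. Edges connect depth-$t$ nodes to the depth-$(t+1)$ tuples obtained by running one more round of $\pi$. The crucial property is that $T$ is finitely branching: in any given round each player's local computation terminates, so only finitely many random bits are consumed per player per round, yielding only finitely many one-round continuations of any node.

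Now suppose for contradiction that $\sup_r t_i(x,r) = \infty$. Then for every $t$ there is a randomness $r$ with $t_i(x,r) > t$, whose round-$t$ prefix belongs to $T$ at depth $t$, so $T$ contains nodes at arbitrarily large depth and is therefore infinite. K\"onig's lemma produces an infinite branch $\sigma^{(0)} \prec \sigma^{(1)} \prec \cdots$, whose prefixes assemble into an infinite randomness assignment $r^\star$ under which player $i$ never ceases to act, i.e., $t_i(x,r^\star) = \infty$, contradicting the hypothesis. The delicate part will be setting up the pruning rule so that both (a) the tree remains finitely branching, which relies on the local-computation-halting assumption implicit in the protocol model (bounding per-round randomness consumption), and (b) an infinite branch actually produces a randomness $r^\star$ for which $t_i(x,r^\star) = \infty$ rather than merely one whose finite prefixes always admit some bad extension; this latter point is where one uses that ``player $i$ having done something past every bound'' along the branch translates into the same property for the single assignment $r^\star$ produced by the branch.
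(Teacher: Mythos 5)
The paper offers no argument for this lemma beyond the bare citation of K\"onig's lemma, so everything rests on how the tree is set up, and your pruning rule is precisely where the attempt breaks. You retain a depth-$t$ node $\sigma$ iff \emph{some} infinite extension $r \supseteq \sigma$ makes player $i$ act at a round beyond $t$. An infinite branch then only tells you that every finite prefix of the limit assignment $r^\star$ admits \emph{some} continuation with later activity; it does not tell you that $r^\star$ itself does, and your closing sentence acknowledges this but never supplies the missing argument. The gap is not cosmetic. Consider the protocol in which player $i$ reads one fresh private coin per round, stays silent as long as it sees $1$'s, sends a single message at the round where it first sees a $0$, and is silent forever afterwards: then $t_i(x,r)$ is finite for every $r$ (it equals $0$ on the all-ones string) yet unbounded over $r$, and your tree is exactly the path of all-ones prefixes, whose limit $r^\star = 111\cdots$ has $t_i(x,r^\star)=0$, so K\"onig's lemma yields no contradiction. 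This example also shows that under a purely ``activity-based'' reading of $t_i$ with an unbounded supply of coins no pruning rule can rescue the argument, because the uniform bound itself fails there; a correct proof has to lean on the feature of the model that a player \emph{stops and returns} as an explicit action which must occur in finite time on every $(x,r)$, and must prune on a property of the node itself, e.g.\ ``some player has not yet returned within the partial run recorded by $\sigma$''. With that rule, prefix-closedness is immediate, the tree is infinite whenever the stopping times are unbounded, and an infinite branch assembles into a single $r^\star$ on which some fixed player never returns, which is the desired contradiction.

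A secondary gap is the finite-branching claim. That each player's local computation terminates bounds the number of coins consumed \emph{in each single run} of a round, but the branching of your tree is over all randomness values consistent with a node, and a priori the amount read in one round could be unbounded over these even though each run reads finitely many bits. You need a separate (easy) compactness step: within one round, the adaptive coin-reading tree is binary and has no infinite branch (since the round terminates on every assignment), hence by K\"onig's lemma it is finite, so given the history only boundedly many coins are read and only finitely many one-round continuations exist. In summary, the skeleton (K\"onig's lemma applied to a tree of partial runs, with finiteness of players and inputs handled by taking a maximum) is the intended one, but as written both the branching bound and, more seriously, the extraction of a bad assignment $r^\star$ from the infinite branch are missing, and the latter cannot be repaired with the pruning rule you chose.
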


Based on the above lemma one can transform any synchronous protocol into a protocol that always runs in a fixed number
of rounds, and where all players output at the protocol's end. This is done by simply delaying the output until 
round  $t_f$. Observe that such transformation does not change the 
transcript of the protocol or any other measure such as the number of random bits used.

We can now formally define privacy:
\begin{definition} \label{def:privacy}
A $k$-player  protocol $\pi$ computing a function $f$ is \textit{private}\footnote{In this paper we consider 
only the setting of $1$-privacy, which we call here for simplicity, privacy.} if 
for every player $i \in \Ni{1}{k}$,
for all pairs of inputs $x=(x_1,\dots,x_k)$ and $x'=(x'_1,\dots,x'_k)$, such that ${f(x) = f(x')}$ and $x_i = x'_i$,
for all possible private random assignments $r_i$ of player $i$, and all possible public random assignments $r^p$, 
it holds that for any transcript $T$ 
$$\Pr[\Pi_i = T \mid R_i=r_i~;~X=x~;~R^p=r^p] = \Pr[\Pi_i = T \mid R_i=r_i~;~X=x'~;~R^p=r^p]$$ where 
the probability is over the randomness $R_{-i}$, and where $\Pi_i$ is the sequence of all messages sent to 
player $i$.
\end{definition}

It is well known that in the multi-party case, i.e., when we have $k \geq 3$ players, any function 
can be computed privately in the peer-to-peer model \cite{BGW,CCD}.
Private protocols require the players to make use of their private randomness. The minimal amount of 
private randomness needed to design a private protocol for a given function is referred to as the 
\textit{randomness complexity} of that function. 
While in the present paper we make use of the notion of entropy, many papers on  randomness  in private protocols 
make use of the notion of the number of random bits in order to measure ``the amount of randomness used''.
 We repeat here the definitions used in those papers.

\begin{definition}
A communication protocol is said to be $d$-random if, on any run, the total number of private random bits
 used by all the players is at most $d$.
\end{definition}

\begin{definition}
The randomness complexity $\calR(f)$ of a function $f$ is the minimal integer $d$ such that there exists a $d$-random 
private protocol computing $f$.
\end{definition}

We will also use the following two (finer) notions which in fact make use of the notion of entropy.

\begin{definition}
The randomness complexity of a protocol $\pi$ on input distribution $\eta$ is defined as
$$\calR_\eta(\pi) = H(\Pi \mid X R^p)~.$$
\end{definition}

\begin{definition}
The randomness complexity of a function $f$ on input distribution $\eta$ is defined as
$$\calR_\eta(f) = \inf\limits_{\pi\text{ private  protocol computing }f}\calR_\eta(\pi)~.$$
\end{definition}

Once the input and the public coins are fixed, the entropy of the transcript of a protocol comes solely from the private randomness. 
Thus, for any input distribution $\eta$, $\calR_\eta(\pi)$ provides a lower bound on the entropy of the private randomness used 
by all the players in the protocol $\pi$.
In order to relate our results (which are stated in terms of entropy) to the notions previously used in the literature on the 
analysis of  randomness in private protocols, we use the fact that, up to constant factors, the number of  (uniform)
random bits necessary for the generation of a random variable with a given entropy is equal to that entropy
 (cf.~\cite{KY}).
The following lemma is then immediate.
\begin{lemma}
\label{lem:randomness_of_function}
Let $d$ be an integer. If there exists an input distribution $\eta$ such that $\calR_\eta(f) > d$, then $\calR(f) > d$.
\end{lemma}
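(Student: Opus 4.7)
The plan is to prove the contrapositive: if $\calR(f) \leq d$, then for every input distribution $\eta$ one has $\calR_\eta(f) \leq d$. So fix a private protocol $\pi$ computing $f$ that, on every run, consumes at most $d$ private random bits in total across all players. I first claim one may assume, without loss of generality, that $\pi$ consumes \emph{exactly} $d$ private random bits on every run. Indeed, each player appends to its local program a harmless final step that reads a suitable number of fresh private random bits and discards them, arranging the total consumption to be exactly $d$. This modification affects neither the output, nor the communication transcript, nor the privacy of $\pi$ (no new messages are sent and no existing message distribution is altered), and the bound of $d$ bits is clearly preserved, so the padded protocol still witnesses $\calR(f)\leq d$.

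Writing $R=(R_1,\ldots,R_k)$ for the joint private randomness, which is now a uniform random string of length exactly $d$, the transcript $\Pi$ of $\pi$ is a deterministic function of $X$, $R^p$, and $R$. Hence, for any input distribution $\eta$,
\begin{equation*}
\calR_\eta(\pi) \;=\; H(\Pi \mid X R^p) \;\le\; H(R \mid X R^p) \;\le\; H(R) \;\le\; d,
\end{equation*}
which gives $\calR_\eta(f) \leq d$ and establishes the contrapositive.

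There is no substantive obstacle here: this is the easy direction of the entropy-versus-bit-length correspondence (cf.~\cite{KY}) invoked in the paragraph preceding the lemma. The only point requiring attention is the padding step above: without it, a protocol whose per-run consumption genuinely ranges over $\{0,1,\ldots,d\}$ would have a private-randomness random variable whose support has $\sum_{\ell=0}^{d} 2^\ell < 2^{d+1}$ elements, yielding only the slightly weaker bound $H(R) \leq d + O(1)$ and hence a correspondingly weaker conclusion. Padding sidesteps this at no cost.
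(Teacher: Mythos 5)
Your overall strategy (contrapositive, then bound $H(\Pi \mid X R^p)$ by the number of private random bits a $d$-random protocol consumes) is exactly the spirit in which the paper treats this lemma as immediate from the entropy-versus-random-bits correspondence of~\cite{KY}; the paper gives no more detailed argument. However, the one step you single out as needing attention -- the padding -- is, as written, not implementable. A $d$-random protocol bounds only the \emph{total} number of private bits consumed per run; how these $d$ bits are split among the $k$ players can vary with the input and with the randomness itself. A player sees only its own consumption, so it cannot ``read a suitable number of fresh private random bits'' so as to make the \emph{global} total exactly $d$: it does not know how many bits the others have used. The obvious local fix -- padding each player to its individual worst case $d_i$ -- fails because $\sum_i d_i$ may exceed $d$ (different runs may charge the budget to different players), and then $H(R)\le d$ is lost.

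The gap is repairable without touching the protocol. Fix $(x,r^p)$ and serialize the private-bit reads of the execution in any canonical order (round by round, player by player); because each read and each stopping decision is determined by $(x,r^p)$ and the bits read so far, the possible read-sequences are the leaves of a binary tree of depth at most $d$, so there are at most $2^d$ of them, and each determines the transcript. Hence $H(\Pi \mid X=x, R^p=r^p)\le d$ for every $(x,r^p)$, so $\calR_\eta(\pi)=H(\Pi\mid X R^p)\le d$ exactly -- no $+O(1)$ loss and no padding. (Equivalently, one can do the padding purely at the level of analysis: append to the consumed read-sequence fresh unread bits from, say, player $1$'s tape up to length $d$; the resulting $d$-bit string is uniform and determines $\Pi$ given $(X,R^p)$, because the consumed prefix is recoverable by simulation. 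But this is a statement about random variables, not a modification each player can carry out locally.) With that replacement your chain of inequalities, and hence the lemma, goes through.
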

This means that in order to give a  lower bound on the randomness complexity of a function $f$, we can
 find an input distribution $\eta$ such that the randomness complexity of a function $f$ on $\eta$ is high.
Since we are  interested here in characterizing the randomness used in private protocols, in the rest 
of this section, when we use information terms such as $\SIC$,   we
 will make the randomness appear explicitly in the conditioning.

To make private protocols formally fit into our model (Section~\ref{subsec model}), we further technically modify them
such that  whenever a player does not send a message, it sends instead a special message indicating ``empty message''.
Such protocols formally fit in our model  and satisfy several additional properties. We call 
such protocols {\em proper synchronous} protocols
as defined below.
\begin{definition}
We say that a protocol as defined in Section~\ref{subsec model} is \textit{proper synchronous} if  there is an integer $t_f$ such that 
for every player  $i$, every input $x$, and every random assignment $r$ it holds that
\begin{itemize}
\item  In every (local) round $t<t_f$ player $i$ sends messages to all other players, and reads messages from all other players.
\item Player $i$ stops  at (local) round $t_f$.
\end{itemize}
\end{definition}

Observe that the above transformation from a synchronous protocol to a proper synchronous protocol preserves 
privacy (if the original protocol
was private), and the number of random bits used by the protocol does not change. We therefore prove
below our lower bound  for (private)  proper synchronous protocols.

\subsection{Public information cost}

The notion of public information cost was introduced in \cite{KRU}.

\begin{definition}
For any $k$-player protocol $\pi$ and any input distribution $\eta$, we define the \textit{public information cost} of $\pi$:
$$\PIC_\eta(\pi) = \sum\limits_{i=1}^k I(X_{-i} ; \Pi_i R_{-i} \mid X_i R_i R^p)~.$$
\end{definition}

Note that this definition of $\PIC_\eta(\pi)$ slightly differs from the one given in~\cite{KRU}, as the ``transcript''  $\Pi_i$ is
defined  in the present paper in a different way compared to the way it is defined in~\cite{KRU}.  However, since we work in this section in the 
setting of proper synchronous protocols, 
the two definitions of a ``transcript''
 are completely equivalent in terms of information, and thus the definition of $\PIC_\eta(\pi)$ in 
the present paper is equivalent to the one of \cite{KRU}.\footnote{In fact, they would be equivalent even 
if we were not restricting ourselves to proper synchronous protocols. This is because of the appearance of $X_i,R_i$ in the conditioning.}
\begin{definition}
For any function $f$ and any input distribution $\eta$, the zero-error public information cost of $f$ is
$$\PIC_\eta(f) =  \inf\limits_{\pi} \PIC_\eta(\pi)$$
where the infimum is taken over all protocols $\pi$ which compute  $f$ with $0$ error.
\end{definition}
It was shown~\cite{KRU}  that the public information cost can be used to prove 
randomness complexity lower bounds via the following theorem.

\begin{theorem}[\cite{KRU}]\label{thm:boundprivacy}
For any function of $k$ variables $f$, for any input distribution $\eta$, 
$$\calR_\eta(f) \ge \frac{\PIC_\eta(f) - k\cdot H_\eta(f(X))}{k}~.$$
\end{theorem}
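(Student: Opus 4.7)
The plan is to prove, for every $0$-error private protocol $\pi$ computing $f$, the inequality
$$\PIC_\eta(\pi) \;\leq\; k \cdot H_\eta(f(X)) \;+\; k \cdot \calR_\eta(\pi),$$
after which the theorem follows by rearranging and observing that $\PIC_\eta(\pi) \geq \PIC_\eta(f)$ for every such $\pi$, then taking the infimum over $\pi$. The first step is to simplify each summand of $\PIC_\eta(\pi)$. Applying the chain rule to $I(X_{-i}; \Pi_i R_{-i} \mid X_i R_i R^p)$, using that $R_{-i}$ is independent of $X_{-i}$ given $(X_i,R_i,R^p)$ and that $\Pi_i$ is a deterministic function of $(X,R,R^p)$, I get $I(X_{-i}; \Pi_i R_{-i} \mid X_i R_i R^p) = I(X_{-i}; \Pi_i \mid X_i R R^p) = H(\Pi_i \mid X_i R R^p)$. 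Hence $\PIC_\eta(\pi) = \sum_{i=1}^k H(\Pi_i \mid X_i R R^p)$.

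The second step extracts from Definition \ref{def:privacy} the information-theoretic identity $I(X_{-i}; \Pi_i \mid X_i R_i R^p f(X)) = 0$, since privacy says the distribution of $\Pi_i$ given $(X_i,R_i,R^p)$ depends on $X_{-i}$ only through $f(X)$; equivalently, $H(\Pi_i \mid X_i R_i R^p f(X)) = H(\Pi_i \mid X R_i R^p)$. Chaining standard entropy inequalities then gives, for each $i$:
\begin{align*}
H(\Pi_i \mid X_i R R^p)
    &\leq H(f(X)) + H(\Pi_i \mid X_i R R^p\, f(X)) \\
    &\leq H(f(X)) + H(\Pi_i \mid X_i R_i R^p\, f(X)) \\
    &=    H(f(X)) + H(\Pi_i \mid X R_i R^p) \\
    &\leq H(f(X)) + H(\Pi \mid X R^p) \;=\; H_\eta(f(X)) + \calR_\eta(\pi),
\end{align*}
where the first inequality splits off $f(X)$, the second drops $R_{-i}$ from the conditioning (which only increases entropy), the equality invokes the privacy identity, and the last step passes from $\Pi_i$ to $\Pi$ and drops $R_i$. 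Summing over $i \in \Ni{1}{k}$ yields the target bound, and the argument sketched at the start completes the proof.

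The main obstacle I anticipate is translating the privacy statement of Definition \ref{def:privacy}---phrased pointwise in terms of messages sent to player $i$---into the conditional-mutual-information identity $I(X_{-i}; \Pi_i \mid X_i R_i R^p f(X)) = 0$ when $\Pi_i$ is the full peer-to-peer transcript (both sent and received messages) as defined in Section \ref{subsec model}. In a proper synchronous protocol, player $i$'s outgoing messages at each round are a deterministic function of $X_i$, $R_i$, $R^p$, and the incoming messages it has read so far, so the outgoing half of $\Pi_i$ carries no additional entropy given the incoming half and $(X_i,R_i,R^p)$; this should let the privacy identity transfer to $\Pi_i$ verbatim. A secondary subtlety to verify is that the conditional-independence reading of privacy holds event-by-event, i.e., for every positive-probability conditioning value $(x_i,r_i,r^p,y)$ of $(X_i,R_i,R^p,f(X))$, which requires invoking privacy for all pairs $X_{-i},X_{-i}'$ consistent with $(x_i,y)$ before integrating over $R_{-i}$.
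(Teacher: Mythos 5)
Your proposal is correct. Note that the paper itself does not prove this statement -- it is imported from~\cite{KRU} with a citation -- so there is no in-paper proof to compare against; your argument is, however, essentially the standard one and it goes through. The reduction of each $\PIC$ summand via the chain rule and the independence of $R_{-i}$ from the inputs, giving $\PIC_\eta(\pi)=\sum_i H(\Pi_i \mid X_i R R^p)$, is valid, and the chain $H(\Pi_i \mid X_i R R^p)\le H(f(X))+H(\Pi_i\mid X_i R_i R^p f(X))=H(f(X))+H(\Pi_i\mid X R_i R^p)\le H(f(X))+H(\Pi\mid X R^p)$ is correct, with the middle equality being exactly the information-theoretic reading of Definition~\ref{def:privacy}. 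The two subtleties you flag at the end are indeed the only delicate points, and your resolutions are the right ones: (i) Definition~\ref{def:privacy} speaks only of the messages received by player $i$, but for a proper synchronous protocol the full per-player transcript $\Pi_i$ is a deterministic function of $(X_i,R_i,R^p)$ and the incoming messages, so equality of the conditional distributions of the incoming part (over $R_{-i}$, for fixed $x_i,r_i,r^p$ and any two $x_{-i},x'_{-i}$ with the same $f$-value) pushes forward to the full transcript; (ii) since $R_{-i}$ is independent of $(X,R_i,R^p)$, the pointwise privacy condition for every consistent pair of inputs is precisely conditional independence of $\Pi_i$ and $X_{-i}$ given $(X_i,R_i,R^p,f(X))$ on every positive-probability conditioning value, which yields $I(X_{-i};\Pi_i\mid X_i R_i R^p f(X))=0$. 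Summing over $i\in\Ni{1}{k}$, using $\calR_\eta(\pi)=H(\Pi\mid XR^p)$ and $\PIC_\eta(f)\le\PIC_\eta(\pi)$ for every private $\pi$, and taking the infimum over private protocols gives the stated bound.
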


We will need the following property of the public information cost.

\begin{theorem}[\cite{KRU}]\label{thm:minonpub}
For any function $f$ and input distribution $\eta$,
$$\PIC_\eta(f) =  \inf\limits_{\pi\text{ computing }f\text{, using only public coins}} \PIC_\eta(\pi)~.$$
\end{theorem}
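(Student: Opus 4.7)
My plan is to prove the two inequalities separately. One direction is immediate: since the class of public-coins protocols computing $f$ is contained in the class of all protocols computing $f$, we have $\PIC_\eta(f) \le \inf_{\pi \text{ public-coins}} \PIC_\eta(\pi)$. The content is the reverse inequality, and for this I would use a simulation argument: given any protocol $\pi$ (possibly using private randomness $R_i$ and public randomness $R^p$) that computes $f$, I construct a public-coins protocol $\pi'$ that also computes $f$ and satisfies $\PIC_\eta(\pi') = \PIC_\eta(\pi)$.

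The construction is the standard "publicize the randomness" trick. In $\pi'$, the new public randomness is the concatenated string $R'^{p} = (R^p, R_1, \ldots, R_k)$, sampled from the original joint distribution; the new private randomness $R'_i$ is empty. All players then simulate $\pi$, with player $i$ using the portion of $R'^p$ corresponding to $R_i$ in place of its private randomness. Since every local view of every player is identical in distribution to its view in $\pi$, the joint distribution of $(X, \Pi_1, \ldots, \Pi_k, \text{outputs})$ is the same as in $\pi$, so $\pi'$ computes $f$ with the same correctness guarantee as $\pi$.

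It remains to verify $\PIC_\eta(\pi') = \PIC_\eta(\pi)$. For $\pi'$, since $R'_i$ is empty and $R'_{-i}$ is empty,
\[
\PIC_\eta(\pi') = \sum_{i=1}^k I(X_{-i}; \Pi_i \mid X_i\, R\, R^p),
\]
where $R = (R_1,\ldots,R_k)$, since the transcripts are unchanged. For $\pi$, applying the chain rule gives
\[
\PIC_\eta(\pi) = \sum_{i=1}^k \bigl[ I(X_{-i}; R_{-i} \mid X_i R_i R^p) + I(X_{-i}; \Pi_i \mid X_i R R^p) \bigr].
\]
The first summand vanishes because $X$ is independent of all the randomness and $R_{-i}$ is independent of $R_i, R^p$, so $X_{-i}$ and $R_{-i}$ are independent conditioned on $(X_i, R_i, R^p)$. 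Hence $\PIC_\eta(\pi) = \PIC_\eta(\pi')$, and taking the infimum yields the desired equality.

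The only subtle point (which I do not expect to be a real obstacle) is being careful that the conditional independence $X_{-i} \perp R_{-i} \mid X_i R_i R^p$ really holds in the model; this just uses the standing assumption that input and randomness are sampled independently and that the randomness sources $R_1,\ldots,R_k,R^p$ are mutually independent. Everything else is a one-line chain-rule manipulation.
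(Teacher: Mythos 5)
Your proof is correct, and it is essentially the intended argument: the paper itself states this theorem without proof, citing~\cite{KRU}, where the result rests on exactly this ``publicize the private coins'' simulation, since $\PIC$ is defined (with $R_{-i}$ inside the mutual information and $R_i,R^p$ in the conditioning) precisely so that it is invariant under moving private randomness into the public string. Your chain-rule computation and the conditional independence $I(X_{-i};R_{-i}\mid X_i R_i R^p)=0$ (from the mutual independence of $X$, $R_1,\dots,R_k$, $R^p$) are exactly the right justification, so there is nothing to add.
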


\subsection{Randomness complexity of Disjointness}

We will prove that the switched multi-party information cost  gives a lower bound on  the public information cost.
Let $\mu$ be the input distribution for the function $\AND_k$ defined in Section~\ref{sec:AND}.

\begin{theorem}\label{prop:PIC>=SIC}
For any public-coins proper synchronous $k$-player protocol $\pi$, where the players have $n$-bits inputs $X$ 
from $(X,M,Z)\sim\mu^n$, it holds that 
$$\PIC_{\mu^n}(\pi) \ge \frac{1}{2}\SIC_{\mu^n}(\pi)~.$$
\end{theorem}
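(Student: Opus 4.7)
The plan is to bound $\PIC_{\mu^n}(\pi)$ from below by each of the two sums comprising $\SIC_{\mu^n}(\pi)$ separately, and then add the two bounds to obtain the $\frac{1}{2}$ factor. Since $\pi$ is public-coins, the private randomness $R_i$ is trivial and $\Pi_i$ is a deterministic function of $(X, R^p)$; hence $\PIC_{\mu^n}(\pi) = \sum_i I(X_{-i}; \Pi_i \mid X_i R^p)$. The two bounds I aim to establish are
\[
\PIC_{\mu^n}(\pi) \geq \sum_i I(M; \Pi_i \mid X_i Z)
\qquad \text{and} \qquad
\PIC_{\mu^n}(\pi) \geq \sum_i I(X_i; \Pi_i \mid MZ);
\]
adding them yields $2\PIC_{\mu^n}(\pi) \geq \SIC_{\mu^n}(\pi)$.

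The first bound should hold per-player via a chain-rule computation that introduces $(M,Z)$. Since $\Pi_i$ is determined by $(X, R^p)$ we have $I(MZ; \Pi_i \mid X R^p) = 0$, so
\[
I(X_{-i}; \Pi_i \mid X_i R^p)
= I(MZ; \Pi_i \mid X_i R^p) + I(X_{-i}; \Pi_i \mid X_i MZ R^p)
\geq I(MZ; \Pi_i \mid X_i R^p)
\geq I(M; \Pi_i \mid X_i Z),
\]
where the last step drops $Z$ by the chain rule and drops $R^p$ from the condition using the independence of $R^p$ from $(X_i, M, Z)$. Summing over $i$ gives the first bound.

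The second bound will exploit the structure of $\mu^n$: conditioned on $(M, Z)$, the input $X$ is a product distribution across players (for each coordinate $t$, $X^t_{Z^t} = 0$ deterministically and the $X^t_j$ for $j \neq Z^t$ are mutually independent). Combined with rectangularity (Lemma~\ref{lem:rectang}) applied to $\pi$ after fixing $R^p = r$ --- which makes $\{x : \Pi_i(x, r) = \tau_i\}$ a product set $\calI_i \times \calJ_i \subseteq \calX_i \times \calX_{-i}$ --- this should yield the conditional independence $X_i \perp X_{-i} \mid MZ R^p \Pi_i$, and hence the identities
\[
I(X_{-i}; \Pi_i \mid X_i MZ R^p) = I(X_{-i}; \Pi_i \mid MZ R^p),
\qquad
H(\Pi_i \mid MZ R^p) = I(X_i; \Pi_i \mid MZ R^p) + I(X_{-i}; \Pi_i \mid MZ R^p).
\]
Substituting the first identity into the earlier decomposition, together with the independence fact $I(X_{-i}; \Pi_i \mid MZ R^p) \geq I(X_{-i}; \Pi_i \mid MZ)$, will give the per-player inequality $I(X_{-i}; \Pi_i \mid X_i R^p) \geq I(X_{-i}; \Pi_i \mid MZ)$. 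An aggregation argument over $i$ --- whose intuitive content is that each bit communicated in a proper synchronous protocol appears in the sender's transcript (contributing to $\sum_i I(X_i; \Pi_i \mid MZ)$) and in one or more receivers' transcripts (contributing to $\sum_i I(X_{-i}; \Pi_i \mid MZ)$), so the latter dominates the former --- should then upgrade this to $\PIC \geq \sum_i I(X_i; \Pi_i \mid MZ)$.

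The main obstacle is that the per-player inequality $I(X_{-i}; \Pi_i \mid X_i R^p) \geq I(X_i; \Pi_i \mid MZ)$ can fail: for example, if $\Pi_i$ is a deterministic function of $X_i$ alone (e.g., player $i$ only sends messages expressing $X_i$), the left-hand side vanishes while the right-hand side is typically positive. Hence the summed version must be established via the sender/receiver aggregation sketched above, and formalizing that aggregation carefully --- tracking how information about each player's input is split between the sender's and the receivers' transcripts --- is the crux of the proof.
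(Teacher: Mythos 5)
Your overall decomposition --- lower bounding $\PIC_{\mu^n}(\pi)$ separately by each of the two sums in $\SIC_{\mu^n}(\pi)$ and averaging --- is exactly the structure of the paper's proof, and your first bound, $\PIC_{\mu^n}(\pi)\ge\sum_i I(M;\Pi_i\mid X_i Z)$, is a correct rendering of the easy half (the paper's Lemmas~\ref{lem:PIC>=PICtilde} and~\ref{le:ic_tilde}, with $\widetilde{\IC}(\pi)=\sum_i I(X_{-i};\Pi_i\mid X_i R^p Z)$ as the intermediate quantity). The problem is the second bound, $\PIC_{\mu^n}(\pi)\ge\sum_i I(X_i;\Pi_i\mid MZ)$: this is the substantive half of the theorem, and your proposal does not prove it. What your chain of inequalities actually yields per player is $I(X_{-i};\Pi_i\mid X_iR^p)\ge I(X_{-i};\Pi_i\mid MZ)$, i.e., a bound on the information $\Pi_i$ carries about the \emph{other} players' inputs, and you then appeal to an unproven aggregation claim --- in effect $\sum_i I(X_{-i};\Pi_i\mid MZ)\ge\sum_i I(X_i;\Pi_i\mid MZ)$ --- justified only by the slogan that every transmitted bit appears both in the sender's and in some receiver's transcript. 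Mutual information does not add up per bit across different players' transcripts, so the slogan is not a proof; you yourself flag the formalization as ``the crux,'' which means the crux is missing.

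The paper closes this gap with Lemma~\ref{le:ic_hat} (together with Lemma~\ref{lem:PIC>=PIChat}), using the intermediate quantity $\widehat{\IC}(\pi)=\sum_j I(X_{-j};\overleftarrow{_{~}\Pi_j}\mid X_jR^pMZ)$ --- note the restriction to incoming messages and, crucially, the conditioning on the receiver's own input $X_j$, both absent from your target quantity. The argument is per message: (i) an induction showing $I(X_i;X_{-i}\mid MZR^p B^0\dots B^d)=0$ along player $i$'s history implies that messages \emph{received} by $i$ contribute nothing to $I(X_i;\Pi_i\mid R^pMZ)$, so only sent messages matter; (ii) for each message sent by $i$ to $j$ one proves $I(X_i;\M{i}{\ell}{s}\mid \T{i}{\ell}{s}R^pMZ)\le I(X_{-j};\M{i}{\ell}{s}\mid \T{j}{\ell'}{r}X_jR^pMZ)$, which reduces to showing $I(\M{i}{\ell}{s};\T{j}{\ell'}{r}X_j\mid \T{i}{\ell}{s}R^pMZ)=0$; this is exactly where the \emph{proper synchronous} hypothesis enters, since it guarantees that the receiver's history at the moment of receipt is a function of $(X_{-i},\T{i}{\ell}{s})$; (iii) the receiver-side terms then reassemble by the chain rule into $\widehat{\IC}(\pi)$ precisely because they are conditioned on $X_j$ and on the receiver's history. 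Your rectangularity-based identities are correct but do not substitute for this bookkeeping, and nothing in your sketch uses synchrony, without which the needed per-message independence can fail. So the proposal establishes one half of the theorem and leaves the other, harder half as an unproved claim.
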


We start with a number of notations.
Recall that we consider a {\em proper synchronous} protocol $\pi$. 
We denote by $(\M{i}{\ell}{s}\,)_{\ell\ge 0}$ the sequence of messages sent by player $i$ in the protocol $\pi$, 
ordered by local round  of player $i$, and within  each round ordered by the index of the recipient player.
Further denote by $j(i,\ell)$ the player  to which message $\M{i}{l}{s}$ is sent.
Similarly,  we denote by $(\M{i}{\ell}{r}\,)_{\ell\ge 0}$ the sequence of messages received by player $i$, 
ordered in the same way.
Observe that since $\pi$ is proper synchronous there exists a function $\ell'(i,\ell)$, for $1 \leq i \leq k$, $\ell \geq 0$,
such that $\M{i}{\ell}{s}$ and  $\M{j(i,\ell)}{\ell'(i,\ell)}{r}$ denote the same message.\footnote{
$j(i,\ell) = (\ell \mod k) +1 $; $\ell'(i,\ell) = k\cdot \lfloor \frac{\ell+1}{ k} \rfloor  + (\ell \mod k)$.}
Further, for any $\ell_0 \geq 0$, let $\T{i}{\ell_0}{s}$
be the random variable representing the so-far history, i.e., all the messages sent by player $i$ and all the
 messages received by player $i$ until player $i$ sends message $\M{i}{\ell_0}{s}$ (for the same local round we define the order by the
  identity of the player sending or receiving the message).
In a similar way, define $\T{i}{\ell_0}{r} $
to be the random variable representing the messages sent by player $i$ and the messages received by
 player $i$ until it receives message $\M{i}{\ell_0}{r}$.
Further, we denote by $\overleftarrow{_{~}\Pi_i}$ the 
partial 
transcript 
of player $i$ composed only of the incoming messages. I.e., 
$\overleftarrow{_{~}\Pi_i}$ is the $(k-1)$-tuple $(\Pi_{i,j}^r)_{j\neq i}$.

Before giving the actual proof of Theorem~\ref{prop:PIC>=SIC} we
 define  two information theoretic measures, which we will use as intermediate quantities 
in that proof. These measures are defined only with respect to
the input distribution $\mu^n$, and thus we do not indicate the distribution in the notation of these measures.

\begin{definition}
$$\widehat{\IC}(\pi)=\sum\limits_{j=1}^k I(X_{-j} ; \overleftarrow{_{~}\Pi_j} \mid X_j R^p M Z)~.$$
\end{definition}
\begin{definition}
$$\widetilde{\IC}(\pi)=\sum\limits_{i=1}^k I(X_{-i} ; \Pi_i \mid X_i R^p Z)~.$$
\end{definition}

We now start the proof with two lemmas that relate  the intermediate measures that we just defined to the measure
$\PIC$. 
\begin{lemma}\label{lem:PIC>=PIChat}
For any public-coins protocol $\pi$,
$\widehat{\IC}(\pi) \le \PIC_{\mu^n}(\pi)$. 

\end{lemma}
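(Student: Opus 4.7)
My plan is to show the inequality term-by-term, establishing for each $j$ that
\[
I(X_{-j}; \overleftarrow{_{~}\Pi_j} \mid X_j R^p M Z) \;\le\; I(X_{-j}; \Pi_j \mid X_j R^p),
\]
and then sum over $j$. Since $\pi$ is a public-coins protocol, the $R_{-i}$ and $R_i$ in the definition of $\PIC$ are trivial, so the right-hand summand matches the $i$-th term of $\PIC_{\mu^n}(\pi)$.

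First, since $\overleftarrow{_{~}\Pi_j}$ is a sub-tuple of $\Pi_j$ (just the incoming basic transcripts), the data-processing inequality gives
\[
I(X_{-j}; \overleftarrow{_{~}\Pi_j} \mid X_j R^p M Z) \;\le\; I(X_{-j}; \Pi_j \mid X_j R^p M Z).
\]
This disposes of the ``$\overleftarrow{~}$'' discrepancy, reducing the problem to removing the extra conditioning on $M$ and $Z$.

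For the second step I exploit the fact that $\pi$ is a public-coins protocol: the full transcript $\Pi_j$ is a deterministic function of $(X, R^p)$. Hence $H(\Pi_j \mid X R^p) = 0$, and a fortiori $H(\Pi_j \mid X R^p M Z) = 0$. Consequently
\[
I(X_{-j}; \Pi_j \mid X_j R^p M Z) \;=\; H(\Pi_j \mid X_j R^p M Z),
\qquad
I(X_{-j}; \Pi_j \mid X_j R^p) \;=\; H(\Pi_j \mid X_j R^p).
\]
Now the plain ``conditioning reduces entropy'' principle yields $H(\Pi_j \mid X_j R^p M Z) \le H(\Pi_j \mid X_j R^p)$, which chains with the data-processing step to give the per-$j$ inequality above. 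Summing over $j \in \Ni{1}{k}$ gives $\widehat{\IC}(\pi) \le \PIC_{\mu^n}(\pi)$.

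The only subtle point, which is really the point of the whole setup, is the second step: one cannot invoke Lemma~\ref{lem:Bra2.10} here because $X_{-j}$ and $(M,Z)$ are not independent given $X_j$ under $\mu^n$ (e.g.\ knowing some bits of $X_{-j}$ constrains the possible values of $Z$). The trick is that we do not need conditional independence; we only need the transcript to be a deterministic function of $(X, R^p)$, which turns the mutual information into a conditional entropy and lets us use monotonicity of entropy under conditioning. No other structural property of $\mu^n$ is used in this lemma.
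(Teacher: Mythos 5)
Your proposal is correct and follows essentially the same route as the paper's proof: bound the mutual information on $\overleftarrow{_{~}\Pi_j}$ by that on $\Pi_j$, pass to the conditional entropy $H(\Pi_j \mid X_j R^p M Z)$, drop the conditioning on $M,Z$ via Proposition~\ref{prop:Hcondi}, and recover $I(X_{-j};\Pi_j\mid X_j R^p)$ using the fact that $X R^p$ determines $\Pi_j$ for a public-coins protocol. The only cosmetic difference is that the paper uses the bound $I\le H$ at the intermediate step where you use an equality; your closing remark about why Lemma~\ref{lem:Bra2.10} is unavailable is a correct observation.
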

\begin{proof}
For any $i\in\Ni{1}{k}$,
\begin{align*}
I(X_{-i} ; \overleftarrow{_{~}\Pi_i} \mid X_i R^p M Z) &\le I(X_{-i} ; \Pi_i \mid X_i R^p M Z)\\
&\le H(\Pi_i \mid X_i R^p M Z)\\
&\le H(\Pi_i \mid X_i R^p)\text{~~~(by Proposition \ref{prop:Hcondi})}\\
&=  H(\Pi_i \mid X_i R^p) - H(\Pi_i \mid X_i X_{-i} R^p) \text{~~~(because $X_i X_{-i} R^p$ determines $ \Pi_i$)}\\
&= I(X_{-i} ; \Pi_i \mid X_i R^p)~.
\end{align*}
Summing over $i\in\Ni{1}{k}$ concludes the proof.
\end{proof}

\begin{lemma}\label{lem:PIC>=PICtilde}
For any public-coins protocol $\pi$,
$\widetilde{\IC}(\pi) \le \PIC_{\mu^n}(\pi)$.
\end{lemma}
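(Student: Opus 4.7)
The plan is to argue term by term: for each $i \in \Ni{1}{k}$, I will show
$$I(X_{-i} ; \Pi_i \mid X_i R^p Z) \le I(X_{-i} ; \Pi_i R_{-i} \mid X_i R_i R^p)~,$$
and then sum over $i$. Since $\pi$ is a public-coins protocol, there is no private randomness: $R_i$ and $R_{-i}$ are trivial, so the right-hand side reduces to $I(X_{-i};\Pi_i \mid X_i R^p)$. Hence the task is to show that inserting $Z$ into the conditioning does not increase the mutual information.

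First I would expand the combined quantity $I(X_{-i} Z ; \Pi_i \mid X_i R^p)$ using the chain rule in two different orders. On one hand,
$$I(X_{-i} Z ; \Pi_i \mid X_i R^p) = I(Z ; \Pi_i \mid X_i R^p) + I(X_{-i} ; \Pi_i \mid X_i R^p Z)~.$$
On the other hand,
$$I(X_{-i} Z ; \Pi_i \mid X_i R^p) = I(X_{-i} ; \Pi_i \mid X_i R^p) + I(Z ; \Pi_i \mid X R^p)~.$$
The key observation is that, because $\pi$ is a public-coins protocol, $\Pi_i$ is a deterministic function of $X$ and $R^p$, so $H(\Pi_i \mid X R^p) = 0$ and consequently $I(Z;\Pi_i \mid X R^p)=0$.

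Equating the two expansions yields
$$I(X_{-i}; \Pi_i \mid X_i R^p Z) = I(X_{-i};\Pi_i \mid X_i R^p) - I(Z;\Pi_i \mid X_i R^p) \le I(X_{-i};\Pi_i \mid X_i R^p)~,$$
where the inequality is just non-negativity of mutual information. Summing this inequality over $i\in\Ni{1}{k}$ gives exactly $\widetilde{\IC}(\pi) \le \PIC_{\mu^n}(\pi)$, as desired.

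This proof is essentially mechanical once the chain-rule trick is set up; there is no substantive obstacle. The only thing one must be careful about is the role of the public-coins hypothesis: without it, $\Pi_i$ would depend also on the private randomness $R$, and the step $I(Z;\Pi_i\mid X R^p)=0$ would fail. This is precisely why the lemma is stated for public-coins protocols, mirroring (and combining with) Theorem~\ref{thm:minonpub} in the intended downstream application.
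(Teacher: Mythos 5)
Your proof is correct. The chain rule in both orders is applied properly, the step $I(Z;\Pi_i \mid X R^p)=0$ is exactly right because a public-coins protocol has its transcript determined by $(X,R^p)$, and your observation that for public-coins protocols $\PIC_{\mu^n}(\pi)$ collapses to $\sum_i I(X_{-i};\Pi_i\mid X_i R^p)$ is the correct reading of the definition.

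The route differs slightly from the paper's, though both hinge on the same determinism fact. You effectively re-derive, inline, the statement that conditioning on $Z$ cannot increase $I(X_{-i};\Pi_i\mid X_i R^p)$ when $I(Z;\Pi_i\mid X_i X_{-i} R^p)=0$; this is precisely an instance of Lemma~\ref{lem:Bra2.9} (with $A=X_{-i}$, $B=\Pi_i$, $C=X_i R^p$, $D=Z$), so you could have shortened the argument by citing it. The paper's own proof is even more economical: it bounds $I(X_{-i};\Pi_i\mid X_i R^p Z)\le H(\Pi_i\mid X_i R^p Z)\le H(\Pi_i\mid X_i R^p)$ by ``conditioning reduces entropy,'' and then converts the entropy back into $I(X_{-i};\Pi_i\mid X_i R^p)$ by subtracting $H(\Pi_i\mid X R^p)=0$, again using that $X R^p$ determines $\Pi_i$. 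The entropy path buys brevity and avoids tracking the sign of the discarded term $I(Z;\Pi_i\mid X_i R^p)$; your double-expansion is marginally more informative, since it identifies that discarded term explicitly, but otherwise the two arguments are of the same nature and strength.
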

\begin{proof}
The proof is similar to the one of Lemma \ref{lem:PIC>=PIChat}.
For any $i\in\Ni{1}{k}$,
\begin{align*}
I(X_{-i} ; \Pi_i \mid X_i R^p Z) & \leq H(\Pi_i \mid X_i R^p Z)\\
&\le H(\Pi_i \mid X_i R^p)\text{~~~(by Proposition \ref{prop:Hcondi})}\\
&=  H(\Pi_i \mid X_i R^p) - H(\Pi_i \mid X_i X_{-i} R^p) \text{~~~(because $X_i X_{-i} R^p$ determines $ \Pi_i$)}\\
&= I(X_{-i} ; \Pi_i \mid X_i R^p)~.
\end{align*}
Summing over $i\in\Ni{1}{k}$ concludes the proof.
\end{proof}

The next two lemmas together relate $\SIC$ to the intermediate measures that we defined.
\begin{lemma}
\label{le:ic_tilde} 
For any public-coins   protocol $\pi$,
$\sum\limits_{i=1}^k I(M;\Pi_i \mid X_i R^p Z) \le \widetilde{\IC}(\pi)$.
\end{lemma}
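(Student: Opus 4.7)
The plan is to prove the lemma term-by-term: it suffices to show that, for each player $i$,
$$I(M;\Pi_i \mid X_i R^p Z) \le I(X_{-i};\Pi_i \mid X_i R^p Z),$$
and then sum over $i \in \Ni{1}{k}$ to recover $\widetilde{\IC}(\pi)$ on the right.

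The key observation I would use is that, under $\mu^n$, the bit $M$ carries information about $\Pi_i$ \emph{only through} $X$. Concretely, $M$ and $Z$ are drawn first, then $X$ is produced as a (randomized) function of $(M,Z)$, and finally the (private) randomness $R$ and public randomness $R^p$ are independent of $(M,Z,X)$ altogether. Since $\Pi_i$ is a deterministic function of $(X,R,R^p)$, once we condition on all of $X, R^p, Z$ the transcript $\Pi_i$ depends only on the independent $R$, and therefore
$$I(M;\Pi_i \mid X R^p Z) = 0.$$
This is the cleanest way to make the implicit Markov structure precise, and it is the one fact I would establish carefully before starting the information-theoretic manipulation.

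From that identity, I would expand the mutual information $I(M X_{-i}; \Pi_i \mid X_i R^p Z)$ in two ways using the chain rule:
\begin{align*}
I(M X_{-i};\Pi_i \mid X_i R^p Z) &= I(M;\Pi_i \mid X_i R^p Z) + I(X_{-i};\Pi_i \mid M X_i R^p Z),\\
I(M X_{-i};\Pi_i \mid X_i R^p Z) &= I(X_{-i};\Pi_i \mid X_i R^p Z) + I(M;\Pi_i \mid X R^p Z).
\end{align*}
Equating the two expressions, substituting $I(M;\Pi_i \mid X R^p Z)=0$, and using the nonnegativity of the mutual information $I(X_{-i};\Pi_i \mid M X_i R^p Z) \ge 0$, immediately yields
$$I(M;\Pi_i \mid X_i R^p Z) \le I(X_{-i};\Pi_i \mid X_i R^p Z),$$
which is what we wanted. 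Summing over $i$ gives the inequality $\sum_i I(M;\Pi_i\mid X_i R^p Z)\le\widetilde{\IC}(\pi)$.

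There is no real obstacle here beyond being careful about the independence structure of the input distribution: the only nontrivial step is justifying that $\Pi_i$ is conditionally independent of $M$ given $(X,R^p,Z)$, which follows from the fact that the players' private randomness $R$ is drawn independently of the triple $(M,Z,X)$ used to set the input. Once that is in hand, the rest is a short chain-rule manipulation, so I expect the proof to be only a few lines long.
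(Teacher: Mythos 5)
Your proof is correct and follows essentially the same route as the paper's: a chain-rule expansion of $I(MX_{-i};\Pi_i\mid X_i R^p Z)$ combined with the vanishing of $I(M;\Pi_i\mid X R^p Z)$. The only cosmetic difference is the justification of that vanishing term—the paper simply notes that for a public-coins protocol $XR^p$ determines $\Pi_i$, so $H(\Pi_i\mid X R^p Z)=0$, while you argue via independence of the (here nonexistent) private randomness, which reduces to the same fact.
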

\begin{proof}
We  prove that
$\forall~i\in\Ni{1}{k},~~ I(M;\Pi_i \mid X_i R^p Z) \le I(X_{-i} ; \Pi_i \mid X_i R^p Z)$.

\begin{align*}
I(M;\Pi_i \mid X_i R^p Z) &\le I(M X_{-i};\Pi_i \mid X_i R^p Z)\\
&\le I(X_{-i};\Pi_i \mid X_i R^p Z) + I(M;\Pi_i \mid X R^p Z) \text{~~~(chain rule)}\\
&\le I(X_{-i};\Pi_i \mid X_i R^p Z) + H(\Pi_i \mid X R^p Z) \\
&= I(X_{-i};\Pi_i \mid X_i R^p Z) \text{~~~(because $X R^p$ determines $\Pi_i$)}~.\\
\end{align*}

Summing over $i$ concludes the proof.
\end{proof}

The ideas behind the proof of the next lemma are similar to the ones developed in the proof of the lower 
bound on the randomness complexity of the Parity function in \cite{KRU}. However, the distribution and
the quantities involved being different, a 
different 
analysis is required here. We differ the proof of the next lemma to the appendix.

\begin{lemma}
\label{le:ic_hat}
For any public-coins proper synchronous protocol $\pi$, $\sum\limits_{i=1}^k I(X_i ; \Pi_i \mid R^p M Z) \le \widehat{\IC}(\pi)$.
\end{lemma}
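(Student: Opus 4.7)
The plan is to reduce both sides to conditional entropies using the public-coins and independence structure, and then prove the resulting entropy inequality round by round via a chain-rule plus per-message data-processing argument. Since $\pi$ is public-coins, every transcript is a deterministic function of $(X,R^p)$, and under $\mu^n$ the coordinates $X_1,\dots,X_k$ are mutually independent given $R^pMZ$.

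\textbf{Reducing each summand to an entropy.} Using $I(X_i;X_{-i}\mid R^pMZ)=0$, one obtains $I(X_i;\Pi_i\mid R^pMZ)\le I(X_i;\Pi_i\mid X_{-i}R^pMZ)=H(\Pi_i\mid X_{-i}R^pMZ)$, the last equality since $\Pi_i$ is a function of $(X,R^p)$. Decomposing $\Pi_i$ into its outgoing part $\Pi_i^{out}$ and its incoming part $\overleftarrow{_{~}\Pi_i}$: given $X_{-i}$, $R^p$ and $\Pi_i^{out}$, every other player can simulate its entire computation (its input lies in $X_{-i}$, and it receives from $i$ only messages appearing in $\Pi_i^{out}$), so $\overleftarrow{_{~}\Pi_i}$ is determined and $I(X_i;\Pi_i\mid R^pMZ)\le H(\Pi_i^{out}\mid X_{-i}R^pMZ)$. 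Symmetrically, since $\overleftarrow{_{~}\Pi_j}$ is a function of $(X,R^p)$, $I(X_{-j};\overleftarrow{_{~}\Pi_j}\mid X_jR^pMZ)=H(\overleftarrow{_{~}\Pi_j}\mid X_jR^pMZ)$. It therefore suffices to prove $\sum_i H(\Pi_i^{out}\mid X_{-i}R^pMZ)\le\sum_j H(\overleftarrow{_{~}\Pi_j}\mid X_jR^pMZ)$.

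\textbf{Chain rule and per-message reduction.} Let $O_i^t,I_j^t$ denote the round-$t$ outgoing (resp.\ incoming) messages at player $i$ (resp.\ $j$), and $M_{ij}^t$ the round-$t$ message from $i$ to $j$. Expanding both sides by the chain rule on rounds, it suffices to show for each $t$ that $\sum_i H(O_i^t\mid X_{-i}R^pMZ,\Pi_i^{out,<t})\le\sum_j H(I_j^t\mid X_jR^pMZ,\overleftarrow{_{~}\Pi_j}^{<t})$. Bounding the LHS by subadditivity ($H(O_i^t\mid\cdot)\le\sum_{j\neq i}H(M_{ij}^t\mid\cdot)$) and expanding the RHS by the chain rule in increasing order of sender index, both sides become sums indexed by ordered pairs $(i,j)$ with $i\neq j$, with one term per message $M_{ij}^t$. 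The inequality reduces to the per-message bound $H(M_{ij}^t\mid C^L_{ij})\le H(M_{ij}^t\mid C^R_{ij})$, where $C^L_{ij}=(X_{-i},R^p,M,Z,\Pi_i^{out,<t})$ and $C^R_{ij}=(X_j,R^p,M,Z,\overleftarrow{_{~}\Pi_j}^{<t},M^t_{<i,j})$.

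\textbf{The crux and main obstacle.} The key observation is that $C^R_{ij}$ is a deterministic function of $C^L_{ij}$: given $X_{-i},R^p,M,Z$ and all messages sent by $i$ through round $t-1$, one inductively simulates the entire execution through round $t-1$ (non-$i$ players know their inputs from $X_{-i}$, receive from $i$ only via $\Pi_i^{out,<t}$, and messages among non-$i$ players are reconstructed round by round), and then the round-$t$ messages whose sender is some $l\neq i$ are determined identically. Hence $\overleftarrow{_{~}\Pi_j}^{<t}$ and $M^t_{<i,j}$ are functions of $C^L_{ij}$, and $X_j\subseteq X_{-i}$; data-processing then yields the per-message inequality, and summing closes the proof. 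The main obstacle is precisely this simulation-plus-data-processing step: it crucially uses the public-coins hypothesis (with private randomness the execution would not be recoverable from $X$ and $R^p$ alone) together with the proper-synchronous structure enabling a clean round-by-round chain rule, which is exactly why the lemma is stated in that setting.
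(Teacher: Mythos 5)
Your proof is correct, and it reaches the same crux as the paper --- a message-by-message comparison of what the sender's side versus the receiver's side conditions on, powered by public coins, proper synchrony, and the conditional independence of the $X_i$'s given $R^pMZ$ --- but by a genuinely different technical route. The paper never inserts $X_{-i}$ into the left-hand conditioning: it chain-rules $I(X_i;\Pi_i\mid R^pMZ)$ over player $i$'s messages in local order, kills the incoming-message terms by an induction showing $I(X_i;X_{-i}\mid MZR^pB^0\dots B^d)=0$ for every transcript prefix (plus Lemma~\ref{lem:tec1}), re-indexes $\widehat{\IC}(\pi)$ by sender, and proves the per-message inequality $H(\M{i}{\ell}{s}\mid\T{i}{\ell}{s}R^pMZ)\le H(\M{i}{\ell}{s}\mid\T{j}{\ell'}{r}X_jR^pMZ)$ from the synchrony fact that $\T{j}{\ell'}{r}X_j$ is a function of $(X_{-i},\T{i}{\ell}{s},R^p)$ together with that same prefix-independence. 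You instead pay the independence fact once up front (via Lemma~\ref{lem:Bra2.10}) to put $X_{-i}$ into the conditioning, after which everything is a plain conditional entropy: the incoming part of $\Pi_i$ is literally determined (your simulation argument), and the per-message step is just monotonicity of conditional entropy under a deterministic map of the conditioning --- note it is this, $H(W\mid C^L)=H(W\mid C^L,g(C^L))\le H(W\mid g(C^L))$, rather than data processing proper, that you are invoking. What your route buys is brevity and transparency: the two delicate steps of the paper collapse to ``determined, hence zero residual entropy'' and ``conditioning on a function of $C^L$ can only increase entropy''. What the paper's route keeps is the sharper intermediate statement that incoming messages carry no information about $X_i$ even without conditioning on $X_{-i}$, at the price of the inductive prefix argument. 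Your use of subadditivity within a round on the left is a harmless relaxation since the right-hand side is expanded by an exact chain rule, and you rely on public coins and the fixed synchronous round structure exactly where the paper does, so no hypotheses are over- or under-used.
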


We can now give the actual proof of Theorem~\ref{prop:PIC>=SIC}.
\begin{proof}[Proof of Theorem~\ref{prop:PIC>=SIC}]
By Lemma~\ref{le:ic_tilde} and Lemma~\ref{le:ic_hat} we have that
$$\SIC_{\mu^n}(\pi) \le \widetilde{\IC}(\pi) + \widehat{\IC}(\pi)~,$$
and using Lemma~\ref{lem:PIC>=PICtilde} and Lemma~\ref{lem:PIC>=PIChat} we get
$$\SIC_{\mu^n}(\pi) \le 2\cdot\PIC_{\mu^n}(\pi)~.$$
\end{proof}

We can now give a lower bound on the public information cost of the disjointness function.

\begin{theorem}\label{thm:PICofDISJ}
Let $k>3$. For any proper synchronous protocol $\pi$ computing $\Disj_k^n$ it holds that 
$$\PIC_{\mu^n}(\pi) = \Omega(kn)~.$$
\end{theorem}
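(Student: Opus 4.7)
The plan is to chain together the three preceding results: Theorem~\ref{thm:minonpub} (reducing $\PIC$ to the public-coins setting), Theorem~\ref{prop:PIC>=SIC} (bounding $\PIC$ from below by $\frac{1}{2}\SIC$ for public-coins proper synchronous protocols), and Theorem~\ref{thm:SIC>=kn} (the $\Omega(kn)$ lower bound on $\SIC_{\mu^n}$ for any protocol externally computing $\Disj_k^n$ with bounded error).

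First, I would argue that it suffices to prove the lower bound for public-coins proper synchronous protocols. Let $\pi$ be any proper synchronous protocol computing $\Disj_k^n$ with zero error. By Theorem~\ref{thm:minonpub}, applied to $\Disj_k^n$ with the distribution $\mu^n$, there exists a public-coins protocol $\pi'$ computing $\Disj_k^n$ such that $\PIC_{\mu^n}(\pi') \le \PIC_{\mu^n}(\pi)$. A minor point to verify is that the construction underlying Theorem~\ref{thm:minonpub} (from \cite{KRU}) preserves the proper synchronous property: intuitively, it replaces private coins by public coins, which does not change the round structure, and any newly sent bits can be made to respect the global rounds. Once this is checked, we may assume $\pi'$ is itself proper synchronous.

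Next, I would observe that because player $1$ (say) outputs $\Disj_k^n(X)$ with zero error, the transcript $\Pi$ of $\pi'$ contains the correct value of $\Disj_k^n(X)$, so $\pi'$ externally $0$-computes $\Disj_k^n$. Since $k>3$, Theorem~\ref{thm:SIC>=kn} gives
\[
\SIC_{\mu^n}(\pi') = \Omega(kn).
\]
Applying Theorem~\ref{prop:PIC>=SIC} to the public-coins proper synchronous protocol $\pi'$ then yields
\[
\PIC_{\mu^n}(\pi) \;\ge\; \PIC_{\mu^n}(\pi') \;\ge\; \tfrac{1}{2}\,\SIC_{\mu^n}(\pi') \;=\; \Omega(kn),
\]
which completes the argument.

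The only non-trivial step I anticipate is the verification that the passage from an arbitrary private-coins protocol to a public-coins protocol (as used inside Theorem~\ref{thm:minonpub}) maintains the proper synchronous structure required by Theorem~\ref{prop:PIC>=SIC}. Aside from that bookkeeping, the proof is a direct concatenation of the three theorems, with Theorem~\ref{prop:PIC>=SIC} doing the conceptual heavy lifting of transferring the $\SIC$ lower bound to $\PIC$.
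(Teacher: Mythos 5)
Your overall architecture (reduce to public coins via Theorem~\ref{thm:minonpub}, then chain Theorem~\ref{thm:SIC>=kn} with Theorem~\ref{prop:PIC>=SIC}) matches the paper's, but there is a genuine gap at the step where you pass from ``$\pi'$ computes $\Disj_k^n$ with zero error'' to ``$\pi'$ \emph{externally} $0$-computes $\Disj_k^n$.'' The justification ``the transcript $\Pi$ contains the correct value'' is not valid: a player's output is a function of its \emph{input}, its randomness, and its transcript, and it is never written into $\Pi$; external computation requires a deterministic function $\theta$ of $\Pi$ \emph{alone}. Worse, the claim is actually false in exactly the regime your reduction puts you in, namely public-coins protocols. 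Consider a protocol in which a public random bit $r^p$ is used as a one-time pad: player $1$ sends $x_1\oplus r^p$, another player sends $x_2\oplus r^p$, and everyone decodes internally using $r^p$. This computes (say) an AND with zero error, yet the same transcript arises from inputs with different function values with equal probability, so no $\theta(\Pi)$ can achieve error below $1/2$. (For private-coins, all-players-output, zero-error protocols one \emph{can} recover the value from $\Pi$, but only via a rectangularity argument in the style of Lemma~\ref{le:det_rectangularity}, which you neither invoke nor need here.) Since Theorem~\ref{thm:SIC>=kn} is stated only for protocols that externally $\epsilon$-compute $\Disj_k^n$ with $\epsilon<1/2$, your invocation of it is unsupported as written.

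The paper closes this gap constructively: after reducing to public coins, it modifies the protocol by appending one extra global round in which player $1$ sends its computed output to player $2$ while all other $k(k-1)-1$ messages of that round are constant. This yields a protocol $\pi'$ that externally computes $\Disj_k^n$, remains proper synchronous, and satisfies $\PIC_{\mu^n}(\pi')\le\PIC_{\mu^n}(\pi)+1$, after which the chain $\PIC\ge\frac12\SIC=\Omega(kn)$ goes through. Your side remark about checking that the public-coins reduction preserves the proper synchronous structure is a fair (and slightly more careful than the paper's) bookkeeping point, but it does not substitute for the missing output-revealing step.
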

 
 \begin{proof}\begin{sloppypar}
By Theorem \ref{thm:minonpub}, we only have to consider public-coins protocols. Observe that
by adding an additional round to all players, such that, say, player $1$  sends to player $2$ his output, and
 all other  ${k(k-1)-1}$ messages are constant, we can convert $\pi$ into a protocol $\pi'$ {\em externally} computing $\Disj_k^n$.
By Theorem \ref{thm:SIC>=kn} and Theorem~\ref{prop:PIC>=SIC}, it holds  that
${\PIC_{\mu^n}(\pi')=\Omega(kn)}$. 
Since  $\PIC_{\mu^n}(\pi')\le\PIC_{\mu^n}(\pi)+1$, we get that ${\PIC_{\mu^n}(\pi)=\Omega(kn)}$.\end{sloppypar}
\end{proof}

Our lower bound on the randomness complexity of the disjointness function then follows.

\begin{theorem}\label{thm:obli:Rdisj>kn}
Let $k>3$.
$\calR(\Disj_k^n) = \Omega(n)$.
\end{theorem}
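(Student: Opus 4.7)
The plan is to derive the randomness lower bound by combining the $\PIC$ lower bound of Theorem~\ref{thm:PICofDISJ} with the general reduction from randomness complexity to public information cost given by Theorem~\ref{thm:boundprivacy}, and then lift the resulting entropy-based bound on the randomness to a bound on the number of random bits via Lemma~\ref{lem:randomness_of_function}.

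Specifically, I would fix the input distribution $\eta=\mu^n$ used throughout Section~\ref{sec:AND} and Section~\ref{sec:Disj}, and argue as follows. First, since every synchronous private protocol can be converted into a proper synchronous private protocol without altering the private randomness used or the relevant information-theoretic quantities (this is exactly the transformation described just before the definition of proper synchronous protocols, appealing to Lemma~\ref{lemma:alaKonig} for finiteness), the infimum defining $\PIC_{\mu^n}(\Disj_k^n)$ is achieved (up to the inoffensive modification) over proper synchronous protocols. Theorem~\ref{thm:PICofDISJ} therefore gives
\[
\PIC_{\mu^n}(\Disj_k^n) \;=\; \Omega(kn).
\]

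Next, I plug this into Theorem~\ref{thm:boundprivacy}. The output of $\Disj_k^n$ is a single bit, so
\[
H_{\mu^n}(\Disj_k^n(X)) \le 1,
\]
and thus $k\cdot H_{\mu^n}(\Disj_k^n(X)) \le k$. Consequently,
\[
\calR_{\mu^n}(\Disj_k^n) \;\ge\; \frac{\PIC_{\mu^n}(\Disj_k^n) - k\cdot H_{\mu^n}(\Disj_k^n(X))}{k} \;\ge\; \frac{\Omega(kn) - k}{k} \;=\; \Omega(n).
\]

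Finally, by Lemma~\ref{lem:randomness_of_function}, the existence of an input distribution (namely $\mu^n$) witnessing $\calR_{\mu^n}(\Disj_k^n) = \Omega(n)$ yields $\calR(\Disj_k^n) = \Omega(n)$, completing the proof. There is essentially no main obstacle here beyond bookkeeping: all the heavy lifting has already been done in Theorem~\ref{thm:PICofDISJ} (which itself rested on Theorem~\ref{prop:PIC>=SIC} and the $\SIC$ lower bound for Disjointness). The only conceptual point worth noting is that the $k\cdot H_\eta(f(X))$ correction term in Theorem~\ref{thm:boundprivacy} is harmless precisely because $\Disj_k^n$ is Boolean, so it is dominated by the $\Omega(kn)$ lower bound on $\PIC$ as soon as $n$ is at least a sufficiently large constant.
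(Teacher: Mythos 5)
Your proposal is correct and follows essentially the same route as the paper: invoke Theorem~\ref{thm:PICofDISJ} for $\PIC_{\mu^n}(\Disj_k^n)=\Omega(kn)$, feed it into Theorem~\ref{thm:boundprivacy}, and pass from $\calR_{\mu^n}$ to $\calR$ via Lemma~\ref{lem:randomness_of_function}. The only (harmless) difference is that the paper observes $H_{\mu^n}(\Disj_k^n(X))=0$, since the function is identically $0$ on the support of $\mu^n$, whereas you bound it by $1$ and absorb the $k$-term asymptotically, which works just as well.
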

\begin{proof}
By Theorem \ref{thm:PICofDISJ}, $\PIC_{\mu^n}(\Disj_k^n)=\Omega(kn)$.
But $H_{\mu^n}(\Disj_k^n)=0$, and applying Theorem~\ref{thm:boundprivacy} 
we get 
$$ \calR_{\mu^n}(\Disj_k^n)\ge \frac{\Omega(kn)}{k} = \Omega(n)~.$$
By Lemma~\ref{lem:randomness_of_function} we have that any proper synchronous
protocol privately computing  $\Disj_k^n$ must use $\Omega(n)$ random bits.
Since by Lemma~\ref{lemma:alaKonig} any (synchronous) private protocol can be transformed into a proper synchronous
one without increasing the number of random bits used we have that
$$\calR(\Disj_k^n)=\Omega(n)~.$$
\end{proof}
\LongVersionEnd

\section{Conclusions and open problems}
\label{sec:conclusions}

We introduce  new  models and new information theoretic tools for the study of  
communication complexity,  and other complexity measures, in the natural  peer-to-peer, multi-party, 
number-in-hand setting. We prove a number of properties of
our new models and measures, and exemplify their effectiveness by proving two  
lower bounds on communication complexity, as well as a lower bound on the
amount of randomness necessary for certain private computations.

 To the best of our knowledge, our lower bounds  on communication complexity are the first tight 
(non-trivial)
lower bounds on communication complexity in the natural {\em peer-to-peer} multi-party setting, and 
our lower bound
on the randomness complexity of private computations is the first  that grows with the size of the input, while the 
computed function is a boolean one (i.e., the size of the output does not grow with the size 
of the input).

We believe that our models and tools may find additional applications and may open the way to further  study of the
natural peer-to-peer setting and to the building of a more solid bridge between the the fields
of communication complexity and of distributed computation.

Our work raises a number of
questions. 
First, how can one relax the  restrictions that we impose on the general asynchronous model  
and still prove communication complexity lower bounds in a peer-to-peer setting?
Our work seems to suggest that  novel techniques and ideas, possibly not based on information theory,
  are necessary for this task, and it would be 
most interesting to find those.
Second, it would be interesting to identify 
the necessary and sufficient conditions that guarantee the ``rectangularity'' property of communication 
protocols in a peer-to-peer setting. While this property is fundamental to the analysis of two-party protocols, it
 turns out that once one turns to the multi-party peer-to-peer setting, not only does  this property become
 subtle to prove,  but also this property does not always hold. Given the central (and sometimes implicit) role 
of the rectangularity property in the literature, it would be  interesting to identify when it holds in the multi-party 
peer-to-peer number-in-hand setting.

\paragraphA{Acknowledgments.}
We thank Iordanis Kerenidis and Rotem Oshman  for very useful discussions.

{
\nocite{CT}
\nocite{Kleene}
\nocite{KY}
\small
\bibliography{bib_dist}
}

\LongVersion
\appendix

\section{Background in Information Theory}
We give a reminder on basic information theory tools that are of use in the present paper.
 A good reference is the book of Cover and Thomas \cite{CT}.
We always consider a probability space over a discrete domain.

\subsection{Entropy and mutual information}

\begin{definition}
The \textit{entropy}\footnote{In this paper we refer to the binary entropy by simply saying ``entropy''.} of a random variable $X$ is
$$H(X) = \sum\limits_{x}\Pr[X=x]\log\left(\frac{1}{\Pr[X=x]}\right)~.$$
We further use the notation 
$$ H(X \mid Y=y) = \sum\limits_{x}\Pr[X=x  \mid Y=y ]\log\left(\frac{1}{\Pr[X=x  \mid Y=y]}\right)~.$$
The \textit{conditional entropy} $H(X \mid Y)$ is defined as $\Exp\limits_y[H(X \mid Y=y)]$.
\end{definition}

\begin{proposition}\label{prop:Hcondi}
For any random variables $X$ and $Y$, $H(X \mid Y) \le H(X)$.
\end{proposition}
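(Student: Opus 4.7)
The plan is to derive this as an immediate consequence of the non-negativity of mutual information, which is a standard fact. Recall that the mutual information can be written as $I(X;Y) = H(X) - H(X \mid Y)$, so proving $I(X;Y) \ge 0$ is equivalent to the claim. Thus the whole task reduces to establishing $I(X;Y) \ge 0$.

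To establish non-negativity of mutual information, I would expand
\begin{equation*}
I(X;Y) = \sum_{x,y} \Pr[X=x, Y=y] \log \frac{\Pr[X=x, Y=y]}{\Pr[X=x]\Pr[Y=y]},
\end{equation*}
which is the Kullback--Leibler divergence $D(p_{XY} \,\|\, p_X \otimes p_Y)$. Then I would invoke Gibbs' inequality (equivalently, apply Jensen's inequality to the concave function $\log$ using the probability distribution $p_{XY}$, noting that $\sum_{x,y} p_{XY}(x,y) \cdot \frac{p_X(x) p_Y(y)}{p_{XY}(x,y)} = 1$) to obtain
\begin{equation*}
-I(X;Y) = \sum_{x,y} p_{XY}(x,y) \log \frac{p_X(x)p_Y(y)}{p_{XY}(x,y)} \le \log 1 = 0,
\end{equation*}
which yields $I(X;Y) \ge 0$ and therefore $H(X \mid Y) \le H(X)$.

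There is no real obstacle here: the proposition is textbook material (see, e.g., Cover--Thomas, which the paper already cites as a reference for information theoretic background). The only minor care point is handling the degenerate terms where $\Pr[X=x, Y=y]=0$, which are treated with the standard convention $0 \log 0 = 0$; this is routine and does not affect the argument. Given that this is a preliminaries section of an appendix, the proof can be stated very concisely, essentially just citing Jensen/Gibbs and writing the two-line chain above.
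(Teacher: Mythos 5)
Your proof is correct: writing $I(X;Y)=H(X)-H(X\mid Y)$ as the KL divergence $D(p_{XY}\,\|\,p_X\otimes p_Y)$ and invoking Gibbs'/Jensen's inequality (with the $0\log 0=0$ convention) is the standard argument. The paper itself offers no proof of this proposition --- it is stated in the background appendix as a textbook fact with Cover--Thomas cited as the reference --- and your argument is precisely the standard one found there, so there is nothing to reconcile.
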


The entropy of a random variable is always non-negative. 

\begin{theorem}[Shannon]\label{thm:Shannon}
For all prefix-free finite set $\calX \subseteq \{0,1\}^{*}$ and all random variable $X$ with support $\text{supp}(X) \subseteq \calX$, it holds
$$H(X) \le \Exp[ | X |]~.$$
\end{theorem}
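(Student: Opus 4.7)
The plan is to deduce the bound from the classical one-two punch of Kraft's inequality and Gibbs' inequality (equivalently, non-negativity of KL divergence). First I would establish Kraft's inequality for the prefix-free set: $\sum_{x \in \calX} 2^{-|x|} \le 1$. This is a combinatorial fact that one proves by identifying each string $x \in \{0,1\}^*$ with the set of infinite binary sequences having $x$ as a prefix (a ``cylinder'' of measure $2^{-|x|}$ under the uniform Bernoulli measure on $\{0,1\}^{\mathbb{N}}$); the prefix-free condition says these cylinders are pairwise disjoint, so the sum of their measures is at most $1$.

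Next, let $p(x) = \Pr[X=x]$ for $x \in \text{supp}(X)$, and set $Z := \sum_{x \in \text{supp}(X)} 2^{-|x|}$, so that $Z \le 1$ by Kraft. Define the auxiliary probability distribution $q(x) := 2^{-|x|}/Z$ on $\text{supp}(X)$. The second step is to apply Gibbs' inequality (i.e., the fact that the Kullback--Leibler divergence $D(p \,\|\, q) = \sum_x p(x) \log \frac{p(x)}{q(x)}$ is non-negative, which itself follows from the concavity of $\log$ via Jensen's inequality). This yields
\[
H(X) \;=\; -\sum_{x} p(x) \log p(x) \;\le\; -\sum_{x} p(x) \log q(x) \;=\; \sum_{x} p(x)\, |x| \;+\; \log Z \;=\; \Exp[|X|] + \log Z.
\]
Since $Z \le 1$ implies $\log Z \le 0$, we conclude $H(X) \le \Exp[|X|]$, as desired.

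The only part that requires genuine thought is Kraft's inequality; everything else is mechanical. There are no subtle issues specific to this paper's setting (the random variable $X$ is over a discrete set of bitstrings, exactly as in the standard source-coding theorem), so I do not anticipate any real obstacle — the proof is textbook, and a half-page write-up invoking Kraft plus Gibbs suffices.
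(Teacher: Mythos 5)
Your argument is correct: Kraft's inequality for the prefix-free set (via the disjoint-cylinder argument), followed by Gibbs' inequality applied to the auxiliary distribution $q(x)=2^{-|x|}/Z$ on $\mathrm{supp}(X)$, gives $H(X)\le \Exp[|X|]+\log Z\le \Exp[|X|]$ since $Z\le 1$, and the edge cases (sub-normalized $Z<1$, zero-probability points) are handled properly. The paper itself offers no proof of this statement --- it is listed in the background appendix as a classical fact (Shannon's source-coding lower bound, with Cover--Thomas as the reference) --- so there is no in-paper argument to compare against; yours is the standard textbook derivation and suffices.
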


\begin{definition}
The \textit{mutual information between two random variables $X,Y$} is
$$I(X;Y) = H(X) - H(X \mid Y)~.$$
The mutual information of $X$ and $Y$ conditioned on $Z$ is 
$$I(X;Y \mid Z) = H(X \mid Z) - H(X \mid YZ)~.$$
\end{definition}
The mutual information measures the change in the entropy of $X$ when one learns the value of $Y$. It is symmetric, and non-negative.

\begin{proposition}\label{prop:indvarandI}
For any random variables $X$, $Y$ and $Z$, $I(X;Y \mid Z) = 0$ if and only if $X$ and $Y$ are independent conditioned on every possible value of $Z$.
\end{proposition}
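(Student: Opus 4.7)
The plan is to reduce the conditional statement to its unconditional counterpart and then invoke the standard Gibbs inequality. First I would unfold the definition of conditional mutual information as an average:
\[
I(X;Y \mid Z) \;=\; \mathbb{E}_z\bigl[I(X;Y \mid Z=z)\bigr] \;=\; \sum_{z} \Pr[Z=z]\, I(X;Y \mid Z=z).
\]
Each conditional summand $I(X;Y \mid Z=z)$ is itself a mutual information of the random variables $X$ and $Y$ under the conditional law given $Z=z$, and is therefore non-negative. Since a sum of non-negative terms vanishes iff each term does, $I(X;Y \mid Z)=0$ holds iff $I(X;Y \mid Z=z)=0$ for every $z$ with $\Pr[Z=z]>0$. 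This reduces the claim to showing, in the unconditional setting, that $I(X;Y)=0$ iff $X$ and $Y$ are independent.

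For this unconditional step, I would rewrite mutual information as a Kullback--Leibler divergence between the joint distribution and the product of marginals:
\[
I(X;Y) \;=\; \sum_{x,y} p(x,y)\log\frac{p(x,y)}{p(x)p(y)} \;=\; D\bigl(p_{XY}\,\|\,p_X p_Y\bigr).
\]
By Gibbs' inequality (equivalently, Jensen's inequality applied to the strictly convex function $-\log$), for any two discrete distributions $p,q$ on a common support one has $D(p\|q)\ge 0$, with equality iff $p=q$ pointwise. Taking $p=p_{XY}$ and $q=p_Xp_Y$, equality gives $p(x,y)=p(x)p(y)$ for all $x,y$, which is precisely the definition of independence of $X$ and $Y$. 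The converse is immediate: if $X$ and $Y$ are independent then $p(x,y)=p(x)p(y)$ makes every term in the sum zero.

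Combining the two steps, $I(X;Y \mid Z)=0$ iff for every $z$ in the support of $Z$ the conditional distributions satisfy $p(x,y \mid z) = p(x\mid z)\,p(y\mid z)$, i.e.\ $X$ and $Y$ are independent conditioned on every possible value of $Z$, which is exactly the claim. There is no real obstacle here; the only non-trivial ingredient is the equality case in Gibbs' inequality, which is a standard consequence of the strict convexity of $-\log$ and is routinely available.
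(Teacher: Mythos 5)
Your argument is correct: the averaging decomposition $I(X;Y\mid Z)=\sum_z \Pr[Z=z]\,I(X;Y\mid Z=z)$ together with the identification of each summand as $D\bigl(p_{XY\mid z}\,\|\,p_{X\mid z}\,p_{Y\mid z}\bigr)$ and the equality case of Gibbs' inequality is exactly the standard textbook proof. The paper itself states this proposition as background (citing Cover and Thomas) and gives no proof, so there is nothing to diverge from; your write-up is the argument the paper implicitly relies on.
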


We will use extensively the following proposition, known under the name of \textit{chain rule}.
\begin{proposition}\label{prop:CR}
For any random variables $A$, $B$, $C$, $D$,
$$I(AB ; C \mid D) = I(A ; C \mid D) + I(B ; C \mid DA)~.$$
\end{proposition}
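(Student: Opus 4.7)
The plan is to derive the chain rule for mutual information by reducing it, via the definition $I(X;Y\mid Z) = H(X\mid Z) - H(X\mid YZ)$, to the analogous chain rule for conditional entropy, namely $H(AB\mid D) = H(A\mid D) + H(B\mid AD)$. The latter is a standard fact that follows directly from $\Pr[A=a, B=b \mid D=d] = \Pr[A=a\mid D=d]\cdot \Pr[B=b \mid A=a, D=d]$ by taking logarithms inside the expectation defining entropy; I would state it as a one-line lemma (or simply invoke it) since the paper already relies on such standard identities.

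First I would expand the left-hand side:
\[
I(AB;C\mid D) = H(AB\mid D) - H(AB\mid CD).
\]
Then I would apply the entropy chain rule to each term on the right, conditioning on $D$ in the first and on $CD$ in the second:
\[
H(AB\mid D) = H(A\mid D) + H(B\mid AD),\qquad H(AB\mid CD) = H(A\mid CD) + H(B\mid ACD).
\]

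Substituting and regrouping yields
\[
I(AB;C\mid D) = \bigl[H(A\mid D) - H(A\mid CD)\bigr] + \bigl[H(B\mid AD) - H(B\mid ACD)\bigr],
\]
and recognizing each bracket as a conditional mutual information gives exactly
\[
I(AB;C\mid D) = I(A;C\mid D) + I(B;C\mid AD),
\]
as claimed. There is no real obstacle here: the only thing to be careful about is the conditioning pattern when invoking the entropy chain rule (one must condition the second application on $CD$, not just $D$, so that the $H(B\mid \cdot)$ terms differ by exactly the $C$ that appears in the mutual information). Everything else is bookkeeping, and no nontrivial information-theoretic inequality (such as data-processing or non-negativity) is needed — only the definitions and the entropy chain rule.
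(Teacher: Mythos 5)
Your proof is correct: the reduction to the conditional entropy chain rule, applied once conditioned on $D$ and once on $CD$, and then regrouping the four entropy terms, is exactly the standard derivation. The paper states Proposition~\ref{prop:CR} as a known fact from the information-theory background (cf.~\cite{CT}) without giving a proof, so your argument simply supplies the canonical justification and there is nothing to object to.
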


The \textit{data processing inequality} expresses the fact that information can only be lost when applying a function to a random variable.

\begin{proposition}\label{prop:dataproc}
For any random variables $X$, $Y$, $Z$, and any function $f$
$$I(X ; f(Y) \mid Z)\le I(X ; Y \mid Z)~.$$
\end{proposition}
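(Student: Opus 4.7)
The plan is to apply the chain rule (Proposition~\ref{prop:CR}) in two different ways to the joint mutual information $I(X;Y f(Y) \mid Z)$, and then extract the desired inequality. Since $f(Y)$ is a deterministic function of $Y$, the pair $(Y, f(Y))$ carries exactly the same information as $Y$ alone, so intuitively nothing is lost by considering this joint variable.

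First, I would expand in the order $Y$ then $f(Y)$:
\[ I(X ; Y f(Y) \mid Z) = I(X ; Y \mid Z) + I(X ; f(Y) \mid Z Y)~. \]
The crucial observation is that the second term vanishes. Indeed, once $Y$ is known, $f(Y)$ is constant, hence $f(Y)$ is independent of $X$ conditioned on any value of $(Y, Z)$. By Proposition~\ref{prop:indvarandI}, this gives $I(X ; f(Y) \mid Z Y) = 0$, so $I(X ; Y f(Y) \mid Z) = I(X ; Y \mid Z)$.

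Next, I would expand the same quantity in the opposite order, $f(Y)$ then $Y$:
\[ I(X ; Y f(Y) \mid Z) = I(X ; f(Y) \mid Z) + I(X ; Y \mid Z f(Y))~. \]
Combining the two expansions yields
\[ I(X ; Y \mid Z) = I(X ; f(Y) \mid Z) + I(X ; Y \mid Z f(Y))~, \]
and since mutual information (including conditional mutual information) is always non-negative, the second summand on the right is at least $0$, which immediately gives $I(X ; f(Y) \mid Z) \le I(X ; Y \mid Z)$ as claimed.

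I do not expect any real obstacle: the proof is a textbook two-line argument. The only care needed is in justifying the vanishing term $I(X ; f(Y) \mid Z Y) = 0$, and this is immediate from the definition of conditional mutual information together with Proposition~\ref{prop:indvarandI}, since conditioning on $Y$ makes $f(Y)$ a constant (and a constant is independent of everything).
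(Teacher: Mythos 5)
Your proof is correct: the two chain-rule expansions of $I(X;Yf(Y)\mid Z)$, together with $I(X;f(Y)\mid ZY)=0$ and non-negativity of conditional mutual information, give exactly the claimed inequality. The paper states this proposition as standard background (citing Cover and Thomas) without giving a proof, and your argument is precisely the standard textbook one, so there is nothing to reconcile.
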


We will occasionally make use of the two following lemmas, which allow to add or remove a random variable from the conditioning.

\begin{lemma}[\cite{Bra}]\label{lem:Bra2.9}
For any random variables $A$, $B$, $C$, $D$ such that $I(B ; D \mid AC) = 0$,
$$I(A;B \mid C) \ge I(A ; B \mid CD)~.$$
\end{lemma}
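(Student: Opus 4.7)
The plan is to apply the chain rule (Proposition~\ref{prop:CR}) to the joint mutual information $I(AD;B\mid C)$ in two different ways, and compare the resulting expressions. The hypothesis $I(B;D\mid AC) = 0$ will be used to kill one of the terms in the first expansion, while non-negativity of mutual information will be used to drop a term in the second expansion.

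First I would expand $I(AD;B\mid C)$ by conditioning on $A$ first:
\[
I(AD;B\mid C) = I(A;B\mid C) + I(D;B\mid CA) = I(A;B\mid C),
\]
where the second equality uses the hypothesis $I(B;D\mid AC) = 0$ (noting symmetry of mutual information). Next I would expand the same quantity by conditioning on $D$ first:
\[
I(AD;B\mid C) = I(D;B\mid C) + I(A;B\mid CD).
\]
Equating the two expressions and discarding the non-negative term $I(D;B\mid C) \ge 0$ yields the claim $I(A;B\mid C) \ge I(A;B\mid CD)$.

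There is essentially no obstacle here; the proof is a two-line manipulation and the only ``choice'' is to recognize that $I(AD;B\mid C)$ is the right quantity to expand in two ways. The role of the hypothesis $I(B;D\mid AC)=0$ is precisely to make the chain-rule expansion starting from $A$ collapse to $I(A;B\mid C)$, thereby matching the other expansion whose residual term $I(A;B\mid CD)$ is what we want to bound.
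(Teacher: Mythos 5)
Your proof is correct: the double chain-rule expansion of $I(AD;B\mid C)$, using $I(D;B\mid CA)=I(B;D\mid AC)=0$ in one expansion and non-negativity of $I(D;B\mid C)$ in the other, is exactly the standard argument for this fact. The paper itself gives no proof (it cites the lemma from \cite{Bra}), and your argument coincides with the canonical one found there, so there is nothing to add.
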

\begin{lemma}[\cite{Bra}]\label{lem:Bra2.10}
For any random variables $A$, $B$, $C$, $D$ such that $I(B ; D \mid C) = 0$,
$$I(A;B \mid C) \le I(A ; B \mid CD)~.$$
\end{lemma}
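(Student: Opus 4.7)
The plan is to apply the chain rule to $I(A;BD\mid C)$ in two different orders and then exploit the hypothesis $I(B;D\mid C)=0$ to extract the desired inequality, all while keeping every quantity manifestly nonnegative.

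First, I would write the chain rule two ways:
\begin{align*}
I(A;BD\mid C) &= I(A;B\mid C) + I(A;D\mid BC),\\
I(A;BD\mid C) &= I(A;D\mid C) + I(A;B\mid CD).
\end{align*}
Equating these gives
$$I(A;B\mid CD) - I(A;B\mid C) \;=\; I(A;D\mid BC) - I(A;D\mid C),$$
so it suffices to show that the right-hand side is nonnegative.

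Next, I would apply the chain rule to $I(AB;D\mid C)$ in two different orders:
\begin{align*}
I(AB;D\mid C) &= I(B;D\mid C) + I(A;D\mid BC),\\
I(AB;D\mid C) &= I(A;D\mid C) + I(B;D\mid AC).
\end{align*}
Using the hypothesis $I(B;D\mid C)=0$, the first line simplifies to $I(AB;D\mid C) = I(A;D\mid BC)$, and therefore
$$I(A;D\mid BC) \;=\; I(A;D\mid C) + I(B;D\mid AC) \;\geq\; I(A;D\mid C),$$
since mutual information is always nonnegative.

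Combining the two displays yields $I(A;B\mid CD) - I(A;B\mid C) = I(B;D\mid AC) \ge 0$, which is precisely the claim. There is essentially no obstacle here: the entire argument is two applications of the chain rule together with the fact that a particular conditional mutual information is nonnegative; the hypothesis $I(B;D\mid C)=0$ is used exactly once, to cancel the cross term. The only thing to be mindful of is invoking the chain rule in the correct direction so that the ``extra'' term that appears on each side is recognizable as a conditional mutual information (and thus nonnegative), rather than as a difference of entropies of indeterminate sign.
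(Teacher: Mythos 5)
Your proof is correct: the two chain-rule expansions of $I(A;BD\mid C)$ and $I(AB;D\mid C)$, combined with the hypothesis $I(B;D\mid C)=0$ and nonnegativity of conditional mutual information, give the identity $I(A;B\mid CD)=I(A;B\mid C)+I(B;D\mid AC)\ge I(A;B\mid C)$. The paper itself states this lemma only as a citation to \cite{Bra} without reproducing a proof, and your argument is exactly the standard chain-rule proof of that cited result, so there is nothing to add.
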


We  further give  a  lemma which is an certain extension of the data processing inequality, 
allowing the processing to depend also on part of the conditioning.

\begin{lemma}\label{lem:tec1}
Let $A$, $B$, $C$, $D$, $\phi=\varphi(C,B)$ be random variables.
Then,
$$
I(A;\phi\mid C D) \leq I(A;B\mid C D)~.
$$
\end{lemma}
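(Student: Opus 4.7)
The plan is to apply the chain rule twice to expand $I(A; B\phi \mid CD)$ in two different ways, and then exploit the fact that $\phi$ is a deterministic function of $(C,B)$.

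First, I would write
\begin{align*}
I(A; B\phi \mid CD) &= I(A; B \mid CD) + I(A; \phi \mid CDB)
\end{align*}
by the chain rule (Proposition~\ref{prop:CR}). Since $\phi = \varphi(C,B)$ is fully determined by $C$ and $B$, we have $H(\phi \mid CDB) = 0$, hence $I(A; \phi \mid CDB) = 0$. Thus $I(A; B\phi \mid CD) = I(A; B \mid CD)$.

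Next, applying the chain rule in the other order gives
\begin{align*}
I(A; B\phi \mid CD) &= I(A; \phi \mid CD) + I(A; B \mid CD\phi).
\end{align*}
Since mutual information is non-negative, $I(A; B \mid CD\phi) \geq 0$, so $I(A; B\phi \mid CD) \geq I(A; \phi \mid CD)$. Combining the two displays yields the desired inequality $I(A;\phi\mid CD) \leq I(A;B\mid CD)$.

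I do not anticipate any real obstacle: this is a standard generalization of the data processing inequality where the processing function is allowed to depend on (part of) the conditioning. An equally valid alternative would be to condition on every fixed value $(C,D)=(c,d)$, note that $\phi$ then becomes a function of $B$ alone, invoke the ordinary data processing inequality (Proposition~\ref{prop:dataproc}), and average over $(c,d)$, but the chain rule argument above is shorter and avoids the need to reduce to pointwise conditioning.
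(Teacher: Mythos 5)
Your proof is correct, but it takes a different route from the one in the paper. The paper's proof is exactly the ``alternative'' you mention at the end: it fixes the value $C=c$, observes that $\varphi(c,\cdot)$ is then an ordinary function of $B$ alone, applies the plain data processing inequality (Proposition~\ref{prop:dataproc}) for each $c$, and averages over $c$. Your argument instead expands $I(A;B\phi\mid CD)$ by the chain rule (Proposition~\ref{prop:CR}) in two orders, using $H(\phi\mid CDB)=0$ to kill one term and nonnegativity of conditional mutual information to drop the other. Both are sound; the chain-rule route is slightly more informative, since it actually establishes the equality $I(A;B\phi\mid CD)=I(A;B\mid CD)$ (i.e., $\phi$ adds nothing to $B$ given the conditioning), of which the lemma is an immediate consequence, and it avoids reducing to pointwise conditioning. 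The paper's route has the merit of being a one-line reduction to the data processing inequality it has already stated, which is why it is phrased as an ``extension of the data processing inequality where the processing may depend on part of the conditioning.''
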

\begin{proof}
\begin{align*}
I(A;\phi\mid C D)&=I(A;\varphi(C,B)\mid C D)\\
&=\Exp\limits_c[I(A;\varphi(c,B)\mid C=c, D)]\\
&\le\Exp\limits_c[I(A;B\mid C=c, D)]~~~\text{ (by the data processing inequality, Proposition~\ref{prop:dataproc})}\\
&\le I(A;B\mid C D).
\end{align*}
\end{proof}

\subsection{Hellinger distance}

We will make an extensive use of the Hellinger distance.

\begin{definition}
Let $P$ and $Q$ be two distributions over a domain $\Omega$. The \textit{Hellinger distance} between $P$ and $Q$ is $h(P,Q)=\frac{1}{\sqrt{2}}\sqrt{\sum\limits_{\omega\in\Omega}\mid\sqrt{P(\omega)}-\sqrt{Q(\omega)}\mid^2}$.
\end{definition}

It can be easily checked that the Hellinger distance is indeed a ``distance''.  
When using the square of the Hellinger distance, we often use the following identity.

\begin{proposition}\label{prop:Hellsquare}
Let $P$ and $Q$ be two distributions over a domain $\Omega$.
$$h^2(P,Q)=1-\sum\limits_{\omega\in\Omega}\sqrt{P(\omega)Q(\omega)}~.$$
\end{proposition}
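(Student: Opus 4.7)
The plan is to prove Proposition~\ref{prop:Hellsquare} by direct algebraic expansion starting from the definition of the Hellinger distance. This is a standard identity and the proof is a short computation; there is no real obstacle, only a bit of care with the normalization factor $\tfrac{1}{2}$ coming from the $\tfrac{1}{\sqrt{2}}$ in the definition.

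First I would square the definition to get
$$h^2(P,Q) \;=\; \tfrac{1}{2}\sum_{\omega\in\Omega}\bigl(\sqrt{P(\omega)}-\sqrt{Q(\omega)}\bigr)^2.$$
Then I would expand the square inside the sum term by term, writing
$$\bigl(\sqrt{P(\omega)}-\sqrt{Q(\omega)}\bigr)^2 \;=\; P(\omega) \;-\; 2\sqrt{P(\omega)Q(\omega)} \;+\; Q(\omega),$$
and split the sum into three pieces.

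Finally, I would use the fact that $P$ and $Q$ are probability distributions on $\Omega$, so that $\sum_\omega P(\omega)=\sum_\omega Q(\omega)=1$. Substituting gives
$$h^2(P,Q) \;=\; \tfrac{1}{2}\Bigl(1 \;+\; 1 \;-\; 2\sum_{\omega\in\Omega}\sqrt{P(\omega)Q(\omega)}\Bigr) \;=\; 1 \;-\; \sum_{\omega\in\Omega}\sqrt{P(\omega)Q(\omega)},$$
which is the claimed identity. The only ``obstacle'' worth flagging is making sure one does not drop the factor of $\tfrac{1}{2}$ from the definition, since it is exactly what cancels the $1+1=2$ coming from the two probability sums and turns the expression into $1$ minus the Bhattacharyya coefficient.
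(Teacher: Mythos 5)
Your proof is correct: squaring the definition, expanding $(\sqrt{P(\omega)}-\sqrt{Q(\omega)})^2$, and using that both distributions sum to $1$ is exactly the standard computation, with the factor $\tfrac{1}{2}$ handled properly. The paper itself states this proposition as background without proof, so your direct expansion is precisely the intended (and only natural) argument.
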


Hellinger distance can be related to mutual information by the following relation.
\begin{lemma}[\cite{BJKS}]\label{lem:ICandHel}
Let $\eta_0,\eta_1$ be two distributions over the same domain, and suppose that $Y$ is generated as follows: 
first select $S$ uniformly in $\{0,1\}$, and then sample $Y$ according to $\eta_S$. Then $I(S;Y) \ge [h(\eta_0,\eta_1)]^2$.
\end{lemma}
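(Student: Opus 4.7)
The plan is to use the elementary observation that, since $S$ is uniform on $\{0,1\}$, $I(S;Y) = H(S) - H(S \mid Y) = 1 - H(S \mid Y)$, and then to bound the conditional entropy $H(S \mid Y)$ from above in terms of the Bhattacharyya affinity $\sum_y \sqrt{\eta_0(y)\eta_1(y)}$, which is exactly $1 - h^2(\eta_0,\eta_1)$ by Proposition \ref{prop:Hellsquare}.

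First, I would compute the posterior distribution of $S$ given $Y = y$ via Bayes' rule. Since $S$ is uniform, $\Pr[S = s \mid Y = y] = \eta_s(y) / (\eta_0(y) + \eta_1(y))$ for $s \in \{0,1\}$, and the marginal is $\Pr[Y = y] = \tfrac{1}{2}(\eta_0(y) + \eta_1(y))$. Writing $t_y = \Pr[S = 0 \mid Y = y]$, we have $H(S \mid Y = y) = H_b(t_y)$ where $H_b$ is the binary entropy function, so
\[
H(S \mid Y) \;=\; \sum_y \tfrac{1}{2}(\eta_0(y) + \eta_1(y))\, H_b(t_y).
\]

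Second, I would invoke the standard inequality $H_b(t) \le 2\sqrt{t(1-t)}$, valid for all $t \in [0,1]$ (this is sharp at $t = 1/2$, and can be justified by checking that $f(t) = 2\sqrt{t(1-t)} - H_b(t)$ vanishes together with its derivative at $t = 1/2$ and is non-negative on the interval by symmetry and concavity considerations). Applying this pointwise,
\[
H(S \mid Y) \;\le\; \sum_y (\eta_0(y) + \eta_1(y)) \cdot \sqrt{t_y(1 - t_y)} \;=\; \sum_y (\eta_0(y) + \eta_1(y)) \cdot \frac{\sqrt{\eta_0(y)\eta_1(y)}}{\eta_0(y) + \eta_1(y)} \;=\; \sum_y \sqrt{\eta_0(y)\eta_1(y)}.
\]

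Finally, combining this with $I(S;Y) = 1 - H(S \mid Y)$ and the identity $h^2(\eta_0,\eta_1) = 1 - \sum_y \sqrt{\eta_0(y)\eta_1(y)}$ from Proposition \ref{prop:Hellsquare} immediately yields $I(S;Y) \ge h^2(\eta_0,\eta_1)$. The only non-routine step is the scalar inequality $H_b(t) \le 2\sqrt{t(1-t)}$, which is the main technical obstacle; once that is in hand, the rest of the argument is a direct Bayesian computation. An alternative route, which I would mention only if the scalar bound were considered undesirable, is to write $I(S;Y) = \tfrac{1}{2}\bigl(D(\eta_0 \Vert M) + D(\eta_1 \Vert M)\bigr)$ with $M = (\eta_0 + \eta_1)/2$ and control each KL term via the Hellinger distance, but that path appears to lose a constant factor and so is less suitable for proving the clean bound as stated.
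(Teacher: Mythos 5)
Your derivation has the right structure and reaches the stated bound; note that the paper itself does not prove this lemma but imports it from \cite{BJKS}, and your argument is essentially the standard one, merely organized through $I(S;Y)=1-H(S\mid Y)$ and the posterior $t_y$ rather than through the Jensen--Shannon divergence. In fact the two organizations coincide term by term: writing $I(S;Y)=\tfrac12 D(\eta_0\Vert M)+\tfrac12 D(\eta_1\Vert M)$ with $M=\tfrac12(\eta_0+\eta_1)$, the contribution of a point $y$ is $\tfrac12\bigl(\eta_0(y)+\eta_1(y)\bigr)\bigl(1-H_b(t_y)\bigr)$, so that route reduces to exactly the same scalar inequality $H_b(t)\le 2\sqrt{t(1-t)}$ and loses no constant. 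Your closing remark that the Jensen--Shannon route ``appears to lose a constant factor'' is therefore mistaken (a constant is lost only if one bounds each KL term separately by a Hellinger distance to $M$ and then recombines); done pointwise it is lossless, and it is how \cite{BJKS} argue.

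The one substantive gap is your justification of $H_b(t)\le 2\sqrt{t(1-t)}$. The inequality is true and standard, but ``vanishes together with its derivative at $t=1/2$, plus symmetry and concavity considerations'' is not a proof: both sides are concave, symmetric about $1/2$, and agree to first order at $1/2$, so these considerations cannot decide which function dominates; moreover the difference $g(t)=2\sqrt{t(1-t)}-H_b(t)$ is not monotone on $(0,1/2)$ (its derivative is already negative at $t=1/4$), so one must genuinely rule out a dip below zero. A short way to close it: $g'(t)=\frac{1-2t}{\sqrt{t(1-t)}}-\log_2\frac{1-t}{t}$, and substituting $u=\sqrt{(1-t)/t}\ge 1$ gives $g'(t)\ge 0$ iff $\phi(u)=u-u^{-1}-\tfrac{2}{\ln 2}\ln u\ge 0$; since $\phi(1)=0$ and $\phi'(u)=1+u^{-2}-\tfrac{2}{u\ln 2}$ is a quadratic in $u^{-1}$ with exactly one sign change on $u\ge 1$, $\phi$ is negative on an interval $(1,u_0)$ and positive beyond, so $g'$ changes sign exactly once on $(0,1/2)$; combined with $g(0)=g(1/2)=0$ this yields $g\ge 0$. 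With that step made rigorous (or the inequality simply cited), your proof is correct, and the Bayes computation and the appeal to Proposition~\ref{prop:Hellsquare} are exactly as they should be.
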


Another useful measure is the statistical distance.
\begin{definition}
Let $P$ and $Q$ be two distributions over a domain $\Omega$. The \textit{statistical distance} between $P$ and $Q$ is 
$$\Delta(P,Q)=\max\limits_{\Omega'\subseteq\Omega}\mid P(\Omega') - Q(\Omega')\mid~.$$
\end{definition}

Hellinger distance and statistical distance are related by the following relation.
\begin{lemma}\label{lem:HelAndStat}
Let $P$ and $Q$ be two distributions over the same domain. $h(P,Q)\ge\frac{1}{\sqrt{2}}\Delta(P,Q)$.
\end{lemma}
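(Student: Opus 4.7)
The plan is to use the standard algebraic identity $|P(\omega)-Q(\omega)| = |\sqrt{P(\omega)}-\sqrt{Q(\omega)}|\cdot|\sqrt{P(\omega)}+\sqrt{Q(\omega)}|$, combined with Cauchy--Schwarz, to relate the sum defining $\Delta(P,Q)$ to the sum defining $h(P,Q)$.

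First I would rewrite the statistical distance in the more convenient $\ell_1$ form: $\Delta(P,Q)=\tfrac{1}{2}\sum_{\omega}|P(\omega)-Q(\omega)|$ (this is the well-known equivalent form of the $\max_{\Omega'\subseteq\Omega}|P(\Omega')-Q(\Omega')|$ definition, obtained by taking $\Omega'=\{\omega:P(\omega)\ge Q(\omega)\}$). Then I would factor each term $|P(\omega)-Q(\omega)|$ as the product $|\sqrt{P(\omega)}-\sqrt{Q(\omega)}|\cdot(\sqrt{P(\omega)}+\sqrt{Q(\omega)})$ and apply Cauchy--Schwarz to the resulting sum, yielding
\[
\Delta(P,Q) \le \tfrac{1}{2}\sqrt{\textstyle\sum_\omega(\sqrt{P(\omega)}-\sqrt{Q(\omega)})^2}\cdot\sqrt{\textstyle\sum_\omega(\sqrt{P(\omega)}+\sqrt{Q(\omega)})^2}.
\]
The first factor is precisely $\tfrac{1}{\sqrt{2}}\cdot\sqrt{2}\,h(P,Q)=h(P,Q)\cdot\sqrt{2}/\sqrt{2}$; more directly, by definition the first square root equals $\sqrt{2}\,h(P,Q)$.

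Next I would bound the second factor. Expanding the square gives $\sum_\omega(\sqrt{P(\omega)}+\sqrt{Q(\omega)})^2 = \sum_\omega P(\omega)+\sum_\omega Q(\omega)+2\sum_\omega\sqrt{P(\omega)Q(\omega)}=2+2\sum_\omega\sqrt{P(\omega)Q(\omega)}$. Using Proposition~\ref{prop:Hellsquare}, $\sum_\omega\sqrt{P(\omega)Q(\omega)}=1-h^2(P,Q)\le 1$, so the second factor is at most $\sqrt{4}=2$. Substituting these two bounds gives $\Delta(P,Q)\le\tfrac{1}{2}\cdot\sqrt{2}\,h(P,Q)\cdot 2=\sqrt{2}\,h(P,Q)$, which rearranges to the claim $h(P,Q)\ge\tfrac{1}{\sqrt{2}}\Delta(P,Q)$.

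There is no real obstacle here: the entire proof is one application of Cauchy--Schwarz after the standard algebraic factorization, and the only slightly non-obvious step is the uniform bound $\sum_\omega(\sqrt{P(\omega)}+\sqrt{Q(\omega)})^2\le 4$, which follows immediately once one recalls that the Hellinger affinity $\sum_\omega\sqrt{P(\omega)Q(\omega)}$ is bounded above by $1$ (either from $h^2\ge 0$ via Proposition~\ref{prop:Hellsquare} or directly from Cauchy--Schwarz on $\sqrt{P},\sqrt{Q}$).
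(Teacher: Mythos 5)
Your proof is correct: the factorization $|P(\omega)-Q(\omega)|=|\sqrt{P(\omega)}-\sqrt{Q(\omega)}|\,(\sqrt{P(\omega)}+\sqrt{Q(\omega)})$, Cauchy--Schwarz, the identification of the first factor as $\sqrt{2}\,h(P,Q)$ under the paper's normalization, and the bound $\sum_\omega(\sqrt{P(\omega)}+\sqrt{Q(\omega)})^2\le 4$ together give $\Delta(P,Q)\le\sqrt{2}\,h(P,Q)$, which is the claim (the garbled clause about ``$\tfrac{1}{\sqrt{2}}\cdot\sqrt{2}\,h(P,Q)$'' should just be deleted in favor of your corrected statement that the first square root equals $\sqrt{2}\,h(P,Q)$). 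Note that the paper itself gives no proof of this lemma --- it is stated as standard background on Hellinger versus statistical distance --- and your argument is precisely the standard one, so there is nothing to reconcile; if you kept the factor $\sqrt{2+2\sum_\omega\sqrt{P(\omega)Q(\omega)}}=\sqrt{2}\sqrt{2-h^2(P,Q)}$ instead of bounding it by $2$, you would even recover the sharper inequality $\Delta(P,Q)\le h(P,Q)\sqrt{2-h^2(P,Q)}$, which the paper does not need.
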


\section{A technical lemma}

\begin{claim}
\label{cl:entropy_of_calculation}
Let $\pi$ be a  protocol,  let $i$ be a given player, and let $0 \leq \epsilon  \leq \frac{1}{2}$ be fixed.
 If, when running $\pi$, player $i$ $\epsilon$-computes a boolean function $f$, 
then $H(f(X) \mid X_i R_i R^p \Pi_i) \le h(\epsilon)$, where $h$ is the binary entropy function.
\end{claim}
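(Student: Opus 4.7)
The plan is to exploit the fact that, by the definition of the model, the value returned by player $i$ is a deterministic function of its view, and then apply a binary Fano-type inequality.

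First I would argue that there exists a deterministic function $g$ such that the output $Y_i$ of player $i$ satisfies $Y_i = g(X_i, R_i, R^p, \Pi_i)$. This is because player $i$'s local program is entirely specified by the functions $S_i^{\ell,s}, m_{i,j}^\ell, O_i^\ell, S_i^{\ell,r}$, each of which depends only on the view $D_i^{\ell-1}$, and the view at any round is determined by $X_i$, $R_i$, $R^p$ together with the sequence of messages read so far. Given $X_i, R_i, R^p$ and the (ordered) incoming basic transcripts $\Pi_{i,j}^r$, one can simulate player $i$'s local program round by round; in particular the return value written by $O_i^\ell$ is recovered.

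Next, using the fact that $Y_i$ is a function of $(X_i, R_i, R^p, \Pi_i)$, I would write
\[
H(f(X) \mid X_i R_i R^p \Pi_i) = H(f(X) \mid X_i R_i R^p \Pi_i Y_i) \le H(f(X) \mid Y_i),
\]
where the equality holds because adding a deterministic function of the conditioning variables does not change entropy, and the inequality is Proposition~\ref{prop:Hcondi} (conditioning does not increase entropy).

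Finally I would apply the binary Fano inequality. Since $f$ is boolean and $\Pr[Y_i \ne f(X)] \le \epsilon \le \tfrac{1}{2}$, set $E = f(X) \oplus Y_i$, which is a binary random variable with $\Pr[E=1] \le \epsilon$. Then
\[
H(f(X)\mid Y_i) \le H(f(X) \oplus Y_i \mid Y_i) \le H(E) = h(\Pr[E=1]) \le h(\epsilon),
\]
using that $h$ is increasing on $[0,\tfrac{1}{2}]$. Combining with the previous display yields $H(f(X) \mid X_i R_i R^p \Pi_i) \le h(\epsilon)$.

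There is no substantial obstacle here; the only subtle point is the first step, namely that the output is genuinely a function of $(X_i, R_i, R^p, \Pi_i)$, which is why it is worth being explicit about how the local program is simulated from this data.
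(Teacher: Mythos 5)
Your proposal is correct and follows essentially the same route as the paper's proof: both exploit that player $i$'s output is a deterministic function of $(X_i,R_i,R^p,\Pi_i)$, condition on that output, and then bound the remaining entropy by the entropy of the binary error indicator (your $E=f(X)\oplus Y_i$ is the paper's indicator of the event $f(X)\neq P$), using monotonicity of $h$ on $[0,\tfrac12]$. The only cosmetic difference is that the paper spells out the averaging over inputs showing $\Pr[\text{error}]\le\epsilon$, which you assert directly.
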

\begin{proof}
Let $\theta$ be the (deterministic) function that takes as parameter $(x_i, r_i, r^p, \pi_i)$ and returns the output of player $i$.
Define the random variable $P=\theta(X_i, R^p, R_i,\Pi_i)$, and  
the random variable $M=1 - \delta_{f(X),P}$, i.e., the indicator variable of the event $f(X) \neq P$. Observe that

\begin{align*}
\Pr(M=1) &= \Exp[M]\\
&= \sum\limits_x \Pr(X=x) \Exp[M \mid X=x]\\
&=\sum\limits_x \Pr(X=x) \Pr(M=1 \mid X=x)\\
&\le\sum\limits_x \Pr(X=x) \cdot  \epsilon \text{~~~(since player $i$ $\epsilon$-computes $f$)}\\
&\le \epsilon~.
\end{align*}

Thus we have
\begin{align*}
H(f(X) \mid X_i R_i R^p \Pi_i) &\le H(f(X) \mid P)\text{~~(data processing inequality)}\\
&= H(M \mid P) \text{~~~(since, given $P$, there is a bijection between $f(X)$ and $M$)}\\
&\le H(M)\\
&= h(\Pr(M=1)) \text{~~~(since M is binary)}\\
&\le h(\epsilon) \text{~~~(since $h$ is increasing in $[0,1/2]$)}.
\end{align*}
\end{proof}

\section{Some of the proofs}

This section contains the proofs  that were deferred to the appendix.

\begin{proof}[Proof of Lemma~\ref{lem:rectang}]
We prove the claim for an arbitrary player $i\in\Ni{1}{k}$.
To prove the statement of the lemma  define, 
for $x'_i\in\calX_i$, $$q_i(x'_i,\overline{\tau})=\Pr[(x'_i,R_i)\in\calI_i(\overline{\tau})]~,$$
and for $x'_{-i}\in\calX_{-i}$, $$q_{-i}(x'_{-i},\overline{\tau})=\Pr[(x'_{-i},R_{-i})\in\calJ_i(\overline{\tau})]~.$$
We have, for $x\in\calX$, 
\begin{align*}
\Pr[\Pi_i(x)=\overline{\tau}] &= \Pr[(x,R)\in\calA_i(\overline{\tau})]\\
&=\Pr[(x_i,R_i)\in\calI_i(\overline{\tau})~\&~(x_{-i},R_{-i})\in\calJ_i(\overline{\tau})]
~~~~~\mbox{by Lemma~\ref{le:det_rectangularity}}\\
&=\Pr[(x_i,R_i)\in\calI_i(\overline{\tau})] \times \Pr[(x_{-i},R_{-i})\in\calJ_i(\overline{\tau})]\\
&=q_i(x_i,\overline{\tau})q_{-i}(x_{-i},\overline{\tau})~.
\end{align*}

We now prove the second claim.
Define, for $x'_{-i},\in\calX_{-i}$, $$p_{-i}(x'_{-i},\tau)=\Pr[(x'_{-i},R_{-i})\in\calH_i(\tau)]~.$$

We have
\begin{align*}
\Pr[\Pi(x)=\tau]&=\Pr[(x,R)\in\calB(\tau))]\\
&=\Pr[(x_i,R_i)\in\calI_i(\tau_i) ~\&~ (x_{-i},R_{-i})\in\calH_i(\tau))]
~~~~~\mbox{by Lemma~\ref{le:det_rectangularity}}\\
&=\Pr[(x_i,R_i)\in\calI_i(\tau_i)] \times \Pr[(x_{-i},R_{-i})\in\calH_i(\tau))]\\
&=q_i(x_i,\tau_i)p_{-i}(x_{-i},\tau)~.
\end{align*}
\end{proof}

\begin{proof}[Proof of Lemma~\ref{lem:HelErr}]\begin{sloppypar}
By Lemma~\ref{lem:HelAndStat}, we only need to show that ${\Delta(\Pi(x),\Pi(y))\ge1-2\epsilon}$. 
By definition, there exists a function $\theta$ taking as input the possible transcripts of ${\pi}$ and verifying ${\forall~x \in \calX, \Pr[\theta(\Pi(x)) = f(x)] \ge 1 - \epsilon}$.\\
Let ${\Omega'=\theta^{-1}(f(x))}$.
We  have
${\Pr[\Pi(x)\in\Omega'] = \Pr[\theta(\Pi(x))=f(x)] \ge 1 - \epsilon}$ and\end{sloppypar}
\begin{align*}
\Pr[\Pi(y)\in\Omega'] &= \Pr[\Pi(y)\in\theta^{-1}(f(x))]\\
&\le \Pr[\Pi(y)\not\in\theta^{-1}(f(y))]~~~~~\text{since $\theta^{-1}(f(x))\cap\theta^{-1}(f(y))=\varnothing$}\\
&\le 1-\Pr[\Pi(y)\in\theta^{-1}(f(y))]\\
&\le 1-(1-\epsilon) = \epsilon.
\end{align*}

Thus
\begin{align*}
\Delta(\Pi(x),\Pi(y)) &\ge \Pr[\Pi(x)\in\Omega']-\Pr[\Pi(y)\in\Omega']\\
&\ge (1-\epsilon) - \epsilon = 1-2\epsilon.
\end{align*}
\end{proof}

\begin{proof}[Proof of Lemma~\ref{lem:diag}]
Let ${\cal T}$ be the set of all possible transcripts of $\pi$. In what follows we simplify notation and
 write $\sum\limits_{\tau}$ instead of $\sum\limits_{\tau \in{\cal T}}$.
Using Proposition~\ref{prop:Hellsquare},
\begin{align*}
1-h^2(\Pi(x),\Pi(y))&=\sum\limits_{\tau}\sqrt{\Pr[\Pi(x)=\tau]\Pr[\Pi(y)=\tau]}\\
&=\sum\limits_{\tau}\sqrt{q_i(x_i,\tau_i)p_{-i}(x_{-i},\tau)q_i(y_i,\tau_i)p_{-i}(y_{-i},\tau)}\text{~~~~By  
Lemma~\ref{lem:rectang}}\\
&=\sum\limits_{\tau}\sqrt{q_i(x_i,\tau_i)q_i(y_i,\tau_i)}\sqrt{p_{-i}(x_{-i},\tau)p_{-i}(y_{-i},\tau)}\\
&\le\sum\limits_{\tau}\frac{q_i(x_i,\tau_i)+q_i(y_i,\tau_i)}{2}\sqrt{p_{-i}(x_{-i},\tau)p_{-i}(y_{-i},\tau)}\\
&\le\frac{1}{2}\left(\sum\limits_{\tau}\sqrt{q_i(x_i,\tau_i)p_{-i}(x_{-i},\tau)q_i(x_i,\tau_i)p_{-i}(y_{-i},\tau)}\right.\\
&\,+\left.\sum\limits_{\tau} \sqrt{q_i(y_i,\tau_i)p_{-i}(x_{-i},\tau)q_i(y_i,\tau_i)p_{-i}(y_{-i},\tau)}\right)\\
&\le\frac{1}{2}\left(\sum\limits_{\tau}\sqrt{\Pr[\Pi(x)=\tau]\Pr[\Pi(y_{[i\leftarrow x_i]})=\tau]}\right.\\
&~~~~~\,+\left.\sum\limits_{\tau} \sqrt{\Pr[\Pi(x_{[i\leftarrow y_i]})=\tau]\Pr[\Pi(y)=\tau]}\right)\\
&\le\frac{1}{2}\left[1-h^2(\Pi(x),\Pi(y_{[i\leftarrow x_i]}))+1-h^2(\Pi(x_{[i\leftarrow y_i]}),\Pi(y))\right]\\
&\le 1-\frac{1}{2}\left[h^2(\Pi(x),\Pi(y_{[i\leftarrow x_i]}))+h^2(\Pi(x_{[i\leftarrow y_i]}),\Pi(y))\right].
\end{align*}
\end{proof}

\begin{proof}[Proof of Lemma~\ref{lem:rectangspe}]
\adifuture{go over correctness once more}
We write $\Pr[\Pi_i=\tau_i\mid X_i=x', M=m, Z=z]$ as 
$$\sum\limits_{x\in\{0,1\}^k}(\Pr[X=x\mid X_i=x', M=m, Z=z]\times \Pr[\Pi_i=\tau_i\mid X=x, X_i=x', M=m, Z=z])~.$$

Note that $$\Pr[X=x\mid X_i=x', M=m, Z=z] = \delta_{x_i,x'}\Pr[X_{-i}=x_{-i}\mid M=m, Z=z]~,$$ 
since, conditioned on $M=m$ and $Z=z$,  $X_i$ and $X_{-i}$ are independent. Further note that for $x$ such that $x_i=x'$,
$$\Pr[\Pi_i=\tau_i\mid X=x, X_i=x', M=m, Z=z] = \Pr[\Pi_i(x)=\tau_i]~.$$

By Lemma~\ref{lem:rectang}, there exist functions $q_i$ and $q_{-i}$ such that 
$$\forall~x\in\{0,1\}^k,~\Pr[\Pi_i(x)=\tau_i]=q_i(x_i,\tau_i)q_{-i}(x_{-i},\tau_i)~.$$
Therefore we can write
\begin{align*}
\Pr[\Pi_i=\tau_i\mid X_i=x, M=m, Z=z] &= 
\sum\limits_{x\in\{0,1\}^k} (\delta_{x_i,x'}q_i(x_i,\tau_i)q_{-i}(x_{-i},\tau_i)\times
\Pr[X_{-i}=x_{-i}\mid M=m, Z=z])\\
&~~\\
&=q_i(x',\tau_i)\sum\limits_{\hat{x}\in\{0,1\}^{k-1}}(q_{-i}(\hat{x},\tau_i)\times
\Pr[X_{-i}=\hat{x}\mid M=m, Z=z])\\
&~~\\
&=q_i(x',\tau_i)c_i(m,z,\tau_i)~,
\end{align*}
where $c_i(m,z,\tau_i)=\sum\limits_{\hat{x}\in\{0,1\}^{k-1}}q_{-i}(\hat{x},\tau_i)\Pr[X_{-i}=\hat{x}\mid M=m, Z=z]$.\\

The proof of the second statement is similar:

\begin{align*}
&\Pr[\Pi=\tau\mid X_i=x, M=m, Z=z]=  \\
&\sum\limits_{x\in\{0,1\}^k} (
\Pr[X=x\mid X_i=x', M=m, Z=z]~\times~
\Pr[\Pi=\tau\mid X=x, X_i=x', M=m, Z=z]).
\end{align*}
Note that 
$\Pr[X=x\mid X_i=x', M=m, Z=z] = \delta_{x_i,x'}\Pr[X_{-i}=x_{-i}\mid M=m, Z=z]$, 
since, conditioned on $M=m$ and $Z=z$, $X_i$ and $X_{-i}$ are independent. Further note that for $x$ such that $x_i=x'$,
$$\Pr[\Pi=\tau\mid X=x, X_i=x', M=m, Z=z] =\Pr[\Pi(x)=\tau]~.$$

By Lemma~\ref{lem:rectang}, there exist functions $q_i$ and $p_{-i}$ such that 
$$\forall~x\in\{0,1\}^k,~\Pr[\Pi(x)=\tau]=q_i(x_i,\tau_i)p_{-i}(x_{-i},\tau)~.$$

Therefore we can write
\begin{align*}
\Pr[\Pi=\tau\mid X_i=x', M=m, Z=z] &= 
\sum\limits_{x\in\{0,1\}^k}\left(\delta_{x_i,x'}q_i(x_i,\tau_i)p_{-i}(x_{-i},\tau)\times
\Pr[X_{-i}=x_{-i}\mid M=m, Z=z]\right)\\
&~~\\
&=q_i(x',\tau_i)\sum\limits_{\hat{x}\in\{0,1\}^{k-1}}\left(p_{-i}(\hat{x},\tau)\times
\Pr[X_{-i}=\hat{x}\mid M=m, Z=z]\right)\\
&~~\\
&=q_i(x',\tau_i)c(m,z,\tau)~,
\end{align*}

where 
$c(m,z,\tau)=\sum\limits_{\hat{x}\in\{0,1\}^{k-1}}p_{-i}(\hat{x},\tau)
\Pr[X_{-i}=\hat{x}\mid M=m, Z=z]$.

\end{proof}

\begin{proof}[Proof of Lemma~\ref{lem:diagspe}]
\adifuture{go over correctness once more}
Using Lemma~\ref{lem:rectangspe}, we write \\
$\Pr[\Pi_i[0,0,j]=\overline{\tau}] = q_i(0,\overline{\tau})c_i(0,j,\overline{\tau})$ and
$\Pr[\Pi_i[1,1,j]=\overline{\tau}] = q_i(1,\overline{\tau})c_i(1,j,\overline{\tau})$.\\

Using Lemma~\ref{lem:rectang}, we write\\
$\Pr[\Pi_i(\overline{e}^k_{i,j})=\overline{\tau}]=q_i(0,\overline{\tau})q_{-i}(\overline{e}^{k-1}_j,\overline{\tau})$ and 
$\Pr[\Pi_i(\overline{e}^k_j)=\overline{\tau}]=q_i(1,\overline{\tau})q_{-i}(\overline{e}^{k-1}_j,\overline{\tau})$.\\

Note that $\Pi_i[1,1,j]=\Pi_i(\overline{e}^k_j)$, and thus 
$q_i(1,\overline{\tau}) \neq 0 \Rightarrow c_i(1,j,\overline{\tau}) = q_{-i}(\overline{e}^{k-1}_j,\overline{\tau})$.

By Proposition~\ref{prop:Hellsquare},
\begin{align*}
1-h^2(\Pi_i[0,0,j],\Pi_i[1,1,j])&=
\sum\limits_{\overline{\tau}}\sqrt{\Pr[\Pi_i[0,0,j]=\overline{\tau}]\Pr[\Pi_i[1,1,j]=\overline{\tau}]}\\
&=\sum\limits_{\overline{\tau}}\sqrt{q_i(0,\overline{\tau})c_i(0,j,\overline{\tau})q_i(1,\overline{\tau})c_i(1,j,\overline{\tau})}\\
&\le\sum\limits_{\overline{\tau}}\sqrt{q_i(0,\overline{\tau})q_i(1,\overline{\tau})}
\left(\frac{c_i(0,j,\overline{\tau})+c_i(1,j,\overline{\tau})}{2}\right)\\
&\le\frac{1}{2}\left(\sum\limits_{\overline{\tau}}\sqrt{q_i(0,\overline{\tau})c_i(0,j,\overline{\tau})q_i(1,\overline{\tau})c_i(0,j,\overline{\tau})}\right.~+\\
&~~~~~~~~~\left.\sum\limits_{\overline{\tau}}\sqrt{
q_i(0,\overline{\tau})c_i(1,j,\overline{\tau})q_i(1,\overline{\tau})c_i(1,j,\overline{\tau})}\right)\\
&\le\frac{1}{2}\left(\sum\limits_{\overline{\tau}}\sqrt{q_i(0,\overline{\tau})c_i(0,j,\overline{\tau})q_i(1,\overline{\tau})c_i(0,j,\overline{\tau})}\right.~+\\
&~~~~~~~~~\left.\sum\limits_{\overline{\tau}\mid q_i(1,\overline{\tau})\neq 0}\sqrt{
q_i(0,\overline{\tau})c_i(1,j,\overline{\tau})q_i(1,\overline{\tau})c_i(1,j,\overline{\tau})}\right)\\
&\le\frac{1}{2}\left(\sum\limits_{\overline{\tau}}\sqrt{q_i(0,\overline{\tau})c_i(0,j,\overline{\tau})q_i(1,\overline{\tau})c_i(0,j,\overline{\tau})}\right.~+\\
&~~~~~~~~~\left.\sum\limits_{\overline{\tau}\mid q_i(1,\overline{\tau})\neq 0}\sqrt{
q_i(0,\overline{\tau})q_{-i}(\overline{e}^{k-1}_j,\overline{\tau})q_i(1,\overline{\tau})q_{-i}(\overline{e}^{k-1}_j,\overline{\tau})}\right)\\
&\le\frac{1}{2}\left(\sum\limits_{\overline{\tau}}\sqrt{q_i(0,\overline{\tau})c_i(0,j,\overline{\tau})q_i(1,\overline{\tau})c_i(0,j,\overline{\tau})}\right.~+\\
&~~~~~~~~~\left.\sum\limits_{\overline{\tau}}\sqrt{
q_i(0,\overline{\tau})q_{-i}(\overline{e}^{k-1}_j,\overline{\tau})q_i(1,\overline{\tau})q_{-i}(\overline{e}^{k-1}_j,\overline{\tau})}\right)\\
&\le\frac{1}{2}\left(\sum\limits_{\overline{\tau}}\sqrt{\Pr[\Pi_i[0,0,j]=\overline{\tau}]\Pr[\Pi_i[1,0,j]=\overline{\tau}]}\right.~+\\
&~~~~~~~~~\left.\sum\limits_{\overline{\tau}}\sqrt{\Pr[\Pi_i(\overline{e}^k_{i,j})=\overline{\tau}]\Pr[\Pi_i(\overline{e}^k_j)=\overline{\tau}]}\right)\\
&\le\frac{1}{2}(1-h^2(\Pi_i[0,0,j],\Pi_i[1,0,j]) + 1-h^2(\Pi_i(\overline{e}^k_{i,j}),\Pi_i(\overline{e}^k_j)))\\
&\le 1-h^2(\Pi_i(\overline{e}^k_{i,j}),(\Pi_i(\overline{e}^k_j))~.
\end{align*}
\end{proof}

\begin{proof}[Proof of Lemma~\ref{lem:localspe}]
\adifuture{go over correctness once more}
Using Lemma~\ref{lem:rectang}, we write
$$\Pr[\Pi_{i}(\overline{e}^k_{i,j})=\overline{\tau}]= q_i(0,\overline{\tau})q_{-i}(\overline{e}^{k-1}_j,\overline{\tau})$$ and
$$\Pr[\Pi(\overline{e}^k_{i,j})=\tau]= q_i(0,\tau_i)p_{-i}(\overline{e}^{k-1}_j,\tau)~.$$ As
$\Pr[\Pi_{i}(\overline{e}^k_{i,j})=\overline{\tau}] = 
\sum\limits_{\tau\mid\tau_i=\overline{\tau}} \Pr[\Pi(\overline{e}^k_{i,j})=\tau]$ we have \\
$$q_i(0,\overline{\tau})q_{-i}(\overline{e}^{k-1}_j,\overline{\tau}) = 
\sum\limits_{\tau\mid\tau_i=\overline{\tau}}q_i(0,\tau_i)p_{-i}(\overline{e}^{k-1}_j,\tau)
=q_i(0,\overline{\tau})\sum\limits_{\tau\mid\tau_i=\overline{\tau}}p_{-i}(\overline{e}^{k-1}_j,\tau)~,$$ and thus 
$$q_i(0,\overline{\tau})\neq 0 \Rightarrow 
q_{-i}(\overline{e}^{k-1}_j,\overline{\tau}) = \sum\limits_{\tau\mid\tau_i=\overline{\tau}}p_{-i}(\overline{e}^{k-1}_j,\tau)~.$$

Using Proposition~\ref{prop:Hellsquare}, we can write
\begin{align*}
1-h^2(\Pi_{i}(\overline{e}^k_{i,j}),\Pi_{i}(\overline{e}^k_{j}))&=
\sum\limits_{\overline{\tau}}\sqrt{\Pr[\Pi_{i}(\overline{e}^k_{i,j})=\overline{\tau}]\Pr[\Pi_{i}(\overline{e}^k_{j})=\overline{\tau}]}\\
&=\sum\limits_{\overline{\tau}}\sqrt{
q_i(0,\overline{\tau})q_{-i}(\overline{e}^{k-1}_j,\overline{\tau})q_i(1,\overline{\tau})q_{-i}(\overline{e}^{k-1}_j,\overline{\tau})}\\
&=\sum\limits_{\overline{\tau}}\sqrt{
q_i(0,\overline{\tau})q_i(1,\overline{\tau})}q_{-i}(\overline{e}^{k-1}_j,\overline{\tau})\\
&=\sum\limits_{\overline{\tau}\mid 	q_i(0,\overline{\tau})\neq 0}\sqrt{
q_i(0,\overline{\tau})q_i(1,\overline{\tau})}q_{-i}(\overline{e}^{k-1}_j,\overline{\tau})\\
&=\sum\limits_{\overline{\tau}\mid 	q_i(0,\overline{\tau})\neq 0}\left(\sqrt{
q_i(0,\overline{\tau})q_i(1,\overline{\tau})}\sum\limits_{\tau\mid\tau_i=\overline{\tau}}p_{-i}(\overline{e}^{k-1}_j,\tau)\right)\\
&=\sum\limits_{\overline{\tau}}\left(\sqrt{
q_i(0,\overline{\tau})q_i(1,\overline{\tau})}\sum\limits_{\tau\mid\tau_i=\overline{\tau}}p_{-i}(\overline{e}^{k-1}_j,\tau)\right)\\
&=\sum\limits_{\tau}\left(\sqrt{
q_i(0,\tau_i)q_i(1,\tau_i)}p_{-i}(\overline{e}^{k-1}_j,\tau)\right)\\
&=\sum\limits_{\tau}\sqrt{\Pr[\Pi(\overline{e}^k_{i,j})=\tau]\Pr[\Pi(\overline{e}^k_{j})=\tau]}\\
&=1-h^2(\Pi(\overline{e}^k_{i,j}),\Pi(\overline{e}^k_{j}))~.
\end{align*}
\end{proof}

\begin{proof}[Proof of Lemma~\ref{le:ic_hat}]

For the purpose of the proof we define a certain  order on the messages in $\Pi_i$, i.e., on all messages sent and received by player $i$ as follows.
We order the messages of $\Pi_i$ by (local) rounds of player $i$, and inside each round have first the messages sent by 
player $i$, ordered by the index of the recipient, and have then the messages received by player $i$, ordered by the 
index of the sender.  For a given player $i$, we denote the sequence thus defined as $(B^d)_{d\geq0}$.

Now, by the chain rule, applied on the messages of $\Pi_i$ by the order we just defined, and after rearranging the summands, we have 

$$I(X_i ; \Pi_i \mid R^p M Z) = \sum\limits_{\ell} I(X_i ; \M{i}{\ell}{s} \mid \T{i}{\ell}{s} R^p M Z) +  
\sum\limits_{\ell} I(X_i ; \M{i}{\ell}{r} \mid \T{i}{\ell}{r} R^p M Z)~.$$

\noindent We now show that every summand of the second sum equals $0$.\\
To this end, we  now show by induction on the index $d$ that  $\forall d, ~I(X_i;X_{-i}\mid M Z R^p B^0\dots B^d)=0$.
We have {$I(X_i;X_{-i}\mid M Z R^p)=0$}, because according to $\mu^n$, conditioned on $MZ$, $X_i$ and $X_{-i}$
 are independent.
 Assume now the induction hypothesis  that for some $d$, ${I(X_i;X_{-i}\mid M Z R^p B^0\dots B^d)=0}$. 
 If the message $B^{d+1}$ is {\em sent} by player $i$, then $B^{d+1}$ is 
 a function of $X_i$, $R^p$ and $B^0\dots B^{d}$ and thus
\begin{align*}
I(X_i;X_{-i}\mid M Z R^p B^0\dots B^{d+1}) &= 
H(X_{-i}\mid M Z R^p B^0\dots B^{d+1})~- \\
&~~~~~H(X_{-i}\mid M Z R^p B^0\dots B^{d+1} X_i)\\
&\le H(X_{-i}\mid M Z R^p B^0\dots B^{d})~- \\
&~~~~~H(X_{-i}\mid M Z R^p B^0\dots B^{d} X_i)\\
&= I(X_i;X_{-i}\mid M Z R^p B^0\dots B^{d})\\
&=0~.
\end{align*}
Similarly, if the message $B^{d+1}$ is {\em received} by player $i$, then $B^{d+1}$ is a
 function of $X_{-i}$, $R^p$ and $B^0\dots B^{d}$ and thus
\begin{align*}
I(X_i;X_{-i}\mid M Z R^p B^0\dots B^{d+1}) &= 
H(X_i\mid M Z R^p B^0\dots B^{d+1})~-\\
&~~~~~H(X_i\mid M Z R^p B^0\dots B^{d+1} X_{-i})\\
&\le H(X_i\mid M Z R^p B^0\dots B^{d})~-\\
&~~~~~H(X_i\mid M Z R^p B^0\dots B^{d} X_{-i})\\
&= I(X_i;X_{-i}\mid M Z R^p B^0\dots B^{d} )\\
&=0~.
\end{align*}
Thus we have that 
\begin{equation}
\forall d, ~~I(X_i;X_{-i}\mid M Z R^p B^0\dots B^d)=0~.
\label{eq:Delta}
\end{equation}
From Eq.~\eqref{eq:Delta},
by choosing the relevant $d$ for any given $\ell$, we  can also write  for all $\ell$  \\
$I(X_i ; X_{-i} \mid \T{i}{\ell}{r} R^p M Z)=0$. 
Applying Lemma \ref{lem:tec1} with $A=X_i$, $B=X_{-i}$, $C=(\T{i}{\ell}{r},R^p)$, $D=(M,Z)$ and 
$\phi=\M{i}{\ell}{r}=\varphi(\T{i}{\ell}{r},R^p,X_{-i})$ yields $I(X_i ; \M{i}{\ell}{r} \mid \T{i}{\ell}{r} R^p M Z)=0$. We have
 thus shown that 
$$I(X_i ; \Pi_i \mid R^p M Z) = \sum\limits_\ell I(X_i ; \M{i}{\ell}{s} \mid \T{i}{\ell}{s} R^p M Z)~,$$
and thus 
\begin{equation}
\label{eq:to_relate_to_hat}
\sum\limits_{i=1}^k I(X_i ; \Pi_i \mid R^p M Z)  = \sum\limits_i \sum\limits_\ell I(X_i ; \M{i}{\ell}{s} \mid \T{i}{\ell}{s} R^p M Z)~.
\end{equation}

We note that the equation $I(X_i ; \Pi_i \mid R^p M Z) = \sum\limits_\ell I(X_i ; \M{i}{\ell}{s} \mid \T{i}{\ell}{s} R^p M Z)$ that we
 proved above formalizes the intuitive  assertion that if we consider the messages of $\Pi_i$ in their order of appearance, 
 then additional information on $X_i$ is obtained only from messages {\em sent by}  player $i$, but not from messages
 {\em received by} player $i$.

We now relate  $\widehat{\IC}$  to the right-hand-side of Eq.~\eqref{eq:to_relate_to_hat}.
Starting from the definition of  $\widehat{\IC}$ and using the chain rule, 
we  decompose $\widehat{\IC}$ into a sum over all messages received in the protocol:  
\begin{align*}
\widehat{\IC}(\pi) &= 
 \sum\limits_{j=1}^{k}\sum\limits_{\ell' \geq 0} I(X_{-j} ; \M{j}{\ell'}{r} \mid \M{j}{0}{r}\ldots \M{j}{\ell'-1}{r}X_{j} R^p M Z) \\
&= \sum\limits_{j=1}^{k}\sum\limits_{\ell' \geq 0} I(X_{-j} ; \M{j}{\ell'}{r} \mid \T{j}{\ell'}{r} X_{j} R^p M Z)~,
\end{align*}
where the second equality follows from the fact that the messages in $\T{j}{\ell'}{r}$ which are {\em sent} by player 
$j$ are  a function of $X_j$, $R^p$ and of the messages in $\T{j}{\ell'}{r}$ which are {\em received}  by player  $j$.

We now rearrange this sum by considering the messages from the point of view of the sender rather than the receiver.
In what follows we use $j$ as a shorthand of $j(i,\ell)$ and $\ell'$ as a shorthand of $\ell'(i,\ell)$.\footnote{Recall
that $j(i,\ell)$ and $\ell'(i,\ell)$ are defined such that message $\M{i}{\ell}{s}$ is identified with the message $\M{j(i,\ell)}{\ell'(i,\ell)}{r}$.}
We have
$$\widehat{\IC}(\pi)=\sum\limits_{i=1}^k\sum\limits_{\ell \geq 0} I(X_{-{j}} ; \M{i}{\ell}{s} \mid \T{j}{\ell'}{r} X_{j} R^p M Z)~.$$

To conclude the proof our objective  now is to show that for any message $\M{i}{\ell}{s}$, 
\begin{equation}
\label{eq:to_prove}
I(X_i ; \M{i}{\ell}{s} \mid \T{i}{\ell}{s} R^p M Z) \le I(X_{-{j}} ; \M{i}{\ell}{s} \mid \T{j}{\ell'}{r} X_{j} R^p M Z)~.
\end{equation}
Observe  that since $\M{i}{\ell}{s}$ is determined by  $X_iR^p\T{i}{\ell}{s}$,  we have  
$H(\M{i}{\ell}{s} \mid X_i \T{i}{\ell}{s}  R^p M Z) = 0$, and thus\\
${I(X_i ; \M{i}{\ell}{s} \mid \T{i}{\ell}{s}  R^p M Z) = H(\M{i}{\ell}{s} \mid \T{i}{\ell}{s}  R^p M Z)}$.
Similarly, we have that \\
$I(X_{-{j}} ;\M{i}{\ell}{s} \mid \T{j}{\ell'}{r} X_{j} R^p M Z) = H(\M{i}{\ell}{s} \mid \T{j}{\ell'}{r} X_{j}  R^p M Z)$.
Thus,
\begin{align*}
I(X_i ; \M{i}{\ell}{s} \mid \T{i}{\ell}{s} R^p M Z) &\le I(X_{-{j}} ; \M{i}{\ell}{s} \mid \T{j}{\ell'}{r} X_{j} R^p M Z)\\
&\Updownarrow\\
H(\M{i}{\ell}{s} \mid \T{i}{\ell}{s} R^p M Z) &\le H(\M{i}{\ell}{s} \mid \T{j}{\ell'}{r} X_{j} R^p M Z)\\
&\Updownarrow\\
I(\M{i}{\ell}{s} ; \T{i}{\ell}{s} R^p M Z) &\ge I(\M{i}{\ell}{s} ; \T{j}{\ell'}{r} X_{j} R^p M Z)~.
\end{align*}

The last inequality clearly holds if $I(\M{i}{l}{s} ; \T{i}{\ell}{s} R^p M Z) = I(\M{i}{\ell}{s} ; \T{i}{\ell}{s} \T{j}{\ell'}{r} X_{j} R^p M Z)$, 
 which itself holds if  
\begin{equation}
\label{eqn:mess}
I(\M{i}{\ell}{s} ; \T{j}{\ell'}{r} X_{j} \mid \T{i}{\ell}{s} R^p M Z) = 0~.
\end{equation}

Notice that given the value of $\T{i}{l}{s} R^p M Z$, $\M{i}{\ell}{s}$ is determined by $X_i$, and thus by the data processing inequality 
(Proposition~\ref{prop:dataproc}) we have
$$I(X_i ; \T{j}{\ell'}{r} X_{j} \mid \T{i}{\ell}{s} R^p M Z) \ge I(\M{i}{\ell}{s} ; \T{j}{\ell'}{r} X_{j} \mid \T{i}{\ell}{s} R^p M Z)~,$$
and Eq.~\ref{eqn:mess} holds if  $I(X_i ; \T{j}{\ell'}{r} X_{j} \mid \T{i}{\ell}{s} R^p M Z) = 0$.

 Let $t$ be the (global) round in which 
player $j$ receives $\M{j}{\ell'}{r}$, which is also the (global) round in which player $i$ sends $\M{i}{\ell}{s}$ (recall that 
 in fact  $\M{j}{\ell'}{r}$ is the same message as $\M{i}{\ell}{s}$).
Now,  all the messages in 
$\T{j}{\ell'}{r}$ are received or sent by player $j$ no later than  (global) round $t$,
 and all the messages sent by player $i$ which are not in $\T{i}{\ell}{s}$ are sent by player $i$ no earlier than (global) round $t$.
 Hence,  $\T{j}{\ell'}{r}$ is a function of $(X_{-i},\T{i}{\ell}{s})$. We also have trivially that $X_j$ is a function of $X_{-i}$.
 Thus, 
the data processing inequality implies that 
$$I(X_i ; \T{j}{\ell'}{r} X_{j} \mid \T{i}{\ell}{s} R^p M Z) \le I(X_i ; X_{-i} \T{i}{\ell}{s} \mid \T{i}{\ell}{s} R^p M Z) =
 I(X_i ; X_{-i} \mid \T{i}{\ell}{s} R^p M Z)~.$$ 
By Eq.~\eqref{eq:Delta}, 
 $I(X_i ; X_{-i} \mid \T{i}{\ell}{s} R^p M Z) = 0$, which  concludes the proof of Eq.~\eqref{eqn:mess}
 and therefore  of Eq.~\eqref{eq:to_prove} and of the lemma.
\end{proof}

\LongVersionEnd

\end{document}